\newcommand\cp{\color{purple}}
\renewcommand{\ALG@name}{Algorithm}
\pgfplotsset{width=10cm}
\tikzset{declare function={gamma(\x)=sqrt(2*pi)*\x^(\x-0.5)*exp(-\x)*exp(1/(12*\x));}}
\tikzset{declare function={tpdf(\x,\nu)=gamma(0.5*(\nu+1))/(sqrt(pi*\nu)*gamma(\nu/2))*(1+\x^2/\nu)^(-(\nu+1)/2);}}
\tikzset{declare function={invgampdf(\x,\a,\b)=(\b/\x)^\a/\x/gamma(\a)*exp(-\b/\x);}}
\newcommand{\nhphantom}[1]{\ifmmode\settowidth{\dimen0}{$#1$}\else\settowidth{\dimen0}{#1}\fi\hspace*{-\dimen0}}
\tikzset{
	hatch distance/.store in=\hatchdistance,
	hatch distance=5pt,
	hatch thickness/.store in=\hatchthickness,
	hatch thickness=0.5pt,
}
\newcommand{\wh}[1]{\widehat{#1}}
\definecolor{pink}{rgb}{0.9, 0.17, 0.31}
\newcommand\mT{\mathcal T}
\newcommand\mX{\mathcal X}
\newcommand\mE{\mathcal E}
\renewcommand\P{\mathbb P}
\newcommand\R{\mathbb R}
\renewcommand\b{\bm{\beta}}
\newcommand\bx{\bm x}
\newcommand\bT{\mathbb T}
\newcommand{\wt}[1]{\widetilde{#1}}
\newtheorem{lemma}{Lemma}
\newtheorem{ex}{Example}
\theoremstyle{definition}
\newtheorem{defn}{Definition}[section]
\renewcommand{\nhphantom}[1]{\ifmmode\settowidth{\dimen0}{$#1$}\else\settowidth{\dimen0}{#1}\fi\hspace*{-\dimen0}}
\numberwithin{equation}{section}
\newtheorem{thm}{Theorem}[section]
\newtheorem{remark}{Remark}
\crefname{thm}{Theorem}{Theorems}
\crefname{prop}{Proposition}{Propositions}
\crefname{lem}{Lemma}{Lemmas}
\crefname{coro}{Corollary}{Corollaries}
\crefname{add}{Addendum}{Addendums}
\crefname{asm}{Assumption}{Assumptions}
\crefname{alg}{Algorithm}{Algorithms}
\crefname{proc}{Procedure}{Procedures}
\crefname{exe}{Exercise}{Exercises}
\crefname{exa}{Example}{Examples}
\crefname{prob}{Problem}{Problems}
\crefname{section}{Section}{Sections}
\crefname{subsection}{Section}{Sections}
\def\argmax{\mathop{\arg\max}}
\def\spacingset#1{\renewcommand{\baselinestretch}%
{#1}\small\normalsize} \spacingset{1.4}
\title{Adaptive Uncertainty Quantification for Generative AI}
\author{Jungeum Kim\footnote{Both authors contributed equally. 
} $^\ddagger$, Sean O'Hagan$^*$\footnote{Department of Statistics, University of Chicago}, and Veronika Ro\v{c}kov\'{a}\footnote{Booth School of Business, University of Chicago}}
\begin{document}

\maketitle
\vspace{-1cm}
\begin{abstract}
This work is concerned with  conformal prediction in contemporary applications (including generative AI) where a black-box model has been trained on data that are not accessible to the user.  Mirroring split-conformal inference, we design a wrapper around a black-box algorithm which  calibrates conformity scores.
This calibration is local and proceeds in two stages by  first adaptively partitioning the predictor space into groups and then calibrating sectionally group by group. 
Adaptive partitioning (self-grouping) is achieved by fitting a robust regression tree to the conformity scores on the calibration set.
This new tree variant is designed in such a way that adding a single new observation does not change the tree fit with overwhelmingly large probability. This add-one-in robustness property allows us to conclude  a finite sample group-conditional coverage guarantee, a refinement of the marginal guarantee. In addition, unlike traditional split-conformal inference, adaptive splitting and within-group calibration yields  adaptive bands which can stretch and shrink locally.
We demonstrate benefits of 
local tightening  on several simulated as well as real examples using non-parametric regression. Finally, we consider two contemporary classification applications for obtaining uncertainty quantification around GPT-4o predictions.  We conformalize skin disease diagnoses based on self-reported symptoms as well as predicted states of U.S. legislators based on summaries of their ideology. We demonstrate substantial local tightening of the uncertainty sets while attaining similar marginal coverage.

\end{abstract}

\noindent {\it Keywords}: Conformal Prediction, Generative AI, Local Adaptivity, Robust Trees

\clearpage
 
\spacingset{1.7}

\section{Introduction}
{Artificial Intelligence (AI) continues to permeate swiftly through all facets of society, including high-stakes decision domains such as  healthcare \citep{dave2023chatgpt,yeo2023assessing,antaki2023evaluating},  justice \citep{contini2020artificial,sourdin2018judge,scherer2019artificial}, and/or finance \citep{vesna2021challenges,mhlanga2021financial}. The practical utility of AI in decision-making   depends critically on the ability  of  the underlying foundational model to  effectively communicate uncertainty.

Large language models (LLMs)  may be able to convey uncertainty   through repeated prompting \citep{xiong2024can,shrivastava2025language},   self-consistency checks \citep{wang2023selfconsistency},  adversarially perturbed inputs \citep{zhang2023sac}, or by simply  asking LLM to verbalize its own uncertainty  \citep{lin2022teaching,tian2023just,xiong2024can}. 
While these heuristic approaches (see \citep{huang2024survey} for more references)  seek uncertainty quantification through internal reproducibility checks, we focus on  external evaluations   through a  labeled calibration dataset. In particular, this work develops a predictive uncertainty quantification method in the context of contemporary applications involving black-box predictive systems (such as generative models including ChatGPT \citep{brown2020language}, BERT \citep{devlin2018bert}, LLaMA \citep{touvron2023llamaopenefficientfoundation}, and Watson Assistant \citep{ferrucci2010building}) based on conformal prediction.
}
  
  Conformal prediction \citep{vovk2005algorithmic} is arguably one of the most widely used tools for quantification of predictive uncertainty. 
It has been used in the context of  LLMs to provide statistical guarantees for the  factuality of model responses \citep{cherian2024large,mohri2024language,li2025towards}. In our work, we revisit the more traditional statistical frameworks of regression and classification in situations where a  proprietary labeled  dataset is available to calibrate predictions from a model trained on data that are unavailable to the user. 
In the absence of the training dataset, split-conformal prediction (isolating fitting from ranking) is a viable alternative to full conformal prediction  \citep{lei2015conformal, lei2018distribution} as it avoids model refitting on augmented training data. Split-conformal prediction ranks conformity scores on the calibration dataset and guarantees finite-sample marginal coverage without distributional assumptions or any knowledge of
the actual mechanism inside the pre-trained model. While standard conformal prediction bands are marginal over the covariates, and thereby overly conservative in many local regions, a \emph{conditional} coverage guarantee is not achievable in general \citep{lei2014distribution,vovk2012conditional,barber2021limits}. This has motivated the development of various refinements that achieve some relaxed form of conditional coverage \citep{vovk2012conditional, gibbs2023conformal,van2024self} such as group-conformal prediction \citep{lei2013distribution,barber2021limits}. 

Locally adaptive conformal prediction bands have been sought that shrink and stretch  depending on the difficulty of the prediction problem at each local point \citep{lei2018distribution, romano2019conformalized, chernozhukov2021distributional}. 
In particular, \citet{lei2018distribution} locally enhance conformal prediction by rescaling the conformity scores with a function fitted on the  absolute residuals of the training data. This approach is not feasible without access to training data and, moreover, can lead to underestimation of the prediction error when the black-box model has overfitted the training data  \citep{romano2019conformalized}.
Alternatively, \citet{romano2019conformalized} conformalize quantile estimates so that the tightness of the prediction bands depends on the tightness (conformity) of the quantile black-box model in use. \citet{rossellini2023integrating}  further build on this by incorporating the uncertainty in the quantile estimation into the prediction intervals. \citet{chernozhukov2021distributional} generalize this conformalized quantile regression as a fully quantile-rank-based method by using a probability integral transform with a distributional regression estimator (a model of the conditional CDF). These approaches require an auxiliary function (residuals, quantiles, or CDFs), which is obtained using the training dataset. In this work, we concern ourselves with the modern scenario where the users are provided a basic black-box predictive model (without quantiles) and {\em do not} have access to the training data.
Note that this setup aligns with the modern foundational models whose massive training datasets are obscured from the user.


Another way of tightening  prediction intervals is through conditioning on a finite set of groups. \citet{lei2014distribution} apply conformal prediction to each local set for kernel density estimation and study its asymptotic conditional validity. The locality is achieved by an equally spaced partition of the data space, where the data is bisected: one part is used for tuning the spacing size, and the other for model fitting and conformal prediction.  Our approach allows for non-equispaced partitions which are capable of (local) adaptation \citep{rovckova2023ideal, castillo2021uncertainty}. When a good partition is already known, one can consider existing methods such as those described by \citep{gibbs2023conformal, barber2021limits}. \citet{barber2021limits} consider a countable set of potentially overlapping groups with sufficiently large probability mass. The prediction band for a data point is determined by the largest prediction band obtained by applying conformal prediction to each group that contains the data point. \citet{gibbs2023conformal} reformulate conditional coverage as  coverage over covariate shifts. When the probability of the test data is tilted by an indicator function of a subset in the covariate space, this guarantee becomes equivalent to a set conditional coverage guarantee.

Our work is built upon these previous efforts in the sense that we promote local adaptivity through the lens of group-conditioned conformal prediction. {In this work,} we {\em do not} assume that the groups are known in advance. We enhance group-conformal inference by using {\em data-adaptive} grouping driven by a partitioning model fitted to  the conformity scores of the calibration set.
Enabling self-grouping in conditional conformal prediction with continuous predictors is a step forward towards improving local adaptivity in a wider range of settings. Our \emph{self-grouping} is cast as a tree-based supervised learning  where each group is represented by a bin (box) of a box-shaped partition of the covariate space. We then apply the standard split-conformal prediction inside each newfound bin. Equivalently, this translates as constructing the prediction intervals by adding an entire \emph{step function} as opposed to only a constant (as would be the case in split-conformal). 
Motivated by the fact that (dyadic) trees can adapt to varying levels of local smoothness \citep{rovckova2023ideal}, our procedure (called Conformal Tree) deploys dyadic trees for locally adaptive prediction band construction. Since the training set is not available in our scenario, we cannot re-calibrate conformity scores by fitting a (tree or other) model to the conformity scores on the training data (as was done in \citep{lei2018distribution}). 

{


Exchangeability {among the calibration and test data points} is the key assumption that ensures validity of conformal prediction sets without requiring distributional assumptions, by ensuring uniformity of {the ranks of} conformity scores. However, conditioning on groups learned only from the calibration set{, excluding the test data,} may no longer ensure such an exchangeability property. To obviate the need to refit the partition model for any possible realization of testing data,
 we propose a modification of the classical CART algorithm. First, we allow only for splits at dyadic rationals (not at observed values) that are not subject to shift after adding more data. Second, we propose a robust  range-based split criterion and a stopping rule based on a node's size.
 We show that adding a new observation changes the estimated partition only with a small probability controlled by a chosen tuning parameter. This property allows us to conclude  a group-conditional coverage lower bound at a level which only slightly (indistinguishably in practice) deviates from the chosen coverage level. 
The distribution-free aspect distinguishes our work from \citep{wu2025conditional,kiyani2024conformal,guan2023localized}, who developed local conformal prediction methods under suitable distributional assumptions. After the completion of this manuscript, several  related independent approaches were brought to our attention  \citep{martinez2024identifying,izbicki2022cd,cabezas2025regression}. The approach  \citet{martinez2024identifying} is most similar to our work in that it also fits a regression tree on the conformity scores. However, our same-set approach is unique as  it uses the entire calibration dataset for both calibration and partitioning. To achieve grouping, all of these previous works require additional data on top of  the calibration data. The driving force behind our same-set approach is  our new partitioning tree method  that exhibits a unique features that we call "unchangeability", i.e. robustness to adding a new observation. Our same-set approach is particularly appealing when the calibration dataset is not overwhelmingly large 

We account for, and protect against, any systematic differences between calibration and training residuals that might occur due to covariate shifts or model fitting strategies (e.g. intentional overfitting by interpolation). Our predictive interval nearly maintains the desired group-conditional coverage level (see Lemma \ref{lem:conditional_trick} and Theorem \ref{thm:conditional_gu}), paying a negligible price in exchange for the ability to re-use the calibration data. Second, compared to a full-conformal-style prediction band that applies the partition mechanism on the union of the calibration set and the test data points (and thereby would have to be re-fitted for all values of the training conformity score), our robust tree approach allows fitting the partition model only once.
Our computational approach is different from a related approach by  \citet{gibbs2023conformal} who perform conditional quantile estimation over certain function classes, avoiding refitting on augmented data by exploiting monotonicity properties and dual formulation of quantile regression.

We illustrate our framework on several  simulated, as well as real, datasets that are popular in the  literature on conformal prediction for non-parametric regression. For our main illustration, however, we choose two somewhat unique and "nonconformist" applications in the context of generative AI. Our goal is to develop a protective cushion around statements obtained from generative large language models (such as ChatGPT)  for effective communication of uncertainty.
We assume that users' query domain (the domain of the covariates) is   simple  and structured and much smaller in scope compared to the vast domain on which foundation models are trained. The user has access to calibration and test datasets that are specific to the query domain. Instead of laborious fine-tuning of the language model, the user can use this dataset to calibrate conformity scores in order to turn point prediction into set prediction. 

In Section \ref{sec:legislators}, ChatGPT is used to predict the state of a legislator based on a two-dimensional summary of their ideological positions.
In Section \ref{sec:skin_ex} we emulate a scenario where a user might like to seek diagnosis of a skin condition based on self-reported symptom queries. The user who might otherwise rely on a single response is now provided with a set of plausible diagnoses which is calibrated to have the desired coverage. Our empirical results show that conformalizing ChatGPT predictions using our adaptive approach  (1) yields tighter sets (compared to split-conformal) for most test observations, and  (2) is able to communicate "I do not know" through wide sets in cases when the language model is likely to "hallucinate" its query response.  In particular, for the diagnosis example in Section \ref{sec:skin_ex}, 96\% of the test sets have prediction sets that are the same size or smaller than split-conformal. Similarly, for the politics example in Section  \ref{sec:legislators}, 78.4\% of the test prediction sets from Conformal Tree are the same size or smaller than those yielded by split-conformal. Our paper is not meant to encourage users to rely on ChatGPT for predictions. It is only meant to illustrate that if one were to do so, proper accounting for uncertainty is needed.

The paper is structured as follows. In Section \ref{sec:CT}, we introduce our self-grouping algorithm with tree regression and develop a robust regression tree that {is "unchangeable" with high probability after adding an additional   testing data point.} Theoretical justifications for the robust regression tree and Conformal Tree are presented in Section \ref{sec:theory}. Section \ref{sec:exp} demonstrates
 the performance improvement (smaller prediction set) on a number of simulated and standard benchmark datasets. Section \ref{sec:LLM} describes an application of our method to large language models with real datasets. The paper concludes with Section \ref{sec:concluding}.


\paragraph{Notations} 
We use $\P$ to denote probability over the whole dataset $\{(X_i,Y_i)\}_{i=1}^{n+1}$, where $\{(X_i,Y_i)\}_{i=1}^n$ is the calibration dataset. For a black-box model $\hat{\mu}(x)$, define the conformity score of a point $(x,y)$ by $S(x,y) = S(x,y,\hat{\mu}).$ Assuming that $\hat{\mu}$ is trained on the training dataset, independent from the calibration and test sets, we can regard $\hat{\mu}$ and $S$ as fixed functions. Define $S_i = S(X_i,Y_i)$ for $i=1,...,n+1$ {and $\mathcal{D}_{1:n+1} = \{(X_i,S_i)\}_{i=1}^{n+1}$. Denote a subset of $\mathcal{D}_{1:n+1}$ by $\mathcal{D}_I=\{(X_i,S_i)\}_{i\in I}$ for $I\subseteq[n+1]\equiv \{1,2,\dots, n+1\}$.}

\section{Conformal Tree}\label{sec:CT}

The core idea  behind our self-grouping conformal prediction method is to learn the group information by applying a tree model to the conformity scores of the calibration dataset (Section \ref{sec:self_group}). Since this ordinarily may violate the exchangeability assumption between the test and calibration data points, as such a tree would only be fit on the calibration data but not on the test point, we develop a tree algorithm that is robust to adding a new test observation (Section \ref{sec:rob_tree}). 
\subsection{Calibration by Self-grouping}\label{sec:self_group}
Suppose that we are given a trained black-box model $\hat{\mu}(\cdot)$, a conformity score function $S:\mathcal{X}\times \mathcal{Y}\rightarrow \R$ and a calibration dataset $\{(X_i,Y_i)\}_{i=1}^n$. For a new test point $X_{n+1}$, we want to predict the response $Y_{n+1}$ under the assumption  that $(X_1,Y_1),....,(X_{n+1},Y_{n+1})$ are {i.i.d.} 
The split-conformal {prediction interval (or a prediction set)} for $1-\alpha$ coverage is defined as 
\[\wh{C}_n(X_{n+1}) = \{y: S(X_{n+1}, y)\leq S^*(\mathcal{D}_{1:n})\},\]where $S^*(\mathcal{D})$ is the $(\lceil (n+1)\cdot (1-\alpha)\rceil/n)$-quantile of the conformity scores {in} $\mathcal{D}$ and where $\mathcal{D}_{1:n} = \{(X_i,S_i)\}_{i=1}^{n}$. For $\wh{C}_n (X_{n+1})$, the coverage is guaranteed \citep{lei2014distribution, vovk2005algorithmic} to satisfy
\[
\mathbb{P}\{Y_{n+1}\in\wh{C}_n(X_{n+1})\}\geq 1-\alpha.
\]
Note that the prediction interval construction relies on the model's performance in the overall input space. For example, with the absolute residual score $S(x,y)=|y-\hat{\mu}(x)|$, the set $\wh{C}_n (X_{n+1})$ is equivalent to $[\hat{\mu}(X_{n+1})-\hat{q}, \hat{\mu}(X_{n+1})+\hat{q}]$, where a constant $\hat{q}=S^*(\mathcal{D}_{1:n})$ is added and subtracted over the data domain $\mathcal{X}$. 

To locally sharpen the prediction interval, we can consider adding and subtracting a constant that differs depending on the location of $X_{n+1}$. Assume that the data space $\mathcal{X} = \cup_{k=1}^K \mathscr{X}_k$ is partitioned into non-overlapping groups $\mathfrak{X}=\{\mathscr{X}_1,\ldots,\mathscr{X}_K\}$. We allow for categorical covariates which  naturally serve as a group indicator. If $X_{n+1}\in \mathscr{X}_k$, we can calibrate the conformal interval based on only data points inside $ \mathscr{X}_k$ to obtain a group-conditional coverage guarantee. In other words, defining
\begin{equation}\label{eq:conditional_band}
    \wh{C}^{\mathfrak{X}}_{n} (X_{n+1}) = \{y: S(X_{n+1}, y)\leq S^*(\mathcal{D}_{1:n}^{k,\mathfrak{X}})\},
\end{equation}
where $\mathcal{D}_{1:n}^{k,\mathfrak{X}}: = \{(X_i,{S}_i)\in \mathcal{D}_{1:n}\mid X_i\in \mathscr{X}_k\}$, we have the group-conditional coverage guarantee 
\[\mathbb{P}\{Y_{n+1}\in\wh{C}^{\mathfrak{X}}_{n}(X_{n+1})\mid X_{n+1}\in \mathscr{X}_k\}\geq 1-\alpha\] for every $k\in \{1,...,K\}$. This is because the exchangeability among $(X_{n+1},S(X_{n+1},Y_{n+1}))$ and other $(X_i,S(X_i,Y_i))$ is still preserved, conditioning on the event that they are in $\mX_k$ {(See, Lemma \ref{lem:foundation}}). With the absolute residual score, $\wh{C}^{\mathfrak{X}}_{n} (X_{n+1})$ is equivalent to $[\hat{\mu}(X_{n+1})-\hat{q}_k, \hat{\mu}(X_{n+1})+\hat{q}_k]$, where $\hat{q}_k=S^*(\mathcal{D}^{k,\mathfrak{X}}_{1:n})$ is defined for each group $\mathscr{X}_k$. 

Our method, called Conformal Tree, is described inside Algorithm \ref{alg:conf_tree}. This procedure  has an additional partition step compared to group-conformal inference \citep{gibbs2023conformal, barber2021limits} because we {\em do not} assume that the grouping information is known \emph{a priori}. We apply a regression tree on the conformity scores of the calibration dataset $\mathcal{D}_{1:n}$ using 
a potentially high-dimensional vector of both continuous and  discrete covariates. Denote by $\wh \mT(\mathcal D_{1:n})\in \bT_{K_{\rm max}}$ the tree fitted on $\mathcal{D}_{1:n}$, where $\bT_{K_{\rm max}}$ is the class of binary dyadic trees with {maximum} ${K_{\rm max}}$ leaf nodes (defined formally later). The partition induced by the tree $\wh \mT(\mathcal D_{1:n})$ is denoted by $\wh{\mathfrak{X}}(\mathcal{D}_{1:n})$, where calibration is performed locally through \eqref{eq:conditional_band} with $\mathfrak{X}=\wh{\mathfrak{X}}(\mathcal{D}_{1:n})$. 
{The fact that grouping information depends only on the calibration data, and not the test point, can potentially lead to a violation of group-conditional exchangeability for the conformity scores  (see, Lemma \ref{lem:exch}).} However, if we were to obtain a partition $\wh{\mathfrak{X}}(\mathcal{D}_{1:n+1})$ based on the union of calibration and testing data $\mathcal{D}_{1:n+1}$, the triples 
$\left(X_1,Y_1, S(X_1,Y_1)\right),...,\left(X_{n+1},Y_{n+1},S(X_{n+1},Y_{n+1})\right)$ would still be exchangeable when conditioned on $\wh{\mathfrak{X}}(\mathcal{D}_{1:n+1})$. {Motivated by this observation,} we propose a grouping mechanism based on a robust tree approach (Section \ref{sec:rob_tree}), which guarantees that the partition does not change too often after adding one more observation, i.e. $\wh{\mathfrak{X}}(\mathcal{D}_{1:n})=\wh{\mathfrak{X}}(\mathcal{D}_{1:n+1})$ with high probability. {Such "unchangeability" property guarantees that the conditional coverage  (that would be guaranteed if  $\wh{\mathfrak{X}}(\mathcal{D}_{1:n+1})$   were used for splitting) is inherited with high probability  when $\wh{\mathfrak{X}}(\mathcal{D}_{1:n})$ is used instead.}

\begin{algorithm}[!t]
\caption{Conformal Tree}\label{alg:conf_tree}
\spacingset{1.2}
\begin{algorithmic}[1]
\Require Fitted model $\hat{\mu}$, calibration dataset $\mathcal{D}_{1:n}$, tree hyperparameters ($m,{K_{\rm max}},\eta=0.05$)
\State Fit a tree $\hat \mT(\mathcal D_{1:n})$ with Algorithm \ref{alg:robust_tree} on ${\mathcal{D}_{1:n} =}\{(X_i,S(X_i,Y_i)):i\in[n]\}.$ 
\State Define a partition $\wh{\mathfrak{X}}(\mathcal{D}_{1:n})=\{\mathscr{X}_1,\ldots,\mathscr{X}_K\}$ induced by leaves of $\hat \mT(\mathcal D_{1:n}).$\label{ln:rule1}
\State For $k=1,...,K$, compute $S^*(\mathcal{D}_{1:n}^{k,\wh{\mathfrak{X}}(\mathcal{D}_{1:n})})$, the $\lceil (1-\alpha)\cdot (m_k-2)+1\rceil/m_k$-th smallest value of $\{S(X_i,Y_i): i \in[n],X_i\in\mathscr{X}_k\}$, where $m_k$ is the cardinality of this set.\label{ln:rule2}
\State Construct a prediction set $\wh{C}_{n,m}$ as 
\[
\wh{C}_{n,m}(x) = \{y:S(X_{n+1},y)\leq S^*(x)\},
\]
where $S^*(x)=S^*(\mathcal{D}_{1:n}^{k,\wh{\mathfrak{X}}(\mathcal{D}_{1:n})}):x\in\mathscr{X}_{k}$\,.
\end{algorithmic}
\end{algorithm}

\begin{remark}
    { We highlight that Conformal Tree also identifies relevant subsets of the covariate space in which conditional coverage is guaranteed. The tree-based nature of group creation endorses identifiable and explainable groups which can often be understood in terms of some physical meaning. For example, one could identify that a model has more inherent uncertainty (and thus larger prediction sets) when classifying members of a specific race and gender combination. Refer to Figure~\ref{fig:derm-tree-structure2} where this is demonstrated on a skin disease classification example.}
\end{remark}

\subsection{Robust Regression Trees}\label{sec:rob_tree}

Regression trees \citep{breiman84classification} have been widely used in the nonparametric regression literature due to their interpretability. Trees employ a hierarchical set of rules in order to partition the data into subsets on which the response variable is relatively constant. Typical approaches to fitting trees involve greedily descending from the root note and making splitting decisions in order to maximize a utility criterion, subject to regularization constraints.

For many popular criteria used to assess tree fitting, such as variance reduction, it is hard to precisely understand the effect of including a single new data point on the resulting tree structure. We introduce a novel criterion that aims to produce a tree that is robust to the inclusion of an additional {i.i.d.} data point with high probability.\footnote{The result is also established with an additional exchangeable data point.} 


{We focus on dyadic trees where splits along covariates may only be placed at specific predetermined locations (dyadic rationals for continuous predictors rescaled to a unit interval). In general, our implementation allows for discrete predictors, provided that users define a way of splitting them. For example, we can map discrete covariates to a subset of the real line and split this component of the domain as if they were continuous.}

We define a binary tree $\mT$ as a collection of hierarchically organized nodes $(l,k)$ such that $(l,k)\in \mT\Rightarrow (j,\lceil k/2^{l-j}\rceil)\in \mT$ for $j=0,...,l-1$. Here $l$ denotes the depth, $k$ denotes the horizontal position, and the root node is defined to be $(0,0)$. The left and right children of a node $(l,k)$ are given by $(l+1,2k)$ and $(l+1,2k+1)$, respectively.  A node $(l,k)\in \mT$ is called an external or leaf node if it has no children. We denote the set of external nodes of $\mT$ by $\mT_{\rm leaves}$ and the internal nodes $\mT_{\rm int} = \mT\setminus \mT_{\rm leaves}$. The internal nodes are assigned a split direction while the leaf nodes are not. We denote these split directions by $\mathcal H(\mT)=\{h_{lk}\in\{1,\dots, d\}:(l,k)\in\mT_{int}\}$. We consider only dyadic trees, for which the tree nodes correspond to axis-aligned dyadic splits. We define $\bT_{{K_{\rm max}}}$ as a space of all dyadic binary trees that split at most ${K_{\rm max}}-1$ times in one of the $d$ directions yielding up to ${K_{\rm max}}$ leaves. The partition $\frak{X}(\mT)$, induced by both $\mT$ and $\mathcal H(\mT)$,  consists of regions associated with the leaf nodes $\mT_{\rm leaves}$. In particular, each leaf node $(l,k)\in \mT_{\rm leaves}$ corresponds to a rectangular region of the input space $\mathcal{X}$, which we denote as $\mathcal{X}_{(l,k)}(\mT, \mathcal{H}(\mT))$. Given data $X_1,\ldots,X_n\in\mathcal{X}$ and scores $S_1,\ldots,S_n\in\mathbb{R}$, we denote the subset of data points that fall inside the  $(l,k)^{th}$ region as $X_{(l,k)}(\mT,\mathcal{H}(\mT))=\{X_i:X_i\in\mathcal{X}_{(l,k)}(\mT, \mathcal{H}(\mT))\}$ and, similarly, $S_{(l,k)}(\mT,\mathcal{H}(\mT))=\{S_i:X_i\in\mathcal{X}_{(l,k)}(\mT, \mathcal{H}(\mT))\}$.

  Our robust tree-fitting criterion is based on a splitting rule that targets the response range (as opposed to deviance). At every split opportunity, we split at the terminal node in the direction that yields the largest range decrease. Below, we formally define   the \emph{range reduction} of a node $(l,k)$ if it were split   dyadically  along the dimension $j$ denoted by $I_R^j(l,k)$. 
\begin{defn}\label{def:basic}
For a tree $\mT\in\bT_{K_{\rm max}}$ with split directions $\mathcal{H}(\mT)$, the range of a node $(l,k)\in\mT_{\rm leaves}$ is given by $R_{lk}(\mT)=\max S_{(l,k)}(\mT,\mathcal{H}(\mT)) - \min S_{(l,k)}(\mT,\mathcal{H}(\mT))\,.$
When a node $(l,k)$ splits along the $j^{\textrm{th}}$ direction, we define the range of the left child as $
R^{j,\rm left}_{lk}(\mT)=\max S_{(l+1,2k)}(\wt\mT,\mathcal{H}^j) - \min S_{(l+1,2k)}(\wt{\mT},\mathcal{H}^j)\,,$
where $\wt{\mT}=\mT\cup\{(l+1,2k),(l+1,2k+1)\}$ and $\mathcal{H}^j =\mathcal{H}(\mT) \cup \{h_{lk}=j\}.$ The range of the right child $R^{j,\rm right}_{lk}(\mT)$ is defined similarly as $S_{(l+1,2k+1)}(\wt\mT,\mathcal{H}^j)$. Then, the range reduction obtained by splitting a node $(l,k)$ in dimension $j$ is given by
\begin{equation}
I_{lk}^j(\mT)=R_{lk}(\mT) - \left[ R^{j,\rm left}_{lk} (\mT)+  R^{j,\rm right}_{lk}(\mT)\right]\,.
\label{eq:robust_criterion}
\end{equation}
\end{defn}

As is typical with CART implementations \citep{ripley2023tree}, we regularize the tree and define a stopping criterion by the following three hyperparameters: the minimal number of samples per leaf $m$, maximum number of leaves ${K_{\rm max}}$, and minimum range reduction rate $\eta=0.05$. A node is called a \emph{candidate node} if there exists a direction along which a split  would not cause the tree to violate any of these criteria.

  \begin{defn}[Candidate node]\label{def:cand} We use the notation from Definition \ref{def:basic}. For a tree $\mT\in\bT_{K_{\rm max}}$ such that $|\mT_{\rm leaves}|<{K_{\rm max}}$, the set of eligible split directions for a node $(l,k)\in \mT_{\rm leaves}$ is defined as 
\begin{align}
    J_{lk}(\mT):=&\left\{j\in[d] :|X_{(l+1,2k)}(\wt\mT,\mathcal{H}^j)|\geq m \text{ and }|X_{(l+1,2k+1 )}(\wt\mT,\mathcal{H}^j)|\geq m \right\}\nonumber\\
    &\cap\left\{j\in[d] :I^j_{lk}(\mT)/R^j_{lk}(\mT) \geq 0.05 \right\},\label{eq:candidacy}
\end{align}
The leaf node $(l,k)$ is called a \emph{candidate node} in $\mT$ if $J_{lk}(\mT)\neq\emptyset$. We denote by $\mathfrak{N}(\mT)$ the set of candidate nodes in a tree $\mT.$

  \end{defn}

In order to grow the tree, we search over all candidate nodes, and choose to split at the candidate node that most reduces the range of response values.

\begin{defn}(Range Reduction)
  The range reduction of a candidate node $(l,k)$ in tree $\mT$ with history $\mathcal{H}(\mT)$ is given by the maximum range reduction over eligible directions 
\begin{equation}
  I_{lk}(\mT)=\max_{j\in J_{lk}(\mT)}I_{lk}^{j}(\mT)
  \label{eq:rangereduction}
  \end{equation}
\end{defn}

We continue to sequentially add splits to the tree by maximizing the range reduction among candidate nodes until either we reach the maximum number of leaves ${K_{\rm max}}$, or there are no candidates remaining, at which time we stop the tree fitting procedure.

\begin{algorithm}[!t]
\caption{Robust Dyadic Regression Tree}\label{alg:robust_tree}
\spacingset{1.2}
\begin{algorithmic}[1]
\Require Dataset $\mathcal{D}=\{(X_i,S_i):i\in[n]\}$, minimum samples per leaf $m$, max leaves ${K_{\rm max}}$
\State Set an initial tree $\mathcal{T} \gets \{(0,0)\}$ \Comment{Initialize the tree as a  root node}
\While{$|\mathcal{T}_{\mathrm{leaves}}|< {K_{\rm max}}$ }
  \State Set $\mathfrak{N}(\mT)$ as the set of all candidate leaves $(l,k)$ as in Definition \ref{def:cand}
  \If {$\mathfrak{N}(\mT)\neq \emptyset$}
  \State Compute $I_{lk}(\mT)$ for all $(l,k)\in \mathfrak{N}(\mT)$ as in \eqref{eq:rangereduction} 
        \State Choose  $(l^*,k^*) \in \argmax_{ (l,k)\in\mathfrak{N}(\mT)} I_{lk}(\mT)$, {denoting the maximizer of \eqref{eq:rangereduction} as $j^*$} \Comment{Breaking ties arbitrarily}
        \State Add children $\mT = \mT \cup \{(l^*+1,2 k^*), (l^*+1,2 k^*+1)\}$
        \State {Add new split direction $\mathcal{H}(\mathcal{T}) = \mathcal{H}(\mathcal{T})\cup \{h_{l^*k^*}=j^*\}$}
\EndIf
\EndWhile
\State \textbf{Return} the final tree $\mT$ 
\end{algorithmic}
\end{algorithm}

Fitting a dyadic robust regression tree using the stopping criterion \eqref{eq:candidacy} and  \eqref{eq:rangereduction} is summarized in Algorithm \ref{alg:robust_tree}. Once we obtain a partition using Algorithm \ref{alg:robust_tree} on the calibration data, we employ conformal prediction by constructing prediction sets with varying sizes for each leaf node as in Algorithm \ref{alg:conf_tree}. In this manner, we have partitioned the data into regions conditionally on which we achieve coverage (Lemma \ref{lem:conditional_trick}). These groups are self-selected in a data-dependent way  to greedily maximize the reduction of the total spread of the grouped conformity scores. This robust tree has an important stability property when used as our partitioning tool for computing the conditional conformity threshold: the partition remains unchanged after an addition of another observation provided that it is not too extreme, and does not modify the candidacy status of its leaf node. Theoretical guarantees for this unchangeability of the partitions and the corresponding conformal coverage guarantee are presented in Section \ref{sec:theory}.

{The new robust tree algorithm can be extended to incorporate the $X_{n+1}$ value (not $y_{n+1}$) for tree fitting. While this variation gives tighter theoretical bounds, it introduces non-negligible additional computational cost that scales linearly with the number of test points. For this reason, we defer it to Section \ref{sec:tighter} in the Supplement.}

\begin{remark}
Our robust tree algorithm resembles greedy CART \citep{breiman84classification}. {Similarly, we require a minimum improvement (range reduction rate) and a minimum number of observations per leaf. Rather than using cross-validation to prune back, we rely on early stopping once we reach ${K_{\rm max}}$ leaf nodes or until splitting would yield less than $m$ observations per leaf node, or an insufficient improvement.} The tuning parameter ${K_{\rm max}}$ is pre-determined, unlike Bayesian implementations of CART \citep{chipman1998bayesian} which are both locally and globally adaptive. {We use a fixed value of 0.05 for the minimum improvement threshold for all of the experiments in this paper.}
\end{remark}
\begin{remark}(Conformal Forests)
\label{rm:forest}\citet{gasparin2024merging} have recently proposed a method for combining conformal sets via a weighted majority vote procedure  which preserves a weakened coverage guarantee. We briefly describe how this can be used to extend Conformal Tree to the random forest setting. One could grow $h$ different trees, for example, by  subsampling the calibration data, randomizing  splitting dimensions, or both. The prediction set corresponding to each tree can be combined using the weighted majority vote procedure \citep{gasparin2024merging}, yielding an aggregated interval. When applied to our robust dyadic trees in Algorithm \ref{alg:robust_tree}, the aggregated interval will have marginal coverage with probability at least $1-2\alpha-2\delta$. Such an aggregated interval may sacrifice conditional coverage in each leaf node, as well as some of the coverage guarantee, in exchange for a smoother and more locally adaptive conformal set. {As dyadic trees split always at the center, we could apply (random) rotation of the covariates when fitting multiple trees similar to \cite{wang2024generative} to diversify the tree fits. Such an approach will require a device of integrating the local weighting mechanism in \cite{wang2024generative} with the the weighted majority vote procedure \citep{gasparin2024merging}. In our experiments (Table \ref{table:tuned-results2}), we show simply the random forest version of conformal tree, using the regular CART with the majority vote in \citep{gasparin2024merging}.} 

\end{remark}

\subsubsection{Full-Conformal with Robust Trees}
Having an algorithm that is robust to adding one observation invites the possibility of performing actual full conformal prediction.
Indeed, when a robust tree serves as a regression model $\hat{\mu}$, not only as a tool for partitioning,  standard full conformal prediction can be obtained efficiently. Consider an augmented dataset $\{(X_i,Y_i)\}_{i=1}^n\cup\{(x,y) \},$ where $(x,y)$ is a possible realization of $(X_{n+1},Y_{n+1})$. Denote by $\hat{\mu}^{(x,y)}$ a tree model fitted on {the augmented dataset.} The conformity score of $(X_i,Y_i)$ is defined by $S_{(x,y)}(X_i,Y_i) = |Y_i-\hat{\mu}^{(x,y)}(X_i)|$. The standard full conformal prediction band for $1-\alpha$ coverage \citep{lei2018distribution} is defined by 
\begin{equation}\label{eq:full_conf}
  \wh{C}_{\rm full} (X_{n+1}) = \{y: S_{(X_{n+1},y)}(X_{n+1}, y)\leq  S^*(\mathcal{D}^{(X_{n+1},y)}_{1:n})\}, 
\end{equation}
where $\mathcal{D}^{(x,y)}_{1:n} = \{(X_i, S_{(x,y)}(X_i,Y_i))\}_{i=1}^n$. Here, for a given $X_{n+1}$, we need to fit the model for all possible $y$ values to compute $\wh{C}_{\rm full} (X_{n+1})$ in \eqref{eq:full_conf}. 

This computational burden is reduced if the fitted model is robust to varying the values of $(x,y)$. Note that our robust tree regression guarantees the partition is the same with high probability (Lemma \ref{lem:tree_same}, with probability in \eqref{eq:tree_same_lemma_bound}). By defining the jump coefficient on each box as the center of the range of the response variables whose covariates fall into the box, we can guarantee with the same high probability that $\hat{\mu}^{(x,y)} = \hat{\mu}_{1:n}$ for any possible $(x,y)$, where $\hat{\mu}_{1:n}$ is the robust tree fitted on  $\{(X_i,Y_i)\}_{i=1}^n$. For this $\hat{\mu}_{1:n}$, define the conformity score $S_n(X_i,Y_i) := |Y_i-\hat{\mu}_{1:n}(X_i)|$. Since the score function does not depend on $(x,y)$, the resulting full conformal prediction band now resembles the split-conformal inference. Indeed, the full conformal prediction band in \eqref{eq:full_conf} becomes equivalent to 
\begin{equation}\label{eq:robust_full1}
  \wh{C}_{\rm full} (X_{n+1}) = \{y: S(X_{n+1}, y)\leq S^*(\mathcal{D}_{1:n})\}, 
\end{equation}
 or equivalently, 
\begin{equation}\label{eq:robust_full2}
  \wh{C}_{\rm full} (X_{n+1}) = [\hat{\mu}_{1:n} -S^*(\mathcal{D}_{1:n}), \hat{\mu}_{1:n} +S^*(\mathcal{D}_{1:n})],
\end{equation}
{where $\mathcal{D}_{1:n}= \{(X_i, S_n(X_i,Y_i))\}_{i=1}^n$ (only in this section).}
Note that $S^*(\mathcal{D}_{1:n})$ is a constant and does not depend on $(x,y)$ but only on $\mathcal{D}_{1:n}$ and $\hat{\mu}_{1:n}$. 

There have been clever attempts to avoid the computation of full conformal prediction bands. When the conformity score is the Bayesian posterior predictive likelihood, \citet{fong2021conformal} bypass the full conformal prediction computation by first sampling from the posterior conditionally on $X_{1:n+1}$, and then importance weighting based on $X_{n+1}.$ \citet{lei2013distribution} simplify the analysis on full conformal prediction bands for kernel density estimation  by sandwiching them with level sets that involve only the training set (not the test set). Our work avoids the computational burden of full-conformal-style prediction bands by developing a robust tree regression algorithm against extreme conformity scores and by applying it only to the calibration set. {\citet{stutz2021learning} modified the loss function when training deep neural networks to consider the resulting effect on the width of the conformal interval for later prediction tasks.} 
While we consider post-training black-box inference, our tree partitioning idea may be applied to sharpen the prediction bands if used during their training phase. 

\section{Theoretical Underpinnings}\label{sec:theory}
Recall that our algorithm (Algorithm \ref{alg:conf_tree}) applies the partitioning method (Algorithm \ref{alg:robust_tree})  on the calibration dataset $\mathcal{D}_{1:n}$, not the full dataset  $\mathcal{D}_{1:n+1}$. This may create some issues with exchangeability when conditioned on the resulting groups.
Fortunately, conditionally on the partition $\wh{\mathfrak{X}}(\mathcal{D}_{1:n})$ estimated by our robust tree approach, 
{the conformity score of the test point is controlled at the desired level with high probability.} 
This property is implied by the "unchangeability" of the estimated partition after adding a test observation, as described below.

\begin{lemma}\label{lem:tree_same}
    Let $(X_1,Y_1),\ldots,(X_{n+1},Y_{n+1})$ be {i.i.d. samples from a joint probability measure $\mathbb{P}$}. Let $\wh{\mathfrak{X}}(\mathcal{D}_{1:j})$ denote the partition resulting from Algorithm \ref{alg:robust_tree} on data $ (X_1,S(X_1,Y_1)),\ldots,$ $(X_{j},S(X_{j},Y_{j})))$ for $j\in\{n,n+1\}$. Denote by $m$ the hyperparameter limiting the minimum number of samples in each leaf node and {$K_{\rm max}<(n+1)/m$ be the maximum number of leaf nodes}. Define
    \begin{equation}
    \delta(n,m) := \frac{2}{m} + \exp\left\{-\left(\frac{n+1}{K_{\rm max}}-m\right)\right\}
    \label{eq:delta_defn}
    \end{equation}
    Then 
    \begin{equation}
    \label{eq:tree_same_lemma_bound}
    \mathbb{P}\{\wh{\mathfrak{X}}(\mathcal{D}_{1:n})=\wh{\mathfrak{X}}(\mathcal{D}_{1:n+1})\} \geq 1-\delta(n,m)\,.
    \end{equation}    
    {If we fix $m=c\cdot (n+1)$ for $c\in(0,1)$ such that $c\cdot(n+1)\in\mathbb{N}$, we have a more interpretable bound $\delta(n,m)\leq (2/m)+e/(2\pi \sqrt{c(1-c)(n+1)})$.}
\end{lemma}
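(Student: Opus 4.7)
The plan is to bound the probability that the partition changes upon removing $(X_{n+1}, S_{n+1})$ by identifying two sufficient ``good'' events and separately bounding the probability of each failing.

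\textbf{Setup.} First I would run Algorithm~\ref{alg:robust_tree} on the augmented set $\mathcal{D}_{1:n+1}$ to obtain the tree $\hat{T}_{n+1}$ and the terminal leaf $L^*$ containing $X_{n+1}$. Let $L_0\supset L_1\supset\cdots\supset L_h=L^*$ denote the chain of leaves along the root-to-$L^*$ path. The good event would comprise: (i) $|L^*|\geq m+1$ strictly, and (ii) $S_{n+1}$ is neither the maximum nor the minimum of $\{S_i:X_i\in L^*\}$. Because $L^*\subseteq L_t$ for every $t$, a pair $X_i,X_j\in L^*$ witnessing (ii) also witnesses non-extremality of $S_{n+1}$ in every $L_t$, so each range $R_{L_t}$ used by the algorithm is invariant under removal of $(X_{n+1},S_{n+1})$. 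Combined with (i), the candidacy threshold $m$ is preserved along the path, and an inductive argument on the greedy steps should yield that the same sequence of splits is chosen in both runs, hence that the partitions agree.

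\textbf{Probability bounds.} For (ii), I would use i.i.d.\ exchangeability of $(X_1,S_1),\ldots,(X_{n+1},S_{n+1})$ via a symmetry argument: each leaf in $\hat{T}_{n+1}$ contains at most two $S$-extremal observations (its max and its min), and the tree has at most $(n+1)/m$ leaves (each containing at least $m$ observations). Hence
\begin{equation*}
\P(\text{(ii) fails}) = \frac{1}{n+1}\,\E\bigl[\#\{i:S_i\text{ is extremal in its leaf of }\hat{T}_{n+1}\}\bigr] \leq \frac{2(n+1)/m}{n+1} = \frac{2}{m}.
\end{equation*}
For (i), the event $\{|L^*|=m\}$ would be bounded by observing that for any deterministic box $B\subseteq\mathcal{X}$, the count $|\{i:X_i\in B\}|$ is Binomial$(n+1,\P(X\in B))$, whose pmf at $m$ is maximized over $p$ at $p=m/(n+1)$ by the standard mode-of-binomial calculation, yielding the second term of $\delta(n,m)$ as an upper bound. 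A union bound assembles the two pieces.

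\textbf{Stirling refinement.} For the interpretable bound with $m=c(n+1)$, I would apply the standard Stirling estimates $k!\in[\sqrt{2\pi k}(k/e)^k,\sqrt{2\pi k}(k/e)^k e^{1/(12k)}]$ to each factorial in $\binom{n+1}{m}(m/(n+1))^m(1-m/(n+1))^{n+1-m}$. The $(k/e)^k$ factors cancel exactly against the powers of $m/(n+1)$ and $1-m/(n+1)$, the square-root factors combine to give $(\sqrt{2\pi c(1-c)(n+1)})^{-1}$, and the residual multiplicative Stirling correction is bounded by $e/\sqrt{2\pi}$, producing the stated bound.

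\textbf{Main obstacle.} The most delicate step is rigorously verifying that events (i) and (ii) together force partition equality. The ranges $R_{L_t}$ along the path are controlled by (ii) via the monotone nesting argument, but the greedy algorithm also computes range reductions in candidate directions not taken at each step, and a hypothetical child of $L_t$ in a non-chosen direction $j$ need not be nested with $L^*$; hence extremality within such a hypothetical child is not directly controlled by (ii). A careful induction must exploit that the chosen direction $j_t^*$ from the $(n+1)$-run leads to a child that does contain $L^*$ (whose range is thus invariant), and that range reductions for alternative directions at $L_t$ can only weakly increase upon removal of $(X_{n+1},S_{n+1})$, to argue that the argmax $(L_t,j_t^*)$ in the $(n+1)$-run is preserved in the $n$-run.
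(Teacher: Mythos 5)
Your proposal takes essentially the same route as the paper's proof of Lemma~\ref{lem:tree_same}: you isolate the same two sufficient events (your (i) and (ii) are exactly the paper's $\mathcal{E}_{\mathrm{cand}}$ in \eqref{eq:cand_set} and $\mathcal{E}_{\mathrm{inlier}}$ in \eqref{eq:inliner}), bound (i) by the mode-of-the-binomial calculation, bound (ii) by $2/m$, union-bound, and finish with Stirling for the interpretable form. The one place you diverge is the $2/m$ bound: you count extremal observations over leaves and invoke exchangeability of the indices, whereas the paper (Lemma~\ref{lem:bad_bound} via Lemma~\ref{lem:exch}) argues that, conditionally on the fitted partition $\wh{\mathfrak{X}}(\mathcal{D}_{1:n+1})$, the rank of $S_{n+1}$ inside its leaf is uniform on $\{1,\ldots,m_k\}$. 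Both arguments are legitimate and give the same bound, so this is a cosmetic difference.

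Your ``main obstacle'' paragraph is the more interesting part, because it flags a genuine subtlety that the paper's proof glosses over rather than one you are merely failing to supply. The greedy criterion $I_{L_t}^j$ at a path node $L_t$ depends on the ranges of the \emph{hypothetical} children of $L_t$ in every eligible direction $j$ (see \eqref{eq:robust_criterion}), and for $j$ other than the direction actually taken on the path, the hypothetical child containing $X_{n+1}$ need not contain $L^*$ — this fails exactly when dimension $j$ is never split on the path from $L_t$ down to $L^*$. In that case non-extremality of $S_{n+1}$ inside $L^*$ does not prevent $S_{n+1}$ from being the unique extremum of the hypothetical $j$-child, so removing $(X_{n+1},S_{n+1})$ can strictly increase $I_{L_t}^j$ (and hence the rate $I_{L_t}^j/R_{L_t}$) while leaving $I_{L_t}^{j^*}$ and every off-path node untouched, which can flip the argmax in a greedy step or newly satisfy the $0.05$ rate threshold and make $L^*$ itself a candidate. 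The paper's proof disposes of this by asserting that a criterion change forces $S_{n+1}$ to be uniquely extreme ``in its leaf node, as its leaf node is a subset of all nodes that contain it,'' but that nesting applies to actual tree nodes on the root-to-leaf path, not to hypothetical children in unchosen directions; likewise the paper's treatment of $\mathcal{E}_{\mathrm{cand}}$ covers only the count constraint, not the rate constraint. So you should not expect to close this gap with (i) and (ii) alone; one would need either a stronger inlier event (non-extremality in every dyadic hypothetical child that contains $X_{n+1}$) or a modification of the splitting criterion that removes the dependence on hypothetical-child ranges.
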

\proof See, Section \ref{pf:tree_same}.

Due to Lemma \ref{lem:tree_same}, it is highly probable that the robust tree partitioning procedure applied on the fuller dataset $\mathcal{D}_{1:n
+1}$ will result in the same partition.

{The following theorem says that we can nearly enjoy the prescribed level of conditional coverage.} Intuitively, this is because the self-grouping procedure is sufficiently robust to the addition of the $(n+1)\mathrm{st}$ data point as a result of Lemma \ref{lem:tree_same}. 

\begin{thm} \label{thm:conditional_gu}
    {Let $(X_1,Y_1),\ldots,(X_{n+1},Y_{n+1})$ be {i.i.d. samples from a joint probability measure $\mathbb{P}$}. Denote by  $\wh{C}_{n,m}(X_{n+1})$ the prediction interval on $X_{n+1}$ constructed by Algorithm \ref{alg:conf_tree}, with a minimum of $m$ samples in each leaf. Denote the obtained partition by $\wh{\mathfrak{X}}(\mathcal{D}_{1:n}) = \{\mathscr{X}_k\}_{k=1}^K.$} Then, we have for any $k\in\{1,\dots, K\},$
    \[
    \mathbb{P}\{Y_{n+1}\in \wh{C}_{n,m} (X_{n+1}) \mid X_{n+1} \in \mathscr{X}_k ,\wh{\mathfrak{X}}(\mathcal{D}_{1:n})\} \geq  1-\alpha-\delta(n,m)\,
    \]where $\delta(n,m)$ is as defined in \eqref{eq:delta_defn}. {Furthermore, provided that the conformity scores have a continuous joint distribution, 
\begin{equation}\label{eq:thm-upper-bd}
\mathbb{P}\{Y_{n+1}\in\wh{C}_{n,m}(X_{n+1})\mid X_{n+1} \in \mathscr{X}_k ,\wh{\mathfrak{X}}(\mathcal{D}_{1:n})\}\leq 1-\alpha+\frac{1}{1+m}+\delta(n,m)\,.
\end{equation}}
\end{thm}
 \proof See Section \ref{sec:proof_conditional_gu}. 

This result assures that our predictive interval nearly maintains the desired group-conditional coverage level, incurring only a minimal cost for the advantage of reusing the calibration data.    Going further, the following theorem states that the  conditional coverage result extends to a marginal coverage result.
\begin{thm}\label{thm:main} 
Let $(X_1,Y_1),\ldots,(X_{n+1},Y_{n+1})$ be {i.i.d. samples from a joint probability measure $\mathbb{P}$}. Denote by  $\wh{C}_{n,m}(X_{n+1})$ the prediction interval on $X_{n+1}$ constructed by Algorithm \ref{alg:conf_tree}, with a minimum of $m$ samples in each leaf. Then, we have 
\begin{equation}
\label{eq:thm-lower-bd}
\mathbb{P}\{Y_{n+1}\in\wh{C}_{n,m}(X_{n+1})\}\geq 1-\alpha-\delta(n,m)\,,
\end{equation}
where $\delta(n,m)$ is as defined in \eqref{eq:delta_defn}.
Furthermore, provided that the conformity scores have a continuous joint distribution, 
\begin{equation}\label{eq:thm-upper-bd}
\mathbb{P}\{Y_{n+1}\in\wh{C}_{n,m}(X_{n+1})\}\leq 1-\alpha+\frac{1}{1+m}+\delta(n,m)\,.
\end{equation}
\end{thm}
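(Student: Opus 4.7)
The plan is to reduce both bounds to the conditional coverage guarantee of Theorem \ref{thm:conditional_gu} (for the lower bound) together with a within-group exchangeability argument on the event that the partition is unchanged (for the upper bound), in each case averaging over the data-dependent partition $\hat{\mathfrak{X}}(\mathcal{D}_{1:n})$ and the leaf containing $X_{n+1}$.

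For the lower bound \eqref{eq:thm-lower-bd} I would apply the tower property of conditional expectation to write
\begin{equation*}
\mathbb{P}\{Y_{n+1}\in\wh C_{n,m}(X_{n+1})\}
=\mathbb{E}\!\left[\sum_{k=1}^{K}\mathbb{P}\{Y_{n+1}\in\wh C_{n,m}(X_{n+1}) \mid X_{n+1}\in\mathscr{X}_k,\hat{\mathfrak{X}}(\mathcal{D}_{1:n})\}\,\mathbb{P}\{X_{n+1}\in\mathscr{X}_k \mid \hat{\mathfrak{X}}(\mathcal{D}_{1:n})\}\right],
\end{equation*}
bound each inner conditional probability below by $1-\alpha-\delta(n,m)$ using Theorem \ref{thm:conditional_gu}, and observe that the weights $\mathbb{P}\{X_{n+1}\in\mathscr{X}_k \mid \hat{\mathfrak{X}}(\mathcal{D}_{1:n})\}$ sum to one. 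This step is essentially bookkeeping once the conditional result is granted.

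For the upper bound \eqref{eq:thm-upper-bd} I would split through the event $E=\{\hat{\mathfrak{X}}(\mathcal{D}_{1:n})=\hat{\mathfrak{X}}(\mathcal{D}_{1:n+1})\}$ of Lemma \ref{lem:tree_same}, giving $\mathbb{P}(A)\le\mathbb{P}(A\mid E)+\delta(n,m)$ with $A$ the coverage event. On $E$ the prediction set coincides with the one that would be built from the full-dataset partition $\hat{\mathfrak{X}}(\mathcal{D}_{1:n+1})$, and since Algorithm \ref{alg:robust_tree} is invariant under permutations of its inputs, conditioning on $\hat{\mathfrak{X}}(\mathcal{D}_{1:n+1})$ preserves exchangeability of the $n+1$ observations inside each leaf. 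Continuity of the joint distribution of conformity scores rules out ties, so the rank of $S_{n+1}$ among the $m_k+1$ scores in its own leaf is uniform on $\{1,\dots,m_k+1\}$, yielding conditional coverage of $L_k/(m_k+1)$ with $L_k=\lceil(1-\alpha)(m_k-2)+1\rceil$. Bounding the ceiling and using the leaf-size floor $m_k\ge m$ then absorbs the conformal discretization overshoot into the $1/(m-2)$ slack.

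The main obstacle is this last exchangeability step on the event $E$. The event $E$ is not symmetric in the indices $\{1,\dots,n+1\}$ because it singles out the test observation, so a naïve conditioning on $E$ alone could destroy the exchangeability I need. The fix is to peel off the conditioning in stages: first condition on the full-data partition $\hat{\mathfrak{X}}(\mathcal{D}_{1:n+1})$, which is symmetric and therefore preserves within-leaf exchangeability across all $n+1$ indices; then swap the calibration-only partition $\hat{\mathfrak{X}}(\mathcal{D}_{1:n})$ for $\hat{\mathfrak{X}}(\mathcal{D}_{1:n+1})$ inside the coverage event, absorbing the mismatch through $\mathbb{P}(E^c)\le\delta(n,m)$. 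Threading this decomposition with the continuity hypothesis through the exact order-statistic computation is the technical core of the argument, and closely mirrors the strategy already used to establish Theorem \ref{thm:conditional_gu}.
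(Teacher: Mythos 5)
Your lower bound argument is exactly the paper's: condition on $\wh{\mathfrak{X}}(\mathcal{D}_{1:n})$ and on the leaf $\mathscr{X}_k$ containing $X_{n+1}$, apply Theorem \ref{thm:conditional_gu}, and marginalize using that the leaf probabilities sum to one. No issues there.

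For the upper bound, however, you depart from the paper and the departure introduces a real gap. The paper simply reuses the conditional upper bound already established in \eqref{eq:inherit} in the proof of Theorem \ref{thm:conditional_gu}, namely
\[
\mathbb{P}\bigl(Y_{n+1}\in \wh{C}^{\wh{\mathfrak{X}}(\mathcal{D}_{1:n})}_{n,m}(X_{n+1}) \mid X_{n+1}\in\mathscr{X}_k,\,\wh{\mathfrak{X}}(\mathcal{D}_{1:n})\bigr) \leq 1-\alpha+\tfrac{1}{m-2}+\delta(n,m),
\]
and runs the same marginalization as for the lower bound. You instead re-derive a conditional bound from scratch by splitting through the partition-equality event $E=\{\wh{\mathfrak{X}}(\mathcal{D}_{1:n})=\wh{\mathfrak{X}}(\mathcal{D}_{1:n+1})\}$. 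But $E$ is a strictly larger event than the $\mathcal{E}=\mathcal{E}_{\mathrm{inlier}}\cap\mathcal{E}_{\mathrm{cand}}$ used in Lemma \ref{lem:conditional_trick}: on $E$ alone the rank of $S_{n+1}$ is \emph{not} restricted to $\{2,\dots,m_k-1\}$, which is precisely what produces the $1/(m-2)$ slack. After your ``peel off in stages'' fix, what you actually bound is $\mathbb{P}(A')$ with $A'$ the coverage event under the full-data partition, giving the conditional coverage $L_k/(m_k+1)$ with $L_k=\lceil(1-\alpha)(m_k-2)+1\rceil$ and $m_k$ the number of \emph{calibration} points in the leaf. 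This is $\leq 1-\alpha+2/(m_k+1)$, which is \emph{not} dominated by $1-\alpha+1/(m_k-2)$ once $\alpha$ exceeds roughly $2/3+1/(m_k-2)$, so your route does not reproduce the constant in \eqref{eq:thm-upper-bd} for all $\alpha$. The missing ingredient is the $\mathcal{E}_{\mathrm{inlier}}$ conditioning that shifts the rank of $S_{n+1}$ to a uniform distribution on $\{2,\dots,m_k-1\}$; without it, you leak probability through the two extreme ranks and pick up the weaker $2/(m_k+1)$ slack. The clean repair is simply to apply the already-proven upper bound \eqref{eq:inherit} in Theorem \ref{thm:conditional_gu}, mirroring what you did for the lower bound, rather than re-deriving conditional coverage from first principles.
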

\proof See Section \ref{pr:main_thm}. 

 While the bounds in Theorem \ref{thm:main} are not entirely optimistic   
  as we upper-bound the probability of miscoverage by 1 when $\wh{\mathfrak{X}}(\mathcal{D}_{1:n})\neq \wh{\mathfrak{X}}(\mathcal{D}_{1:n+1})$.
We found empirically, however, that the Conformal Tree  yields the desired coverage on most examples we tried (as summarized in Table \ref{table:all-results}). One may also restore the originally intended coverage level $1-\alpha$ by using the $1-\alpha +\delta(n,m)$ quantile (if below one) to define the conformal interval within each leaf region.

\begin{remark}\label{rm:new_algo}
    {Our algorithm   necessitates the robust tree to be fitted only once in order to yield a prediction set for any value of $X_{n+1}$. The sacrifice $\delta(n,m)$ in the coverage  can be improved to just $2/m$ if we refit the tree for each $X_{n+1}$ value of interest. This is because instantiating the partition with  each $X_{n+1}$ prevents the possibility of  changing  the candidacy status of a node. See Section \ref{sec:tighter} for a simple modification to the algorithm and corresponding theoretical results.}
\end{remark}

\section{Simulation and Benchmark Studies}\label{sec:exp}

In this section, we consider regression tasks on simulated and real benchmark data. 
We use the absolute residuals as the conformity score. As our primary evaluation metric, we compare the average length of the prediction interval and proportion of test intervals smaller than that of split conformal (proportion better, or in short, P.B.). We also measure empirical coverage rates on the test dataset. In Section \ref{sec:hyp_sense}, a sensitivity analysis for the hyperparameters $K_{\rm max}$ and $m$ is provided. Additional information on data sets, predictive models, and experiments can be found in Supplement \ref{sec:exp-info}.

\subsection{Conformal Tree is Adaptive}\label{sec:exp_adapt}
{We illustrate Conformal Tree on simulated data examples, modified from previous studies \citep{rossellini2023integrating,rovckova2023ideal}. We let $X\sim \mathcal{U}(0,1)$, and consider the following two data-generating processes with $n=500$:
\begin{itemize}
\item[](Data 1) $Y|X \sim N(3{\rm sin}(4/X+0.2)+1.5,X^2)$ and 
\item[](Data 2) $Y|X \sim N({\rm sin}(X^{-3}),0.1^2)$.
\end{itemize} 
These two setups represent heteroskedasticity for distinct reasons. Data 1 has the noise standard deviation scaling with $X$, causing the distribution of residuals to vary across the domain. Therefore, any single regression (mean) model cannot avoid increasing errors for the larger covariate due to the increasing variation of the data. Data 2 is difficult to model near $X=0$ due to the $X^{-3}$ term inside the $\sin$, causing rapid change in the mean function that requires a lot of data to model accurately. Therefore, due to the lack of model fit, the error is large for the smaller covariate values.

\begin{figure}[!t]
    \centering
    \includegraphics[width=0.88\linewidth]{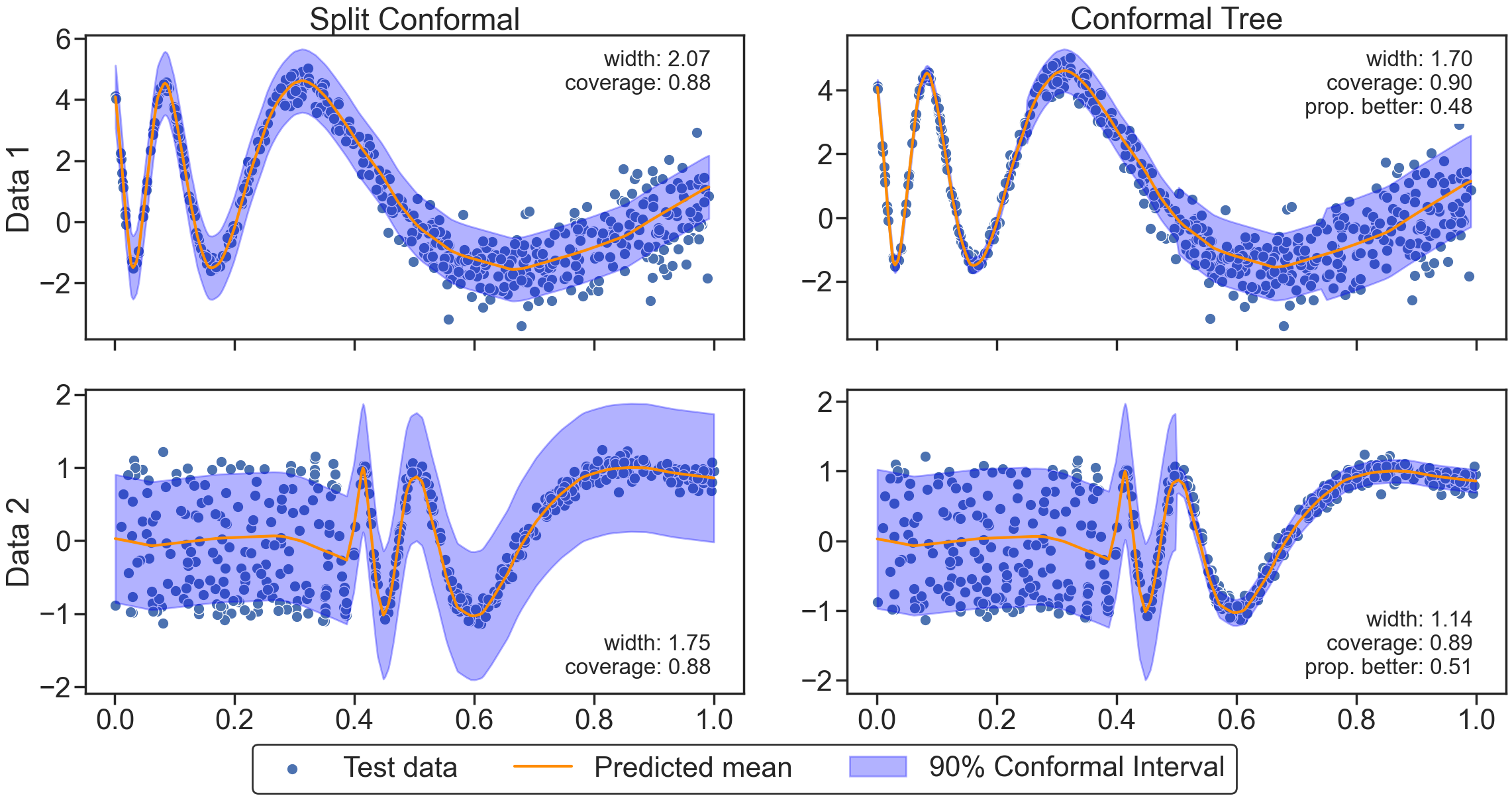}\caption{Conformal Tree can adaptively {reflect locality without} an auxiliary model of quantiles or conformity scores fit on the training dataset. We display the average width, empirical coverage on held-out test data, and proportion better.  The interval corresponds to a single random trial among the 10 random trials.}
    \label{fig:true_data}
\end{figure}

  {We train a regression model $\hat{\mu}$ a fully-connected deep feed forward neural network with a smooth regularization penalty. The trained model behaves as the orange line in Figure \ref{fig:true_data}.} {We synthetically generated datasets of size 1000 and randomly split the data into equally sized calibration and test datasets ten times, and applied each method for each split}. We apply our Conformal Tree method ({$m=100$ and maximum leaves as $5$}) and compare it with the standard split-conformal prediction. Figure \ref{fig:true_data} highlights that Conformal Tree can adaptively reflect local changes without the necessity for an auxiliary model of quantiles or conformity scores on the training dataset. As the partitioning method is robust to the addition of a new iid data point, the obtained prediction intervals with $\alpha=0.1$ still have good coverage at 0.899 for Data 1 and 0.904 for Data 2 when averaged over 10 random trials. For the standard conformal prediction, the average coverage was 0.904 and 0.874, respectively. Furthermore, the average percentage of points for which the interval was narrower is 0.54 for Data 1 and 0.5 for Data 2. 

\subsection{Benchmark Datasets} \label{sec:benchmark}
We consider various real data examples that had previously been used in the literature
for comparing the efficacy of regression prediction intervals \citep{rossellini2023integrating, sesia2020comparison, romano2019conformalized}, {namely, \texttt{bike}, \texttt{bio}, \texttt{community}, \texttt{concrete}, \texttt{homes}, and \texttt{star}. The number of observations in these datasets vary from 200 to about 50,000, and the number of predictor variables ranges from 1 to 100. We also consider the simulation data (Data 1 and Data 2) in the previous section. We randomly split each dataset into training, test, and calibration sets, with proportions 0.3, 0.2, and 0.5 respectively. On the training dataset, a random forest model is fitted as the ``black-box'' predictive model $\hat\mu$. For each dataset, we repeat the entire procedure 5 times, including splitting the data and forming the intervals. We report the average metrics taken over these trials}.

We compare with standard split-conformal prediction \citep{lei2014distribution}, locally-weighted conformal inference {(Locally Weighted)} \citep{lei2014distribution}, and conformalized quantile regression (CQR) \citep{romano2019conformalized}. Locally-weighted conformal inference is similar to our method in that it relies on a locally varying model for the conformity score $\hat{\sigma}:\mathcal{X}\to\mathbb{R}$, which is fit on the training data. We use CART as well as a random forest model for $\hat{\sigma}$ in our experiments. {When applying CQR, we likewise employ two {versions}, where the model for the conditional upper and lower quantiles is a single tree and a forest with 100 trees respectively}. Recall that our setting considers the inaccessibility of the original training data used for training the regression model. Since all of these benchmark algorithms need access to the training data set (or some other set other than the calibration set) to train their auxiliary or grouping functions, we used a random half of the calibration split to train the auxiliary functions and the remaining half for calibrate. \begin{table}
\centering
\resizebox{\columnwidth}{!}{
\begin{tabular}{|l|ccc|cccccc|cccc|cccc|}
\hline
 & \multicolumn{3}{c|}{Split Conformal} & \multicolumn{6}{c|}{Conformal Tree} & \multicolumn{4}{c|}{Locally Weighted (Tree)} & \multicolumn{4}{c|}{CQR (Tree)} \\
 & Width & Cov. & ISL & Width & Cov. & ISL & P.B. & Opt. $m$ & Opt. $K$ & Width & Cov. & ISL & P.B. & Width & Cov. & ISL & P.B. \\
\hline
Data 1 ($n=499$) & 4.29 & 0.92 & 5.28 & \textbf{3.5} & 0.86 & 4.66 & \textbf{0.59} & 20 & 4 & 3.52 & 0.88 & \textbf{4.48} & 0.54 & 4.84 & 0.93 & 5.77 & 0.0 \\
Data 2 ($n=499$) & \textbf{2.93} & 0.91 & 6.27 & 2.96 & 0.86 & \textbf{3.98} & \textbf{0.89} & 20 & 8 & 3.53 & 0.91 & 4.39 & 0.54 & 8.4 & 0.94 & 10.23 & 0.0 \\
bike ($n=5443$) & 1.37 & 0.9 & 2.17 & 1.33 & 0.88 & 2.06 & 0.61 & 20 & 64 & \textbf{1.22} & 0.89 & \textbf{1.52} & \textbf{0.67} & 2.67 & 0.89 & 4.15 & 0.0 \\
bio ($n=22864$) & 1.7 & 0.9 & 2.27 & 1.66 & 0.89 & \textbf{2.24} & \textbf{0.67} & 20 & 64 & \textbf{1.63} & 0.9 & 2.25 & 0.57 & 3.37 & 0.89 & 3.93 & 0.0 \\
community ($n=997$) & 1.9 & 0.89 & 2.89 & 1.78 & 0.89 & 2.8 & 0.55 & 20 & 32 & \textbf{1.77} & 0.9 & \textbf{2.46} & \textbf{0.59} & 3.6 & 0.9 & 5.09 & 0.0 \\
concrete ($n=514$) & 0.68 & 0.92 & 0.85 & \textbf{0.61} & 0.89 & \textbf{0.79} & \textbf{0.72} & 20 & 8 & 0.7 & 0.93 & 0.84 & 0.21 & 1.47 & 0.91 & 1.78 & 0.0 \\
homes ($n=10807$) & \textbf{0.62} & 0.91 & 1.18 & 0.64 & 0.9 & 1.03 & \textbf{0.7} & 20 & 32 & 0.7 & 0.9 & \textbf{0.95} & 0.63 & 1.19 & 0.9 & 2.18 & 0.0 \\
star ($n=1080$) & \textbf{0.17} & 0.89 & \textbf{0.22} & \textbf{0.17} & 0.87 & 0.23 & \textbf{0.67} & 20 & 4 & \textbf{0.17} & 0.9 & 0.23 & 0.21 & 0.29 & 0.91 & 0.33 & 0.0 \\
\hline
\end{tabular}
}
\caption{Empirical results on real and synthetic data for tree-based localization methods for $\alpha=0.1$. We display the average over five trials of the width, coverage, interval score length (ISL) and proportion of test points with intervals smaller than split conformal (P.B.). We display the calibration set size as $n$. Hyperparameters $m$ and ${K_{\rm max}}$ are tuned for each dataset and for each method by minimizing ISL on held-out validation data over a coarse grid.} \label{table:tuned-results}
\end{table}

{In addition, for all methods, we further partition one fifth of the remaining calibration set for each method to be used to tune hyperparameters. We tune the minimum samples per leaf, the maximum leaf nodes, and the number of trees (when applicable) for each method, by selecting the combination of hyperparameters that yield the minimal interval score loss \citep{Gneiting01032007} when evaluated on the held-out tuning data.} {Interval score loss (ISL) with threshold $\alpha$ is a proper scoring rule for prediction intervals that is minimized by the true $\alpha/2$ and $1-\alpha/2$ conditional quantiles \citep{Gneiting01032007} and has been used to evaluate the performance of conformal prediction methods \citep{rossellini2023integrating}. The ISL for observation $y$ given the $1-\alpha$ prediction interval $[\ell, u]$ is given by $(u - \ell) + \frac{2}{\alpha}(\ell - y) \mathbb{I}\{y < \ell\} + \frac{2}{\alpha}(y - u) \mathbb{I}\{y > u\}$.}
In Appendix \ref{sec:add_vis}, we also compare with other similar methods involving the use of regression trees and forests such as \citet{izbicki2022cd}, \citet{cabezas2025regression}, \citet{martinez2024identifying}, which find data-driven groups on additional datasets as well.

\begin{table}
\centering
\resizebox{\columnwidth}{!}{
\begin{tabular}{|l|ccc|cccc|cccc|cccc|}
\hline
 & \multicolumn{3}{c|}{Split Conformal} & \multicolumn{4}{c|}{Conformal Forest} & \multicolumn{4}{c|}{Locally Weighted (Forest)} & \multicolumn{4}{c|}{CQR (Forest)} \\
 & Width & Cov. & ISL & Width & Cov. & ISL & P.B. & Width & Cov. & ISL & P.B. & Width & Cov. & ISL & P.B. \\
\hline
Data 1 ($n=499$) & 4.2 & 0.89 & 5.8 & 4.12 & 0.95 & 4.57 & 0.51 & 3.71 & 0.92 & 4.53 & 0.48 & \textbf{3.67} & 0.95 & \textbf{4.18} & \textbf{0.65} \\
Data 2 ($n=499$) & \textbf{2.81} & 0.87 & 6.22 & 3.05 & 0.86 & \textbf{4.09} & \textbf{0.91} & 3.1 & 0.86 & 5.31 & 0.71 & 4.06 & 0.86 & 5.38 & 0.76 \\
bike ($n=5443$) & 1.38 & 0.9 & 2.15 & 1.14 & 0.93 & \textbf{1.28} & \textbf{0.72} & \textbf{1.1} & 0.9 & 1.64 & 0.7 & 1.46 & 0.89 & 1.62 & 0.49 \\
bio ($n=22864$) & 1.71 & 0.9 & 2.27 & 1.52 & 0.92 & \textbf{1.95} & 0.62 & \textbf{1.43} & 0.9 & 2.02 & \textbf{0.69} & 1.74 & 0.89 & 1.98 & 0.33 \\
community ($n=997$) & 1.84 & 0.88 & 2.97 & \textbf{1.64} & 0.9 & \textbf{2.39} & \textbf{0.66} & 1.66 & 0.91 & 2.41 & 0.63 & 1.67 & 0.9 & \textbf{2.39} & 0.59 \\
concrete ($n=514$) & 0.72 & 0.93 & 0.85 & 0.7 & 0.92 & 0.84 & 0.66 & \textbf{0.66} & 0.93 & \textbf{0.81} & \textbf{0.78} & 1.13 & 0.9 & 1.25 & 0.0 \\
homes ($n=10807$) & 0.63 & 0.91 & 1.25 & 0.64 & 0.93 & \textbf{0.85} & 0.69 & \textbf{0.57} & 0.9 & 0.89 & \textbf{0.74} & 0.76 & 0.9 & 1.01 & 0.6 \\
star ($n=1080$) & \textbf{0.18} & 0.92 & \textbf{0.2} & 0.19 & 0.94 & \textbf{0.2} & 0.23 & \textbf{0.18} & 0.92 & \textbf{0.2} & \textbf{0.46} & \textbf{0.18} & 0.92 & 0.21 & 0.36 \\
\hline
\end{tabular}
}
\caption{Empirical results on real and synthetic data for forest-based localization methods for $\alpha=0.1$. We display the average over five trials of the width, coverage, interval score length (ISL) and proportion of test points with intervals smaller than split conformal (P.B.). We display the calibration set size as $n$. Hyperparameters $m$ and ${K_{\rm max}}$ are tuned for each dataset and for each method by minimizing ISL on held-out validation data. We use 100 trees for all forest-based methods.} \label{table:tuned-results2}
\end{table}

The results for $\alpha=0.1$ are displayed in Table \ref{table:tuned-results}, which clearly displays the tightening effect of Conformal Tree. It shows that the average interval widths and interval score loss values are competitive with the other tree-based methods. Moreover, Conformal Tree has the highest proportion of test points with intervals smaller than split conformal on all datasets except two (\texttt{bike} and \texttt{community}). Tables \ref{table:tuned-results} also shows the average empirical coverage on the held-out test data for each method. We observe that each method {nearly} achieves the desired coverage level of 0.9. While our method shows slight undercoverage relative to the nominal level $1-\alpha$, it is still within our theoretical bounds of $1-\alpha-\delta$. {In Table \ref{table:tuned-results2}, we also display the results of a variant of our method that uses a random forest of 100 CART trees (with bagging and random feature selection) for partitioning. We aggregate these prediction sets using a majority vote procedure (See Remark \ref{rm:forest}). We find this method to yield the lowest interval score length among forest-based procedures on the majority of datasets in our comparison, excluding Data 1 and \texttt{concrete}. Recall that this procedure only promises coverage up to $1-2\alpha-2\delta$, though we see coverage above 0.9 in all but one of our datasets. We note that this method sacrifices interpretability in exchange for a more flexible model for the conditional conformity score.}

\section{Large Language Model Uncertainty Quantification}\label{sec:LLM}

  {In this section, we apply Conformal Tree to quantify the uncertainty in the output from a black-box large language model (LLM) for two downstream classification tasks. Conformal Tree uses a set of calibration data in order to partition the input space into regions where the conformity score is relatively homogenous, and yields conditional conformal guarantees in each one. These regions do not need to be prespecified by the user, which can be especially useful in cases where it is difficult for an uninformed user to designate a reasonable partition \emph{a priori}. We consider predicting the U.S. state of a legislator based on a measure of their ideology in voting for bills, as well as a prediction of skin disease from a list of clinical symptoms. The latter task is meant to emulate a user asking a language model for medical advice, which we do not advocate for but rather view as a domain in which locally adaptive uncertainty quantification would be paramount. We view the LLM models as a deterministic model (with the temperature set to 0), so they provide an actual fit as opposed to prediction involving noise. For classification, we use the complement of the predicted class probability as the conformity score, {as in \citep{angelopoulos2020uncertainty,romano2020classification,sadinle2019least}}. Our adaptive approach enables  generative  LLM prediction sets to stretch and shrink based on the informativeness of the prompt.

\subsection{Classifying States of Legislators using an LLM}\label{sec:legislators}

Large language models have become extremely popular generative AI tools, allowing users to generate completions of text or to generate responses to questions or instructions \citep{long2022training}. In the current era of AI as-a-service, many performant LLMs are hosted on servers, to which the user submits a query and receives either next token probabilities or the sampled subsequent tokens themselves. In this case, the user does not have access to the models' training data, and cannot hope to apply most typical methods for local adaptivity. Trained on vast quantities of training data, state-of-the-art LLMs has been demonstrated to have an {impressive} ``understanding'' of the U.S. political atmosphere, proving capable of assigning scores or pairwise comparisons between perceived ideology of legislators that correlates highly with measures estimated from their voting history \citep{wu2023large, ohagan2024measurement}. We posit a related question in this work, questioning how well an LLM can predict the state in which a legislator serves from only their DW-NOMINATE score \citep{poole1985spatial}, a two-dimensional representation of their ideology based on voting data.

Thus, we consider a black-box model $f:\mathbb{R}^{2}\to \mathcal{S}^{50}$, where $\mathcal{S}^{50}$ denotes a $50$-dimensional simplex, or a probability measure over U.S. states. In our case, $\hat{f}$ is the resulting probabilities assigned by an LLM when conditioning on a context that gives a specific DW-NOMINATE score, and asks where a legislator with that score would be likely to be employed. While the relationship between DW-NOMINATE and state could be alternatively studied using empirical data, the predictions from an LLM represent something else entirely-- rather than the relationship that manifests itself in historical data, they represent the ``communal opinion'', or ``Zeitgeist'', about the relationship between DW-NOMINATE, as distilled through the lengthy training process of the language model. However, social scientists may be weary of how much they can trust the output of the LLM for such a specific downstream task, given its black-box nature, which motivates a practitioner to look for conformal sets of U.S. states that are guaranteed to contain the true state of the legislator with a desired probability level.

\begin{figure}[!t]
    \centering
    \includegraphics[width=\textwidth]{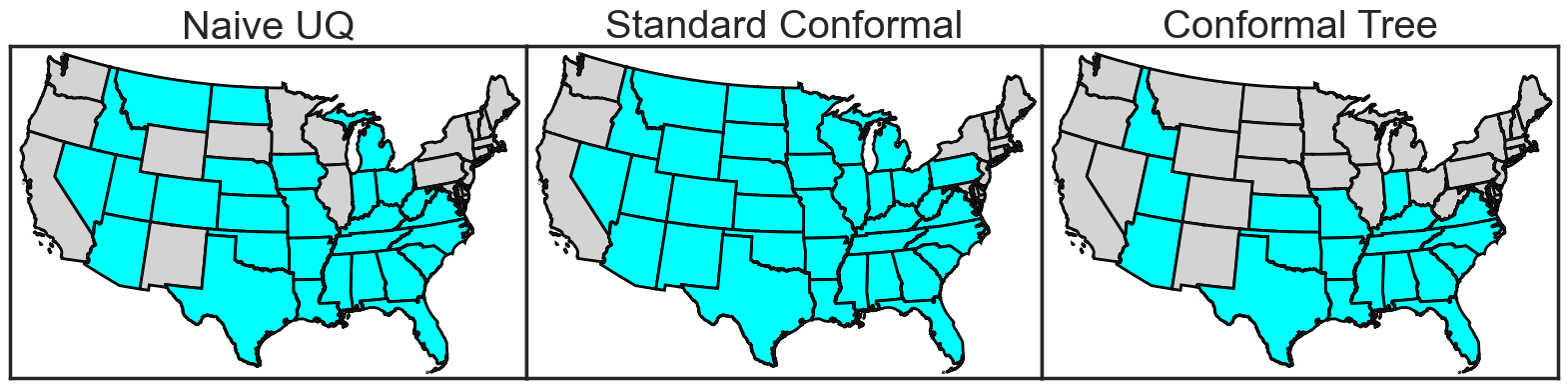}
    \caption{Example of conformal sets for a test point with DW-NOMINATE coordinates $(0.647,0.033)$, corresponding to a representative from Indiana. Conformal tree is able to shrink the size of the conformal set for test points in this region of DW-NOMINATE space.}
    \label{fig:gpt-example}
\end{figure}

For this classification task, we consider the conformity score $S(\bm{x},y)=1-f(\bm x)_y$, where $\bm{x}\in\mathbb{R}^2$ denotes a DW-NOMINATE score and $y\in [50]$ denotes a U.S. state index. A conformal prediction set is defined by $\wh{C}(\bx)=\{k:f(\bx)_k \geq 1-q\}$, where $q$ is the $1-\alpha$ quantile of $\{1-f(\bx)_y\}$. The \emph{mean prediction set size} of a conformal rule is given by $\sum_{i\in I_{\mathrm{test}}}|\wh{C}(\bx_i)|/|I_{\mathrm{test}}|$, which measures the average size of conformity sets over each test point. We can also look at the class-wise average set size for any state $s$ given by $\sum_{i\in I_{\mathrm{test}}}\mathbb{I}\{y_i=s\}\cdot |\wh{C}(\bx_i)|/\sum_{i\in I_{\mathrm{test}}}\mathbb{I}\{y_i=s\}$. For these metrics, a lower value is better.

We aggregated historical data on U.S. legislators in congress from the $100^{\textrm{th}}$ to $117^{\textrm{th}}$ congress, which contains 1809 unique legislators. We split this dataset into 80\% of legislators to be used as calibration data, and 20\% to evaluate the performance of our technique. As the underlying prediction model $f:\mathbb{R}^2\to \mathcal{S}^{50}$ is a specific downstream task of a pre-trained generative model, we do not have access to the training data. For each legislator, our data includes their DW-NOMINATE score and the U.S. state that they serve. For each legislator, we feed their DW-NOMINATE score to the language model and obtain U.S. state probability assignments. GPT-4o correctly identifies the correct state 8.3\% of the time, compared to 9.4\% for a random forest classifier, and 3.7\% for a random classifier that guesses with probability equal to that of the underlying dataset.

We apply our Conformal Tree methodology, partitioning the two-dimensional space of DW-NOMINATE scores using a dyadic tree that captures local variations in the conformity score. Our conformal prediction sets have different probability thresholds in each tree leaf, causing local variations in the sizes of the conformal sets that reflect the variation in the calibration data. {This type of variation reflects quality (informativeness) of  prompts for uncertainty quantification.} {In our experiments, we set $m=200$ and ${K_{\rm max}}=7$, where seven is the largest value possible within the restriction ($K_{\rm max}< (n+1)/m$).}

\begin{figure}[!t]
    \centering
    \includegraphics[width=0.56\textwidth]{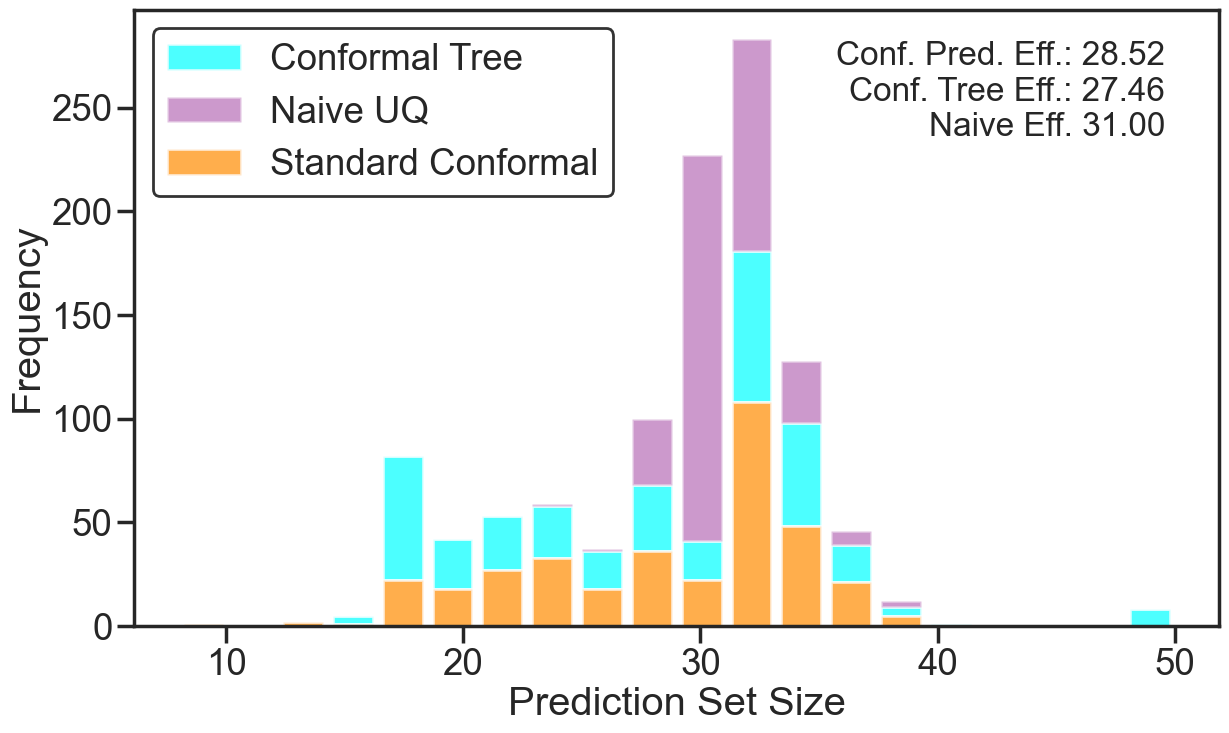}
    \includegraphics[width=0.41\linewidth]{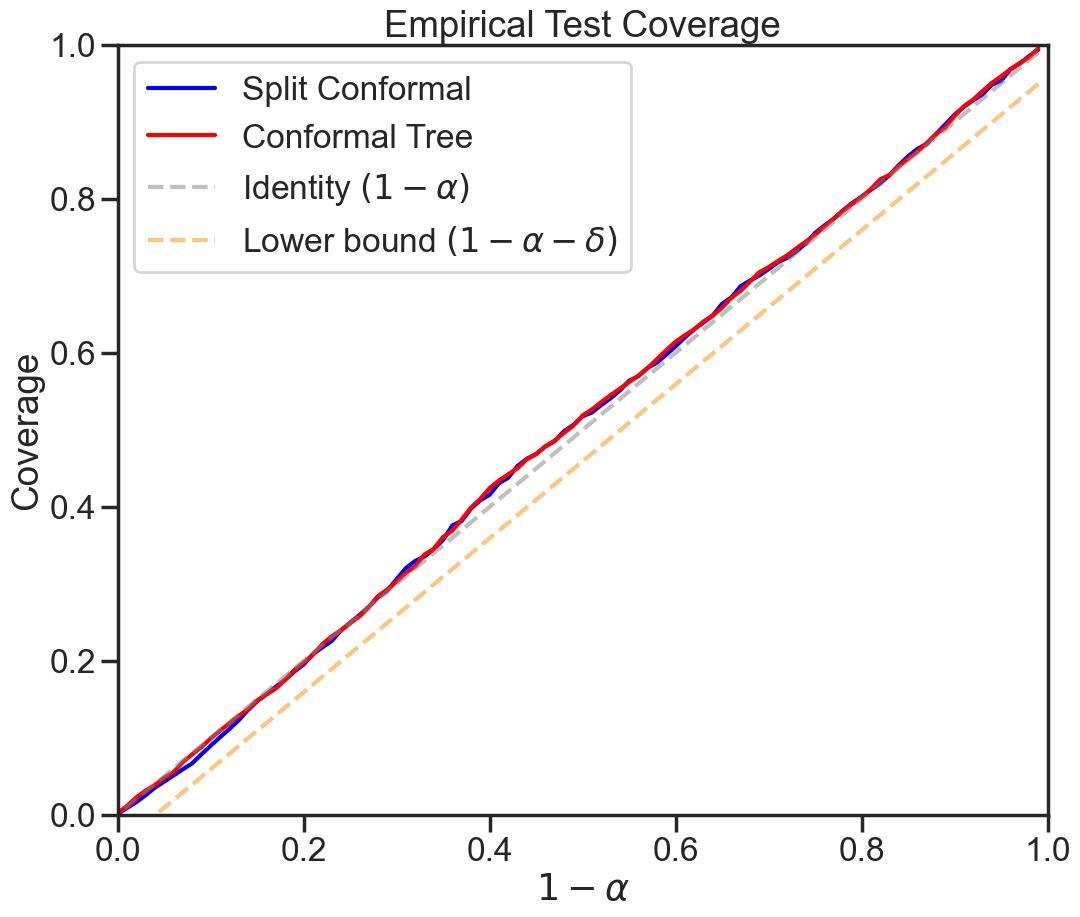}
    \caption{(Left) Histogram of conformal set sizes for U.S. state classification from ideology using GPT-4o using Conformal Tree, standard split conformal prediction, and naive UQ for $\alpha=0.2$. (Right) Empirical test coverage of split-conformal prediction and Conformal Tree on heldout test data, for varying $\alpha$ levels and $m=200$. Coverage is on heldout data averaged over ten folds. }
    \label{fig:gpt-4o-efficiency}
\end{figure}

Figure \ref{fig:gpt-example} showcases an example of 80\% prediction sets for an example test point, in this case corresponding to a U.S. representative from {Indiana}. Both standard split-conformal prediction and Conformal Tree yield 80\% prediction sets that contain Indiana for this test point, but the size of the the set decreases from 35 states for split-conformal prediction to 19 states for Conformal Tree. This is because in this region of DW-NOMINATE space, we are relatively more confident in the output of the black-box model, allowing us to locally increase the threshold $q$ and obtain smaller conformal sets in this region. 

\begin{figure}[!t]
    \centering
    \includegraphics[width=\textwidth]{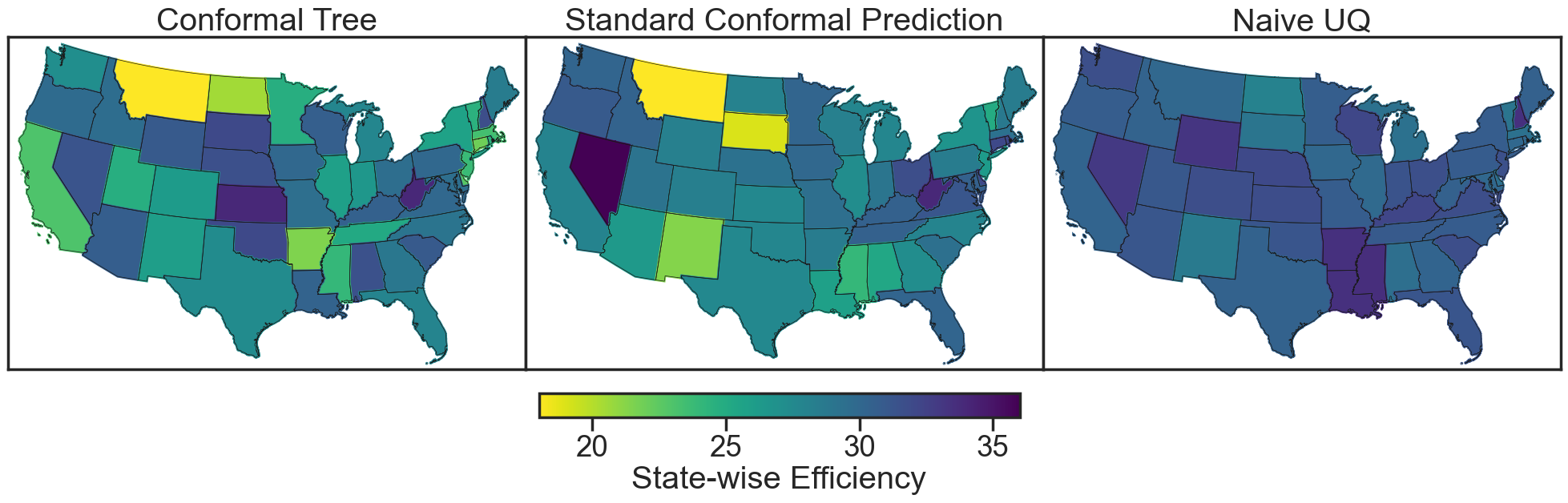}
    \caption{State-wise average prediction set sizes for each U.S. state. Lower is better. Conformal tree reduces the median state-wise prediction set size by 4.0 when compared to standard split-conformal prediction.}
    \label{fig:statewise-eff}
\end{figure}

In Figure \ref{fig:gpt-4o-efficiency} (a), we fix a threshold of $\alpha=0.2$ and use Conformal Tree as well as standard split-conformal prediction to obtain 80\% prediction sets for the state of every legislator in the test set. We plot histograms of the sizes of these prediction sets, observing that Conformal Tree yields a multimodal distribution over sizes, yielding more examples with smaller prediction sets. In fact, on 89.2\% of test examples, Conformal Tree's prediction set is the same size or smaller than that of standard split-conformal prediction. We also observe that the average prediction set size for Conformal Tree is lower, meaning that its average prediction set size is lower by more than one state. Its median prediction set size is lower by four states. Figure \ref{fig:gpt-4o-efficiency} (b) shows that Conformal Tree maintains the desired coverage level for any $\alpha\in(0,1)$. In Figure \ref{fig:statewise-eff}, we plot the state-wise average prediction set sizes, representing the average conformal set size for test legislators from that particular state. While Conformal Tree is less efficient in more difficult to classify states, like Kansas and Maine, it is relatively more efficient in many states, with an improvement in the median state-wise prediction set size of 0.3. 
\subsubsection{Naive Uncertainty Quantification with ChatGPT}
\label{sec:naive}
We also compare our calibrated conformal prediction sets against naive prediction sets based on ChatGPT's own assessment of its uncertainty. In this case, we set the temperature hyperparameter to one, meaning that the language model's response sample each subsequent token from the softmax distribution rather than deterministically choosing the highest probability token. This allows us to sample distributions on labels for each test point. For each sampled distribution $q\in\mathcal{S}^L$, {where $L$ is the cardinality of the set of labels}, we compute a naive prediction set by including the largest probability items until at least $1-\alpha$ is covered. That is, we can define a threshold 
$
\tau=\inf\{t:\sum_{l:q_l\geq t}q_l\geq 1-\alpha\}
$
and define the prediction set corresponding to $q$ as
\begin{equation}
    \tilde{C}(q) = \{l\in[L]:q_l \geq \tau\}\,.
\end{equation}

Because each individual sampled $q$ for a given test point involves randomness from the model, we can combine sampled distributions $q^1,\ldots,q^M$ for a given test point using a majority vote procedure
\begin{equation}
    \hat{C}=\{l\in[L]:\sum_{j=1}^M \mathbb{I}\{l\in\tilde{C}(q^j)\}\geq M/2\}\,.
\end{equation}
The conformal set $\hat{C}$ is a final prediction set for the given test point that combines sampled naive prediction sets that are based on the model's own quantification of its uncertainty. For our comparisons, we set $M=11$. 

In our previous example classifying U.S. states from DW-NOMINATE scores, the average size of these naive prediction sets for each method was 36.36 items, significantly larger than that of Conformal Tree (27.67).
 
\subsection{Conformalizing ChatGPT Diagnoses of Skin Diseases}\label{sec:skin_ex}


One significant yet controversial use of ChatGPT lies in healthcare, where people have the option to use it to obtain preliminary diagnoses of their medical conditions based on physical descriptions and perceived symptoms \citep{Garg2023}. Previous studies have found ChatGPT to be effective at creating shortlists of possible diagnoses based on clinical vignettes from case reports \citep{shieh2024assessingChatGPT} and to be an effective self-diagnostic tool for common orthopedic diseases \citep{Kuroiwa2023}. Uncertainty quantification of these black-box generative models is important for its potential future use in medical diagnostics, but also addresses the current reality that people may use these tools for self-diagnosis, despite potential risks. 

While early investigations into the diagnostic capabilities of LLMs like ChatGPT are underway, it is crucial for users to have a means to quantify the uncertainty associated with these diagnoses. Given that these black-box models can be difficult to understand and trust, Conformal Tree provides conformal prediction sets for ChatGPT as diagnoses, with locally varying thresholds to represent different levels of confidence for different groups of patient characteristics. This allows users to have a statistical guarantee on the coverage of the generated diagnoses, offering a layer of reliability in a domain where precision would be paramount.

\begin{figure}
    \centering
    \begin{tikzpicture}

    \node[draw, text width=0.25\linewidth, align=left] (patient) at (0, 0) {
        \textbf{Patient Covariates} \\
        Age: 52 \\
        Redness: 3/3 \\
        Itchiness: 3/3 \\
        On knees and elbows
    };

    \draw[->] (patient) -- (3.5, 0);

    \node[draw, text width=0.25\linewidth, align=left, fill=green!10] (llm) at (6, 0) {
        \textbf{LLM} \\
        I am \textbf{52} years old and 
        have a \textbf{red}, \textbf{very itchy} 
        rash \textbf{all over my knees 
        and elbows}, what is 
        my diagnosis?
    };

    \draw[->] (llm) -- (9.5, 0);

    \node[draw, text width=0.2\linewidth, align=left, fill=purple!10] (diagnosis) at (11.5, 0) {
        \textbf{Diagnosis} \\
        Psoriasis: 0.7 \\
        Seborrheic \\
        dermatitis: 0.25 \\
        Chronic \\
        dermatitis: 0.05
    };

    \end{tikzpicture}
    \caption{Obtaining diagnoses by asking a large language model. Exact list of covariates and prompt used are available in Section \ref{sec:full_skin_detail}.}
    \label{fig:flowchart}
\end{figure}

We consider the erythemato squamous disease diagnosis task based on patient characteristics and physical descriptions \citep{misc_dermatology_33}, which consists of 366 observations, 35 features. Erythemato squamous disease is a class of common skin diseases, that can manifest as rashes, redness, or scaly patches in the skin. The disease is primarily categorized in six different forms. We consider an example of using generative AI for self-diagnosis by applying GPT-4o to give diagnoses for the type of erythemato squamous disease present based only on physical characteristics that a patient could report from an inspection. The process by which self-diagnoses are obtained from the language model is outlined in Figure \ref{fig:flowchart}, with full details given in Section \ref{sec:full_skin_detail}. We used the 12 features from patients, comprised of a mixture of continuous, ordinal, and binary variables. We ignore the remaining 23 histopathological features, as the purpose of this experiment is to emulate a user describing physical symptoms to a language model. 

\begin{figure}[!t]
        \includegraphics[width=\linewidth]{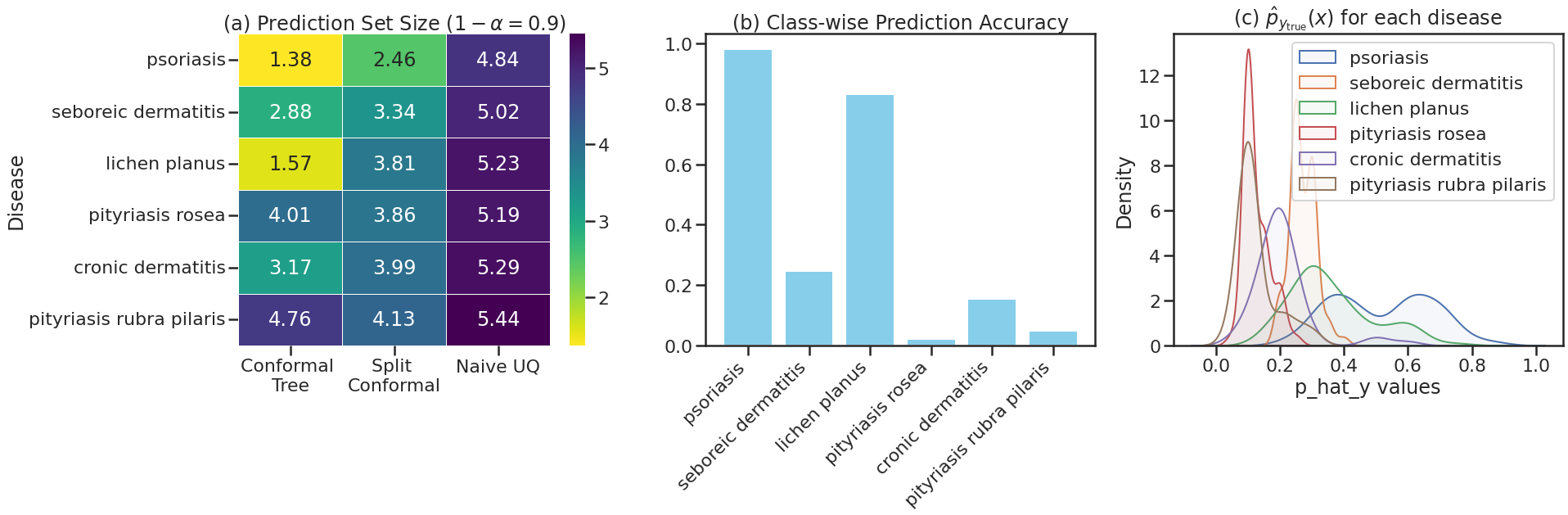}
    \caption{(a): Comparison of prediction set sizes for erythemato squamous disease using GPT-4o. 
    (b) Class-wise prediction accuracy of ChatGPT. (c) The prediction confidence of ChatGPT on the true label. The ChatGPT performance order among labels in (b) and (c) aligns with the tightness order in (a) of Conformal Tree results.} 
    \label{fig:combined_derm}
\end{figure}
On a heldout set of 20\% of the data, the disease which ChatGPT assigns highest probability is the correct disease 51\% of the time, which is significantly better than a random classifier baseline, which guesses each class with probability proportional to its underlying probability in the data, which would achieve only a 1/6 accuracy. However, as one might expect, ChatGPT is outperformed by a model specifically trained for this task, as a random forest model achieves 85\% accuracy on the same heldout test data. We applied Conformal Tree to obtain 90\% prediction sets with $m=10$ and ${K_{\rm max}}=20$, as well as the Split Conformal prediction and Naive UQ. For conformal predictions, we used 60\% of the data to calibrate our method, and heldout the remaining 40\% in order to test it. }{As the prediction probability vectors from ChatGPT tend to be limited to a certain set of values, we added a tiny random perturbation of uniform(0.0001, 0.001) to break the tie and normalized (see, \citet{tibshirani2019conformal}).}

Figure \ref{fig:combined_derm} (a) shows the class-wise average prediction set sizes for Conformal Tree compared to standard split-conformal prediction. To explain the interpretability of this result, we also show additional information in Figures \ref{fig:combined_derm} (b) and (c), which represent the performance of ChatGPT in terms of prediction accuracy and confidence, respectively. Now, we can see that the predicted set size in Figure \ref{fig:combined_derm} (a) is smallest for psoriasis and then for lichen planus, seboreic dermatitis, and chronic dermatitis, which aligns the order of the prediction accuracy and confidence observed in Figure \ref{fig:combined_derm} (b) and (c).  In terms of overall prediction efficiency, we draw the frequency plot in Figure \ref{fig:combined_derm_additional} in Section \ref{sec:add_vis}, which shows the frequency of a single element set is the highest for Conformal Tree.


\begin{figure}
    \centering
    \includegraphics[width=\linewidth]{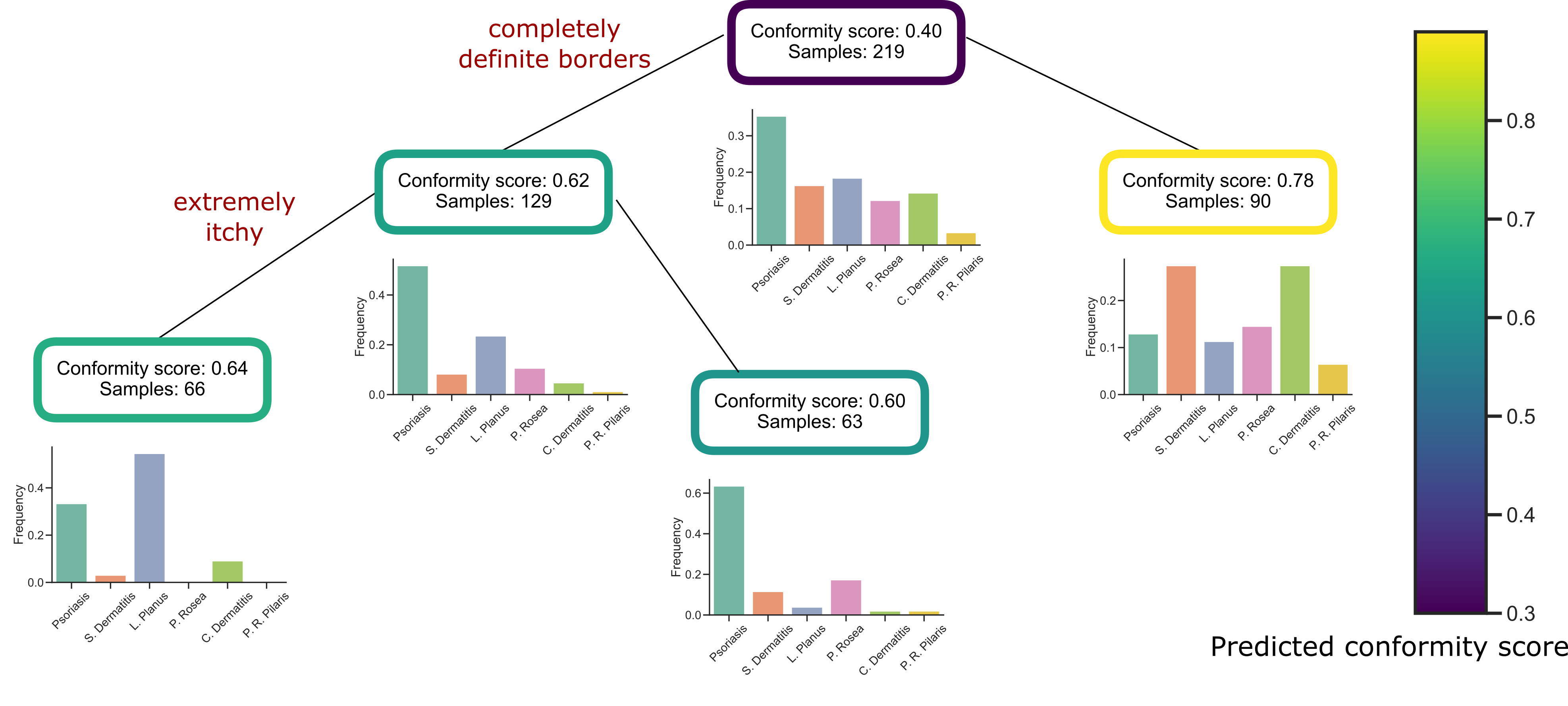}
    \caption{Conformal Tree's full structure on the skin disease conformal data, with $m=50$ and $K_{max}=4$. Node colors correspond to the mid-range of conformity scores (of calibration data) within the node. Below each node, we plot the empirical distribution of calibration data corresponding to each disease within that node. The tree and partition correspond to that of Figure \ref{fig:boxwise_vis} in Section \ref{sec:add_vis}. 
    }
    \label{fig:derm-tree-structure2}
\end{figure}

{For a more interpretable presentation of the results, we intentionally set $m=50$ to obtain a tree with a smaller number of partition boxes.} In Figure \ref{fig:derm-tree-structure2}, we visualize each level of Conformal Tree's structure on the skin disease calibration data. This plot emphasize the interpretability of our method by showing the precise splits that determine each leaf node, in which group conditional coverage is achieved. In Figure \ref{fig:boxwise_vis} in Section \ref{sec:add_vis}, we visualize the empirical coverage and threshold for decreasing $\alpha$ on each of the resulting partitions, obtained by applying Conformal Tree on one realization of the split of the data. The result is very interpretable. In Box 3, the threshold values of our method (orange color) are significantly smaller than those of split conformal prediction. Interestingly, Box 3 does not contain any data points of pityriasis rosea and pityriasis rubra pilaris, for which the ChatGPT shows the lowest performance. On the other hand, in Box 1, the thresholds of our method are set large even for the largest $\alpha$ (and increasing slightly as $\alpha$ decreases), as the pityriasis groups are heavily present in this box. The estimated coverage in Figure \ref{fig:boxwise_vis} can be improved if we had more test data points. It is implied by the dramatic difference in Figure \ref{fig:emp-coverage-test} (a) and (b) in Section \ref{sec:add_vis} that coverage estimation requires a large number of test data points. { In Figure \ref{fig:emp-coverage-test}, we show empirical test coverage for split-conformal and Conformal Tree for varying $1-\alpha$ values in the GPT diagnosis example. 
When the size of data is artificially increased by multiple random splits (see Section \ref{sec:add_vis} for details), the estimation quality improves. Note that, however, even without multiple splits (Figure \ref{fig:emp-coverage-test} (a)), the estimated coverage for conformal tree achieves the desired coverage level across the values of $\alpha$, staying within the theoretical bounds.} Note that our methodology does not guarantee class-wise coverage for each true class but box-wise coverage. For the tree structure of $m=20$, we also provide visualization of the interpretable tree-splitting process in Figure \ref{fig:derm-tree-structure} in Section \ref{sec:add_vis}.

\section{Concluding Remarks}\label{sec:concluding}

Conformal prediction serves to provide some statistical clarity in the otherwise uncharted waters of modern black-box deep learning. Conformal tree allows for local adaptation and conditional coverage guarantees based on a self-grouping procedure that is robust to the inclusion of an additional {i.i.d.} test point. This can be a valuable tool for quantifying uncertainty in black-box learning applications where the model  outputs a single point prediction using training data that has   been obscured from the end user. Future work may include extending the framework beyond trees to other classes of calibration models.

\bibliographystyle{abbrvnat}
\typeout{}
\bibliography{ref}


\newpage
\setcounter{page}{1}

\begin{appendices}

\begin{abstract}
This supplementary material contains the proof of all results as well as auxiliary lemmata in Section \ref{sec:proofs}. Section \ref{sec:tighter} includes a description of a modified algorithm that achieves tighter theoretical  guarantees at the cost of making computation scale linearly in the number of test points. Section \ref{sec:hyp_sense} conducts sensitivity analyses for the main hyperparameters $K_{\rm max}$ and $m$. In Section \ref{sec:exp-info}, we provide implementation details for the experiments and simulations present in the article, including data sources and preprocessing information, hyperparameter choices, and LLM prompt information. We also include additional experiments and visualizations referred to in the text in Section \ref{sec:add_vis}.
\end{abstract}

\startcontents[appendix]
\printcontents[appendix]{l}{1}{\section*{Supplementary Material Table of Contents}}

\section{Proofs of Theorems}
\label{sec:proofs}

First, we provide a definition of the rank of a random variable within a set of random variables that handles ties in such a way that for an exchangeable set of random variables, the rank of any particular random variable {(even discrete)} is uniformly distributed.

\begin{defn}
    Given a set of random variables $S_1,\ldots,S_n$, the rank of $S_i$ is defined by 
    \[
    R(S_i,\{S_1,\ldots,S_n\})= 1 + \sum_{j=1}^n \mathbb{I}\{S_j < S_i \} + U_i
    \]
    where $U_i\sim\mathrm{Unif}\{0,1,2,\ldots,T\}$ where $T=\sum_{j\neq i} \mathbb{I}\{S_j=S_i\}$. In other words, in the case of ties, ranks are assigned uniformly at random between the possible positions within the tie.
\end{defn}

\subsection{Technical Lemmata}\label{sec:tech}
The following auxiliary results play a key role in proving the lemmata and theorems in the main body. The first lemma (Lemma \ref{lem:exch}) will be useful when we need an exchangeability conditioned on the fitted tree. The second lemma (Lemma \ref{lem:foundation}) establishes a conditional conformal guarantee when a fixed partition is given, which will be used as a building block for the cases later the partitions are estimated.
\begin{lemma}
\label{lem:exch}
Let $(X_1,S_1),\ldots,(X_{n+1},S_{n+1})$ be i.i.d. from any joint probability distribution, where $(X_1,S_1)$ is supported on a subset of $\mathcal{X}\times \mathbb{R}$. Consider a function $f:(\mathcal{X}\times \mathbb{R})^{n+1}\to E$, where $E$ is an arbitrary set. We say that $f$ is \emph{symmetric} in $[n+1]$ if for any element of the domain $(X_1,S_1),\ldots,(X_{n+1},S_{n+1})$ and permutation $\sigma:[n+1]\to[n+1]$, $
    f((X_1,S_1),\ldots,(X_{n+1},S_{n+1}))=f((X_{\sigma(1)},S_{\sigma(1)}),\ldots,(X_{\sigma(n+1)}, S_{\sigma(n+1)})).$ Define a summary statistic $T = f((X_1,S_1),\ldots,(X_{n+1},S_{n+1}))$ and  let $I(T)$ be a subset of $[n+1]$ that can depend on $T$. Then we have 
    
    (i) If $f$ is symmetric in $[n+1]$, then $Z_i\mid T$ for $i\in I(T)$ are exchangeable for any $I(T)$.

    (ii) Otherwise, there exists $I(T)$ for which $Z_i\mid T$ for $i\in I(T)$ are not exchangeable.
\end{lemma}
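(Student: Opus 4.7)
My plan is to exploit a direct correspondence between the pointwise symmetry of $f$ and the conditional exchangeability of the vector $\mathbf{Z}=(Z_1,\ldots,Z_{n+1})$ (with $Z_i=(X_i,S_i)$) given $T=f(\mathbf{Z})$. For a permutation $\pi$ of $[n+1]$, I will write $\pi\cdot\mathbf{Z}=(Z_{\pi(1)},\ldots,Z_{\pi(n+1)})$.

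For part (i), the argument is short. Since the $Z_i$ are i.i.d., $\pi\cdot\mathbf{Z}\stackrel{d}{=}\mathbf{Z}$, and symmetry of $f$ yields $f(\pi\cdot\mathbf{Z})=f(\mathbf{Z})$ pointwise. Chaining these two facts,
\[
(\pi\cdot\mathbf{Z},\,T)=(\pi\cdot\mathbf{Z},\,f(\pi\cdot\mathbf{Z}))\stackrel{d}{=}(\mathbf{Z},\,f(\mathbf{Z}))=(\mathbf{Z},T),
\]
so $\mathbf{Z}\mid T$ is exchangeable. Conditional on $T=t$, the index set $I(t)$ is deterministic, and any sub-collection of a jointly exchangeable family remains exchangeable, which closes part (i).

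For part (ii), I argue contrapositively, choosing the simplest witnessing set $I(T)=[n+1]$. Suppose $\mathbf{Z}\mid T$ is exchangeable under some permutation $\pi$, equivalently $\E[g(\mathbf{Z})h(T)]=\E[g(\pi\cdot\mathbf{Z})h(T)]$ for every bounded measurable $g,h$. After substituting $T=f(\mathbf{Z})$, I apply the i.i.d. exchangeability of $\mathbf{Z}$ to the right-hand side (relabelling via $\pi^{-1}$) to rewrite the identity as
\[
\E\bigl[g(\mathbf{Z})\,h(f(\mathbf{Z}))\bigr]=\E\bigl[g(\mathbf{Z})\,h(f(\pi^{-1}\cdot\mathbf{Z}))\bigr].
\]
Letting $g,h$ range over indicators of measurable sets then forces $f(\mathbf{Z})=f(\pi^{-1}\cdot\mathbf{Z})$ almost surely. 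If $f$ is not symmetric, I select a $\pi$ witnessing the failure on a positive-probability event, which contradicts conditional exchangeability under that $\pi$ and hence delivers the claim.

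The main obstacle I anticipate is the gap between pointwise asymmetry of $f$ and asymmetry on a $\mathbb{P}$-positive event: if $f$ is non-invariant under the offending $\pi$ only on a null set, the above reasoning collapses. I would address this by reading the hypothesis of (ii) as demanding a positive-probability failure of invariance (the only regime that can affect conditional exchangeability) and, when convenient, refining $I(T)=\{i^\ast,j^\ast\}$ to the pair swapped by a witnessing transposition, so that the asymmetry is exposed directly on the bivariate marginal of $(Z_{i^\ast},Z_{j^\ast})$ given $T$.
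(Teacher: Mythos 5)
Your proof is correct and, at the level of ideas, follows the same structure as the paper's: establish full joint exchangeability of $\mathbf{Z}$ given $T$ by combining the i.i.d.\ assumption with symmetry of $f$, then restrict to the sub-collection $I(T)$; for (ii), take $I(T)=[n+1]$. The formalization differs. For part (i), the paper works at the density level, writing Bayes' rule for $p(Z_{\sigma(1)},\ldots,Z_{\sigma(n+1)}\mid T)$ and cancelling, and therefore adds an extra assumption (``we assume that all relevant random variables admit a probability density function'') that is not in the lemma statement; your distribution-level identity $(\pi\cdot\mathbf{Z},T)\stackrel{d}{=}(\mathbf{Z},T)$ avoids this entirely and is genuinely more general. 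This is a cleaner route to the same conclusion.

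For part (ii), the paper is terse: it asserts that the Bayes-rule equality ``holds if and only if $f$ is symmetric,'' chooses $I(T)=[n+1]$, and declares (ii) ``apparent.'' You go further: you show, via indicator functions and the degeneracy of $T$ given $\mathbf{Z}$, that conditional exchangeability under $\pi$ forces $f(\mathbf{Z})=f(\pi^{-1}\cdot\mathbf{Z})$ almost surely. Your observation that this yields only an \emph{almost-sure} invariance, not pointwise invariance, is correct and flags a genuine subtlety the paper glosses over: as stated, the lemma's ``$f$ is not symmetric'' is a pointwise condition, and if the failure of symmetry occurs only on a $\mathbb{P}$-null set, then (ii) as stated does not hold. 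Your proposed resolution --- reading the hypothesis as demanding failure on a positive-probability event (the only regime that affects distributions), or witnessing the asymmetry on the bivariate marginal of a swapped pair --- is a sensible fix and is compatible with the lemma's downstream use (where $f$ is a tree-fitting functional applied to a sample that does land on positive-measure regions). In short: correct proof, same skeleton, but your version both weakens a hidden assumption in (i) and surfaces a real measure-theoretic caveat in (ii) that the paper's one-line argument elides.
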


\begin{proof}
Denote $Z_i = (X_i,S_i)$ for notational convenience. We assume that all relevant random variables admit a probability density function, and denote these densities by $p$. First, consider when $f$ is symmetric in $[n+1]$. For any permutation $\sigma$ of $[n+1]$, 
    \begin{align*}
        p(Z_{\sigma(1)},\ldots,Z_{\sigma(n+1)}\mid T) &=p(T\mid Z_{\sigma(1)},\ldots,Z_{\sigma(n+1)}) p(Z_{\sigma(1)},\ldots,Z_{\sigma(n+1)}) /p(T)\\ 
        &= p(T\mid Z_{1},\ldots,Z_{n+1}) p(Z_{\sigma(1)},\ldots,Z_{\sigma(n+1)}) /p(T)\\ 
        &= p(T\mid Z_{1},\ldots,Z_{n+1}) p(Z_{1},\ldots,Z_{n+1})/p(T) \\ 
        &= p(Z_{1},\ldots,Z_{n+1}\mid T)
    \end{align*}
    where the second equality holds if and only if $f$ is symmetric in $[n+1]$, and the third equality uses exchangeability of $Z_1,\ldots,Z_k$. This implies that $Z_1\mid T,\ldots, Z_{n+1}\mid T$ are exchangeable if and only if $T$ is symmetric in $[n+1]$. For any subset $I(T)\subset[n+1]$, denote the included indices as $i_1,\ldots,i_k$. The previous display implies that
    \begin{equation}\label{eq:exch1}
        p(Z_{\sigma(i_1)},\ldots,Z_{\sigma(i_k)}\mid T)= p(Z_{i_1},\ldots,Z_{i_k}\mid T)
    \end{equation}
    meaning that $f$ being symmetric in $[n+1]$ implies $Z_i\mid T$ for $i\in I(T)$ are exchangeable. 

    If $f$ is not symmetric in $[n+1]$, let $I(T) = [n+1]$. Since the random variables $Z_1,\ldots,Z_{n+1}$ are not exchangeable given $T$, {\textit{(ii)}} is apparent.

\end{proof}

{The following lemma states for a predetermined partition of $\mathcal{X}$ that a split conformal set created separately by calibrating on data within each group provides conditional conformal guarantees \citep{lei2014distribution}. This lemma is included for completeness and will be used later in the proofs of other lemmata.}

\begin{lemma} \label{lem:foundation} Let $(X_1,Y_1),\ldots,(X_{n+1},Y_{n+1})$ be {i.i.d. samples from a joint probability measure $\mathbb{P}$}. Given a fixed partition $\mathfrak{X} = \{\mathscr{X}_1, ..., \mathscr{X}_K\}$, we have the group-conditional coverage guarantee 
\begin{equation}\label{eq:fixed_cond_lower}
    \mathbb{P}\{Y_{n+1}\in\wh{C}^{\mathfrak{X}}_{n,m}(X_{n+1})\mid X_{n+1}\in \mathscr{X}_k\}\geq 1-\alpha,
\end{equation} for every $k\in \{1,...,K\}$, where $\wh{C}^{\mathfrak{X}}_{n,m}(X_{n+1})$ is defined in \eqref{eq:conditional_band}. Additionally, provided that the conformity scores
have a continuous joint distribution, and when the least number of points in $\mathscr{X}_k$ is lower bounded by $m$, we have an upper bound 
\begin{equation}\label{eq:fixed_cond_upper}
\P\{Y_{n+1}\in \wh{C}^{\mathfrak{X}}_{n,m}(X_{n+1})| X_{n+1}\in \mathscr{X}_k\}\leq 1-\alpha+\frac{1}{m+1}.
\end{equation}
\end{lemma}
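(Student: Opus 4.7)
The plan is to reduce both bounds to the classical split-conformal rank argument applied \emph{within} the fixed group $\mathscr{X}_k$. The crucial observation is that since the partition $\mathfrak{X}$ is predetermined and independent of the data, conditioning on group-membership events does not break exchangeability, so the usual rank uniformity carries over to the subsample falling in $\mathscr{X}_k$.

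I would fix $k\in\{1,\dots,K\}$ and work conditionally on $\{X_{n+1}\in\mathscr{X}_k\}$. Let $I_k=\{i\in[n+1]:X_i\in\mathscr{X}_k\}$ and $m_k=|I_k|$. Apply Lemma \ref{lem:exch}(i) with the symmetric summary $T=(\mathbf{1}\{X_i\in\mathscr{X}_k\})_{i\in[n+1]}$, which encodes (and in particular determines) $I_k$; by symmetry of $T$ in $[n+1]$, the variables $\{(X_i,S_i):i\in I_k\}$ are exchangeable conditional on $T$, hence also conditional on $\{X_{n+1}\in\mathscr{X}_k\}$ and on $m_k$. In particular, the in-group conformity scores $\{S_i:i\in I_k\}$ are exchangeable under this conditioning. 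The hypothesis that each group contains at least $m$ calibration points forces $m_k\geq m+1$ on this event.

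The lower bound \eqref{eq:fixed_cond_lower} then follows from the standard rank argument: exchangeability gives that $R(S_{n+1},\{S_i:i\in I_k\})$ is stochastically dominated by $\mathrm{Unif}\{1,\dots,m_k\}$, and a routine ceiling-function check confirms that the empirical quantile $S^*(\mathcal{D}_{1:n}^{k,\mathfrak{X}})$ used in \eqref{eq:conditional_band} is at least the $\lceil(1-\alpha)m_k\rceil$-th order statistic of the in-group scores. The coverage probability is therefore at least $\lceil(1-\alpha)m_k\rceil/m_k\geq 1-\alpha$.

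For the upper bound \eqref{eq:fixed_cond_upper}, continuity of the joint distribution of scores rules out ties almost surely, upgrading stochastic dominance to equality in distribution. Consequently $\P(Y_{n+1}\in\wh{C}^{\mathfrak{X}}_{n,m}(X_{n+1})\mid X_{n+1}\in\mathscr{X}_k)=\lceil(1-\alpha)m_k\rceil/m_k\leq 1-\alpha+1/m_k\leq 1-\alpha+1/(m+1)$, where the last step uses $m_k\geq m+1$. The main (minor) subtlety is verifying that exchangeability survives the double conditioning on both group membership and the random cardinality $m_k$, which is exactly what the symmetry check through Lemma \ref{lem:exch} handles; once this is in place, the remainder is a direct translation of the canonical split-conformal argument from the full sample $[n+1]$ to the subset indexed by $I_k$.
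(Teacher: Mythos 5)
Your overall strategy is the same as the paper's: reduce to in-group exchangeability, then apply the standard split-conformal rank argument within $\mathscr{X}_k$ with the group cardinality $m_k$ in place of $n+1$. The paper decomposes over all possible membership patterns $A_I=\{X_{n+1}\in\mathscr{X}_k,\,X_I\in\mathscr{X}_k,\,X_{[n]\setminus I}\notin\mathscr{X}_k\}$ and shows exchangeability directly from the i.i.d.\ structure (the conditional joint density factorizes over in-group points, see \eqref{eq:ground_for_exch}), whereas you condition on a single summary statistic and average afterwards; these are the same decomposition viewed from two angles.

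There is, however, one genuine gap in how you establish in-group exchangeability. You invoke Lemma~\ref{lem:exch}(i) with the summary $T=(\mathbf{1}\{X_i\in\mathscr{X}_k\})_{i\in[n+1]}$ and assert that $T$ is symmetric in $[n+1]$. It is not: Lemma~\ref{lem:exch} requires the conditioning statistic to be \emph{permutation-invariant} as a function of the tuple $(Z_1,\dots,Z_{n+1})$, but applying a permutation $\sigma$ to the data produces the permuted indicator vector $(\mathbf{1}\{X_{\sigma(i)}\in\mathscr{X}_k\})_i$, which is generally different from $T$. A symmetric reduction of $T$ would be the count $m_k=\sum_i\mathbf{1}\{X_i\in\mathscr{X}_k\}$, but then the index set $I_k$ is not a function of that $T$, so Lemma~\ref{lem:exch} still cannot be applied to conclude exchangeability of $\{Z_i:i\in I_k\}$. (Note that Lemma~\ref{lem:exch}(ii) only says that for a non-symmetric $f$ \emph{some} choice of $I(T)$ breaks exchangeability, so the desired conclusion is not contradicted; it simply isn't delivered by part (i).) The fact you want is nevertheless true, and the correct route is the paper's direct calculation: for a \emph{fixed} partition, conditional on the membership pattern $A_I$, the i.i.d.\ assumption immediately gives that the pairs $(X_i,S_i)$ with $X_i\in\mathscr{X}_k$ are i.i.d.\ draws from the law of $(X,S)$ restricted to $\mathscr{X}_k$, hence exchangeable. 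Replace your appeal to Lemma~\ref{lem:exch} with this direct argument and the rest of your proof, including the $m_k\geq m+1$ bookkeeping for the upper bound, goes through.
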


\begin{proof}
Denote by $\mathbb{I}$ the set of all possible subset of $[n]$. We can decompose 
\begin{align}
    \mathbb{P}\{Y_{n+1}&\in\wh{C}^{\mathfrak{X}}_{n,m}(X_{n+1})\mid X_{n+1}\in \mathscr{X}_k\} \nonumber\\= \sum_{I\in\mathbb{I}}~&\mathbb{P}\{Y_{n+1}\in\wh{C}^{\mathfrak{X}}_{n,m}(X_{n+1})\mid X_{n+1}\in \mathscr{X}_k, X_{I}\in \mathscr{X}_k, X_{[n]\setminus I}\not\in \mathscr{X}_k\}\nonumber\\ &\times \mathbb{P}\{X_{n+1}\in \mathscr{X}_k, X_{I}\in \mathscr{X}_k, X_{[n]\setminus I}\not\in \mathscr{X}_k\}.\label{eq:all_comb}
\end{align}
Note that when conditioned on an event  $A_I =\{ X_{n+1}\in \mathscr{X}_k, X_{I}\in \mathscr{X}_k, X_{[n]\setminus I}\not\in \mathscr{X}_k\}$, we have   exchangeability. To see this, denote $Z_i = (X_i,S(X_i,Y_i))$ for notational convenience and assume that all relevant random variables admit a probability density function, and denote these densities by $p$. Due to the i.i.d. assumption, we have exchangeability as
\begin{align}
    p(Z_{i_1},...,Z_{i_{|I|}} | A_I)& = p(Z_{i_1},...,Z_{i_{|I|}} | X_{i_1}\in \mathscr{X}_k,...,X_{i_{|I|}}\in \mathscr{X}_k) = \prod_{j=1}^{|I|} p(Z_{i_j}|X_{i_j}\in \mathscr{X}_k) \nonumber \\
        &=p(Z_{\sigma(i_1)},...,Z_{\sigma(i_{|I|})} | X_{i_1}\in \mathscr{X}_k,...,X_{i_{|I|}}\in \mathscr{X}_k)\nonumber\\
        &=    p(Z_{\sigma(i_1)},...,Z_{\sigma(i_{|I|})} | A_I),\label{eq:ground_for_exch}
\end{align}
where $i_1,...,i_{|I|}$ are the elements of $I$. Therefore, we have \[\mathbb{P}\{Y_{n+1}\in\wh{C}^{\mathfrak{X}}_{n,m}(X_{n+1})\mid X_{n+1}\in \mathscr{X}_k, X_{I}\in \mathscr{X}_k, X_{[n]\setminus I}\not\in \mathscr{X}_k\}\geq 1-\alpha\] since the prediction set $\wh{C}^{\mathfrak{X}}_{n,m}(X_{n+1})$ can be interpreted as standard split-conformal prediction restricted to the data in $\mathscr{X}_k$, where the conditional lower bound is obtained by a standard exchangeability argument (i.e. \citep{lei2014distribution}, Thm. 2.2). Combining this with \eqref{eq:all_comb} leads to desired results. Likewise, we have the upper bound by observing 
\begin{align*}
\mathbb{P}\{Y_{n+1}\in\wh{C}^{\mathfrak{X}}_{n,m}(X_{n+1})\mid X_{n+1}\in \mathscr{X}_k, X_{I}\in \mathscr{X}_k, X_{[n]\setminus I}\not\in \mathscr{X}_k\}&\leq 1-\alpha+\frac{1}{1+\sum_{i=1}^n\mathbb{I}\{X_i\in\mathscr{X}_k\}}  \\
&\leq 1-\alpha+\frac{1}{m+1}.
\end{align*}
\end{proof}

\subsection{Proof of Lemma \ref{lem:tree_same}}\label{pf:tree_same}
\begin{proof}

Lemma \ref{lem:tree_same} is based on sufficient conditions for the resulting partition to be identical when fit on $n$ datapoints rather than $n+1$ (partition unchangeability), and bounds the probability of such conditions. The partition resulting from $n+1$ observations differing from that $n$ observations requires one of the following conditions:
    \begin{enumerate}
        \item The set of candidate nodes changes with the exclusion of $(X_{n+1},S_{n+1})$.
        \item The criterion value of a candidate node changes with the exclusion of $(X_{n+1},S_{n+1})$.
    \end{enumerate}
    Note also that a tree implementation with a minimum improvement threshold for splitting also requires condition 2 to be satisfied for the resulting tree to be different. This is because the binary variable of a split candidate's criterion value being above the threshold changing necessarily requires the criterion value of the node to change as well.
    For the first case, the only possibility is that new candidates are created if the exclusion of $(X_{n+1},S_{n+1})$ changes the number of datapoints inside a potential leaf from $m$ to $m-1$. For this to occur, it requires that exactly $m$ points are contained in the potential leaf node containing $X_{n+1}$. Denote by $\mathscr{X}_{k(X_{n+1})}$ the leaf of $\wh{\mathfrak{X}}(\mathcal{D}_{1:n+1})$ containing $X_{n+1}$. We define the event such that removing $X_{n+1}$ doest not change the candidacy (i.e. eligibility for splitting) of the node $\mathscr{X}_{k(X_{n+1})}$ as
    \begin{equation}\label{eq:cand_set}
    \mathcal{E}_{\mathrm{cand}} = \left\{\sum_{i=1}^{n+1} \mathbb{I}\{X_i\in\mathscr{X}_{k(X_{n+1})}\}\neq m\right\}.
\end{equation}
    In the second case, in order for the criterion value to change, $S_{n+1}$ must be the (uniquely) most extreme value (uniquely smallest or uniquely largest) inside a node. For this to be the case, it must be the uniquely most extreme value in its leaf node, as its leaf node is a subset of all nodes that contain it. We define the complement of this event by 
    \begin{equation}\label{eq:inliner}
         \mathcal{E}_{\mathrm{inlier}}=\left\{\min_{\substack{i:i \in [n] \\ X_i \in \mathscr{X}_{k(X_{n+1})}}} S_i \leq S_{n+1} \leq \max_{\substack{i:i \in [n] \\ X_i \in \mathscr{X}_{k(X_{n+1})}}} S_i\right\}\,.
    \end{equation} 
    Thus, 
    \begin{equation}\label{eq:tree_implication}
    \wh{\mathfrak{X}}(\mathcal{D}_{1:n}) \neq \wh{\mathfrak{X}}(\mathcal{D}_{1:n+1}) \implies \mathcal{E}^c_{\mathrm{inlier}}\cup \mathcal{E}_{\mathrm{cand}}^c\,.
    \end{equation}
    In tandem, these events guarantee that the criterion value in any node is unaffected by $S_{n+1}$, and that the candidacy of any potential node in the fitting process is unaffected by $X_{n+1}$, ensuring that the same partition will be obtained when $(X_{n+1},S_{n+1})$ is excluded. Using Lemma \ref{lem:bad_bound}, therefore, we have
    \begin{align*}
        \mathbb{P}\{\wh{\mathfrak{X}}(\mathcal{D}_{1:n})\neq \wh{\mathfrak{X}}(\mathcal{D}_{1:n+1})\} &\leq \mathbb{P}(\mathcal{E}^c_{\mathrm{\mathrm{inlier}}}\cup \mathcal{E}^c_{\mathrm{cand}}) \\ 
        &\leq \sum_{\wt{\mathfrak{X}}\in\mathbb{X}} \sum_{k=1}^{|\wt{\mathfrak{X}}|} \mathbb{P}(\mathcal{E}^c_{\mathrm{\mathrm{inlier}}} \cup \mathcal{E}^c_{\mathrm{cand}}\mid \wh{\mathfrak{X}}(\mathcal{D}_{1:n+1}) = \wt{\mathfrak{X}},  X_{n+1}\in\mathscr{X}_k) \\& ~~~~~~~~~~~~~~~\cdot \mathbb{P}\{\wh{\mathfrak{X}}(\mathcal{D}_{1:n+1})=\wt{\mathfrak{X}}, ~X_{n+1}\in\mathscr{X}_k\}\\
        &\leq \delta(n,m)\,.
    \end{align*}
    where $\mathbb{X}$ denotes the set of all possible partitions. 
\end{proof}

\begin{lemma}\label{lem:bad_bound}
    Under the conditions and notations of Lemma \ref{lem:tree_same}, define
    \begin{equation}
     \delta(n,m) := \frac{2}{m} + \exp\left\{-(2m-1)\left(1+\frac{1}{n}\right)\right\}
    \label{eq:delta_defn_new}
    \end{equation}
    Then for any $\wt{\frak{X}}\in\mathbb{X}$ and any $\mathscr{X}_k\in \wt{\frak{X}}\in\mathbb{X}$, we have 
    \begin{equation}
    \label{eq:tree_same_lemma_bound_new}
    \mathbb{P}\{\mathcal{E}^c_{\mathrm{inlier}}\cup\mathcal{E}^c_{\mathrm{cand}}\mid \wh{\mathfrak{X}}(\mathcal{D}_{1:n+1})=\wt{\frak{X}},X_{n+1}\in\mathscr{X}_k\} \leq \delta(n,m)\,.
    \end{equation}    
\end{lemma}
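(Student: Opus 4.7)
The plan is to decompose the event $\mathcal{E}^c_{\mathrm{inlier}}\cup\mathcal{E}^c_{\mathrm{cand}}$ via the union bound and show that the two pieces contribute $2/m$ and the binomial-peak term of $\delta(n,m)$, respectively. The key structural input is that Algorithm \ref{alg:robust_tree} is a permutation-invariant function of its input data $\{(X_i,S_i)\}_{i=1}^{n+1}$, so by Lemma \ref{lem:exch}(i) the tuples $(X_i,S_i)$ are exchangeable conditional on $\wh{\mathfrak{X}}(\mathcal{D}_{1:n+1})=\wt{\mathfrak{X}}$.

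For the inlier term, I would argue as in equation \eqref{eq:ground_for_exch} of the proof of Lemma \ref{lem:foundation}: after further conditioning on $X_{n+1}\in\mathscr{X}_k$ and on the random index set $I=\{i\in[n]:X_i\in\mathscr{X}_k\}$, the scores $\{S_i:i\in I\cup\{n+1\}\}$ remain exchangeable. Thus the rank of $S_{n+1}$ within this set is uniform on $\{1,\dots,|I|+1\}$, and the probability that $S_{n+1}$ is strictly above all other scores in $\mathscr{X}_k$ or strictly below them all is at most $2/(|I|+1)\leq 2/m$, since the algorithm's leaf-size constraint forces $N_k=|I|+1\geq m$.

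For the candidacy term, the event reduces to $\{N_k=m\}$ under the conditioning. Using exchangeability together with the identity $\mathbb{P}(E,X_{n+1}\in\mathscr{X}_k)=\mathbb{E}[\mathbb{I}_E\, N_k]/(n+1)$ (valid for any permutation-invariant event $E$ on the data), I would derive
\[
\mathbb{P}(N_k=m\mid \wt{\mathfrak{X}}, X_{n+1}\in\mathscr{X}_k)=\frac{m\,\mathbb{P}(N_k=m,\wt{\mathfrak{X}})}{\mathbb{E}[N_k\,\mathbb{I}\{\wt{\mathfrak{X}}\}]}\leq \mathbb{P}(N_k=m\mid\wt{\mathfrak{X}}),
\]
where the inequality uses $\mathbb{E}[N_k\mid\wt{\mathfrak{X}}]\geq m$ enforced by the algorithm. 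Marginally, $N_k=\sum_{i=1}^{n+1}\mathbb{I}\{X_i\in\mathscr{X}_k\}$ has distribution $\mathrm{Binomial}(n+1,p_k)$ with $p_k=\mathbb{P}(X_1\in\mathscr{X}_k)$, so the unconditional peak mass satisfies $\mathbb{P}(N_k=m)\leq\max_{p\in[0,1]}\binom{n+1}{m}p^m(1-p)^{n+1-m}=\binom{n+1}{m}(m/(n+1))^m(1-m/(n+1))^{n+1-m}$, exactly the second summand of $\delta(n,m)$.

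The main obstacle will be the final reduction from $\mathbb{P}(N_k=m\mid\wt{\mathfrak{X}})$ to the unconditional binomial peak: the event $\{\wh{\mathfrak{X}}(\mathcal{D}_{1:n+1})=\wt{\mathfrak{X}}\}$ couples $N_k$ to both the $X_i$ locations and the $S_i$ scores through the range-reduction criterion, so conditioning on $\wt{\mathfrak{X}}$ is not a simple truncation of the binomial. I would close this gap by first marginalizing out the $S_i$'s --- using that the admissible splits live on a pre-specified dyadic grid so that $N_k$ is determined by the $X_i$'s alone --- and then exploiting exchangeability of the indicators $\mathbb{I}\{X_i\in\mathscr{X}_k\}$ to stochastically dominate their sum by a $\mathrm{Binomial}(n+1,p_k)$ variable at its modal mass. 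Together with the inlier bound, this yields $\mathbb{P}(\mathcal{E}^c_{\mathrm{inlier}}\cup\mathcal{E}^c_{\mathrm{cand}}\mid\wt{\mathfrak{X}},X_{n+1}\in\mathscr{X}_k)\leq 2/m+\binom{n+1}{m}(m/(n+1))^m(1-m/(n+1))^{n+1-m}=\delta(n,m)$.
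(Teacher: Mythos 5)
Your inlier bound mirrors the paper's proof exactly: both invoke Lemma \ref{lem:exch} together with the within-leaf exchangeability argument of \eqref{eq:ground_for_exch} to make the rank of $S_{n+1}$ uniform on $\{1,\dots,m_k\}$, giving the $2/m$ term.

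For the candidacy term, your treatment is actually more careful than the paper's, and you have correctly identified a real subtlety that the paper passes over. The paper simply asserts the binomial-peak bound, implicitly treating $N_k=\sum_{i=1}^{n+1}\mathbb{I}\{X_i\in\mathscr{X}_k\}$ as $\mathrm{Binomial}(n+1,p_k)$ even after conditioning on $\wh{\mathfrak{X}}(\mathcal{D}_{1:n+1})=\wt{\mathfrak{X}}$ and $X_{n+1}\in\mathscr{X}_k$. Your size-biasing identity $\mathbb{P}(N_k=m\mid\wt{\mathfrak{X}},X_{n+1}\in\mathscr{X}_k)=m\,\mathbb{P}(N_k=m,\wt{\mathfrak{X}})/\mathbb{E}[N_k\,\mathbb{I}\{\wt{\mathfrak{X}}\}]\leq\mathbb{P}(N_k=m\mid\wt{\mathfrak{X}})$ is correct and is a genuine refinement absent from the paper; it cleanly exploits the algorithm's constraint $N_k\geq m$ to show that conditioning on $X_{n+1}\in\mathscr{X}_k$ cannot inflate the mass at $N_k=m$. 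However, the remaining step---passing from $\mathbb{P}(N_k=m\mid\wt{\mathfrak{X}})$ to the unconditional binomial mode---is the real gap, and your proposed remedy does not close it. Marginalizing out the $S_i$'s does not decouple $N_k$ from the event $\{\wh{\mathfrak{X}}(\mathcal{D}_{1:n+1})=\wt{\mathfrak{X}}\}$, because the greedy range-reduction splits depend jointly on the $X_i$'s and the $S_i$'s; conditionally on the realized partition, the indicators $\mathbb{I}\{X_i\in\mathscr{X}_k\}$ are no longer independent Bernoulli, and there is no evident stochastic-dominance argument reducing $N_k\mid\wt{\mathfrak{X}}$ to a binomial at its modal mass. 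So your proof is incomplete at this point---though in fairness, the paper's own proof asserts the same binomial-peak bound without addressing this conditioning issue at all, so the gap you have honestly flagged is equally present there.
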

\begin{proof}
Note first that, as a result of Lemma \ref{lem:exch},  $\{S_i\mid \wh{\mathfrak{X}}_{n+1}:i\in [n+1]\}$ are exchangeable random variables as the conditioning statistic $\wh{\mathfrak{X}}_{n+1}$ is a symmetric function of the data $(X_1,S_1),\ldots, (X_{n+1}, S_{n+1})$. Applying \eqref{eq:ground_for_exch} in Lemma \ref{lem:foundation}, we have also that $\{S_i|\wh{\mathfrak{X}}_{n+1},X_{n+1}\in\mathscr{X}_k:i\in\mathscr{X}_k\}$, the conformity scores of data within the leaf node containing $X_{n+1}$  are conditionally exchangeable. We denote the number of data points in the leaf $k$ as $m_k:=|\{X_i:X_i\in\mathscr{X}_k\}|$. This means that the rank of $S_{n+1}$ \emph{within} $\mathscr{X}_k$ is uniformly distributed on the set $\{1,\ldots,m_k\}$, and so
    \begin{equation}
    \mathbb{P}\{\mathcal{E}^c_{\mathrm{inlier}}\} \leq \frac{2}{m_k} \leq \frac{2}{m}\,.
    \end{equation}

    To upper bound the probability of $\mathcal{E}^c_{\mathrm{cand}}$,  we consider an image similar to pigeonholes. Let the size of the partition be $K = |\wh{\mathfrak{X}}(\mathcal{D}_{1:n+1}) |$. What we know is that, in $\wh{\mathfrak{X}}(\mathcal{D}_{1:n+1})$, each box should contain at least $m$ points. Therefore, we consider randomly assigning $n$ data points to the $K$ holes, while the $n+1$-th points will always go to the $k$-th hole. The event $\mathcal{E}^c_{\mathrm{cand}}$ is then the event that the $k$-th hole receives only $m-1$ points from the $n$ data points. We consider the following steps to scatter. Step 1) Choose $m*K-1$ balls and distribute them so that each hole should have exactly $m$ balls (counting the $n+1$-th point together). There are so many possible ways of doing it, and let's denote each possible event as $A_i$ for $i = 1,....,C$ for some $C$. Step 2) Distribute all remaining $n-Km+1$ balls except in the $k$ th hole, so that this hole remains exactly the size of $m$. Denote this event by $\mathcal{E}_k$. Then, 
  
\[
\P(\mathcal{E}^c_{\mathrm{cand}}) = \sum_{i=1}^C \P(A_i) \P(\mathcal{E}_k|A_i) = \sum_{i=1}^C \P(A_i) \left(1-\frac{1}{K}\right)^{n-Km+1} = \left(1-\frac{1}{K}\right)^{n-Km+1}.
\]

By using \(1-x \leq e^{-x}\) and \(K \leq K_{\rm max}\), we have

\[
\left(1-\frac{1}{K}\right)^{n-Km+1} \leq e^{-\frac{n-Km+1}{K}} \leq \exp\left\{-\left(\frac{n+1}{K_{\rm max}}-m\right)\right\}.
\]

\end{proof}

\subsection{Proof of Theorem \ref{thm:conditional_gu}}\label{sec:proof_conditional_gu}
Define $\mathcal{E} = \mathcal{E}_{\mathrm{inlier}}\cap \mathcal{E}_{\mathrm{cand}}$ with $\mathcal{E}_{\mathrm{inlier}}$ and $\mathcal{E}_{\mathrm{cand}}$ as in \eqref{eq:inliner} and \eqref{eq:cand_set}. {We now  exploit the event $\mathcal{E}$ to characterize the distribution of $S_{n+1}$. Typical split-conformal inference relies on the rank of $S_{n+1}$ being uniformly distributed between $1$ and $n+1$. Lemma \ref{lem:conditional_trick}  leverages the fact that the rank of $S_{n+1}$ {within $\mathscr{X}_k$ is instead uniformly distributed between $2$ and $|\mathscr{X}_k|-1$} to control the probability that $Y_{n+1}$ lies outside of Conformal Tree's prediction set. Denoting $\P_\mE$   the probability measure conditioned on $\mathcal{E}$, this property is detailed in the following lemma.}


\begin{lemma}\label{lem:conditional_trick} 

Let $(X_1,Y_1),\ldots,(X_{n+1},Y_{n+1})$ be {i.i.d. samples from a joint probability measure $\mathbb{P}$}. Let $\wh{\mathfrak{X}}(\mathcal{D}_{1:j})$ denote the partition resulting from Algorithm \ref{alg:robust_tree} on data $ (X_1,S(X_1,Y_1)),\ldots,$ $(X_{j},S(X_{j},Y_{j})))$ for $j\in\{n,n+1\}$. Denote by $m$ the hyperparameter limiting the minimum number of samples in each leaf node. Consider the prediction set  in \eqref{eq:conditional_band} with $\mathfrak{X}=\wh{\mathfrak{X}}(\mathcal{D}_{1:{n}})=\{\mathscr{X}_k\}_{k=1}^K$ and denote it by $\wh{C}^{\wh{\mathfrak{X}}(\mathcal{D}_{1:{n}})}_{n,m} (X_{n+1})$. Define $\mathcal{E} = \mathcal{E}_{\mathrm{inlier}}\cap \mathcal{E}_{\mathrm{cand}}$ with $\mathcal{E}_{\mathrm{inlier}}$ and $\mathcal{E}_{\mathrm{cand}}$ as in \eqref{eq:inliner} and \eqref{eq:cand_set}. Then we have for every $k=1,\dots,K$
\begin{equation}\label{eq:binized_conformal2}
    \P_{\mathcal{E}}\{Y_{n+1}\in \wh{C}^{ \wh{\mathfrak{X}}(\mathcal{D}_{1:{n}})}_{n,m}(X_{n+1})| X_{n+1}\in \mathscr{X}_k, {\wh{\mathfrak{X}}(\mathcal{D}_{1:{n}})}\}\geq 1-\alpha.
\end{equation}
Additionally, provided that the conformity scores
have a continuous joint distribution, we have an upper bound 
 \[\P_{\mathcal{E}}\{Y_{n+1}\in \wh{C}^{\wh{\mathfrak{X}}(\mathcal{D}_{1:{n}})}_{n,m}(X_{n+1})| X_{n+1}\in \mathscr{X}_k, {\wh{\mathfrak{X}}(\mathcal{D}_{1:n})}\}\leq 1-\alpha+\frac{1}{m+1}.\]
\end{lemma}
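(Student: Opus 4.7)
The plan is to combine the partition-stability of Lemma \ref{lem:tree_same} with the exchangeability machinery of Lemma \ref{lem:exch}, and then perform a within-leaf rank-counting argument for the conformity score $S_{n+1}$. First, I would observe that on $\mathcal{E}=\mathcal{E}_{\mathrm{inlier}}\cap\mathcal{E}_{\mathrm{cand}}$, the chain of implications \eqref{eq:tree_implication} established in the proof of Lemma \ref{lem:tree_same} forces $\wh{\mathfrak{X}}(\mathcal{D}_{1:n})=\wh{\mathfrak{X}}(\mathcal{D}_{1:n+1})$. Hence, inside $\P_{\mathcal{E}}$, conditioning on $\wh{\mathfrak{X}}(\mathcal{D}_{1:n})$ is equivalent to conditioning on $\wh{\mathfrak{X}}(\mathcal{D}_{1:n+1})$, and the latter, unlike the former, is a symmetric function of all $n+1$ pairs $(X_i,S_i)$. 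Invoking Lemma \ref{lem:exch}(i) with $T=\wh{\mathfrak{X}}(\mathcal{D}_{1:n+1})$ then yields exchangeability of $\{(X_i,S_i)\}_{i=1}^{n+1}$ given $T$, and mirroring the calculation \eqref{eq:ground_for_exch} inside the proof of Lemma \ref{lem:foundation}, restricting to the indices whose $X_i$ lies in a prescribed leaf $\mathscr{X}_k$ preserves the exchangeability of the corresponding scores.

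Next, I would carry out the rank analysis within the leaf containing $X_{n+1}$. Let $m_k$ denote the total number of points (calibration plus test) falling in this leaf, so there are $m_k-1$ calibration scores together with $S_{n+1}$. Conditional on the partition and on the covariate-to-leaf assignment vector for $(X_1,\dots,X_{n+1})$, within-leaf exchangeability together with continuity of the score distribution implies that the rank of $S_{n+1}$ among the $m_k$ in-leaf scores is uniform on $\{1,\dots,m_k\}$. The event $\mathcal{E}_{\mathrm{cand}}$ is a function purely of covariate counts and is already measurable with respect to the conditioning sigma-algebra, so it does not distort this uniform distribution. The event $\mathcal{E}_{\mathrm{inlier}}$, on the other hand, states that $S_{n+1}$ lies between the minimum and maximum of the $m_k-1$ calibration scores in the leaf; under continuity this is equivalent to saying the rank of $S_{n+1}$ lies in $\{2,\dots,m_k-1\}$, and restricting the uniform law on $\{1,\dots,m_k\}$ to this event yields the uniform distribution on the $m_k-2$ interior positions.

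Then I would translate the coverage event into a rank event: $Y_{n+1}\in\wh{C}^{\wh{\mathfrak{X}}(\mathcal{D}_{1:n})}_{n,m}(X_{n+1})$ iff $S_{n+1}\leq S^*$, which in turn (again using continuity) is equivalent to the rank of $S_{n+1}$ among the $m_k$ leaf scores being at most $q:=\lceil(1-\alpha)(m_k-2)+1\rceil$, the integer position prescribed by Algorithm \ref{alg:conf_tree}. Combining with the restricted uniform distribution from the previous step, the conditional probability of coverage equals $(q-1)/(m_k-2)$. The inequality $q-1\geq(1-\alpha)(m_k-2)$ delivers the lower bound $1-\alpha$, while $q-1\leq(1-\alpha)(m_k-2)+1$ gives $(q-1)/(m_k-2)\leq 1-\alpha+1/(m_k-2)\leq 1-\alpha+1/(m-2)$, where the last step uses the minimum-leaf-size constraint $m_k\geq m$ enforced by Algorithm \ref{alg:robust_tree}.

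The main obstacle is to handle the simultaneous conditioning on the score-dependent event $\mathcal{E}_{\mathrm{inlier}}$ and the covariate-dependent event $\mathcal{E}_{\mathrm{cand}}$ without breaking the conditional uniformity that drives the rank count. The clean resolution is to enlarge the conditioning sigma-algebra to include both $\wh{\mathfrak{X}}(\mathcal{D}_{1:n+1})$ and the full leaf-assignment vector of $(X_1,\dots,X_{n+1})$; at that stage $\mathcal{E}_{\mathrm{cand}}$ is automatically measurable, the within-leaf scores are exchangeable by Lemma \ref{lem:exch}, and superimposing $\mathcal{E}_{\mathrm{inlier}}$ acts purely as an order-statistic restriction on $S_{n+1}$, from which \eqref{eq:binized_conformal2} and its upper counterpart follow.
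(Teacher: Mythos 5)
Your proposal takes essentially the same route as the paper's proof: both condition on the symmetric statistic $\wh{\mathfrak{X}}(\mathcal{D}_{1:n+1})$ (which equals $\wh{\mathfrak{X}}(\mathcal{D}_{1:n})$ on $\mathcal{E}$), invoke Lemma \ref{lem:exch} together with the within-leaf restriction \eqref{eq:ground_for_exch} to obtain conditional exchangeability of the in-leaf scores, observe that conditioning further on $\mathcal{E}_{\mathrm{cand}}$ does not disturb this, interpret $\mathcal{E}_{\mathrm{inlier}}$ as restricting the uniform rank of $S_{n+1}$ within the leaf to $\{2,\dots,m_k-1\}$, and finish with a rank count to get $(q-1)/(m_k-2)$ bounded between $1-\alpha$ and $1-\alpha+1/(m-2)$. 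The only cosmetic differences are that you invoke continuity for the lower bound as well (the paper's randomized tie-breaking rank renders that unnecessary) and you enlarge the conditioning to the full leaf-assignment vector to make $\mathcal{E}_{\mathrm{cand}}$ measurable outright, whereas the paper argues $\mathcal{E}_{\mathrm{cand}}$ is symmetric in $[n+1]$ so adding it to the conditioning preserves exchangeability; neither changes the substance of the argument.
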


\begin{proof}
We first consider the properties of $\wh{\mathfrak{X}}(\mathcal{D}_{1:n+1})$ before thinking of $\wh{\mathfrak{X}}(\mathcal{D}_{1:n})$. The corresponding prediction interval in \eqref{eq:conditional_band} with $\mathfrak{X}=\wh{\mathfrak{X}}(\mathcal{D}_{1:n+1})$ is denoted by 
$\wh{C}^{\wh{\mathfrak{X}}(\mathcal{D}_{1:{n+1}})}_{n,m} (X_{n+1})$. Note first as a result of Lemma \ref{lem:exch} that $\{S_i\mid \wh{\mathfrak{X}}(\mathcal{D}_{1:n+1}):i\in [n+1]\}$ are exchangeable random variables as the conditioning statistic $\wh{\mathfrak{X}}(\mathcal{D}_{1:n+1})$ is a symmetric function of the data $(X_1,S_1),\ldots, (X_{n+1}, S_{n+1})$. Consider $\mathscr{X}_k\in \wh{\mathfrak{X}}(\mathcal{D}_{1:n+1})$. Applying \eqref{eq:ground_for_exch}, we   also have that $\{S_i\mid \wh{\mathfrak{X}}(\mathcal{D}_{1:n+1}),X_{n+1}\in\mathscr{X}_k:i\in\mathscr{X}_k\}$, the conformity scores of data within the leaf node containing $X_{n+1}$, are conditionally exchangeable. We finally note that after conditioning on $\mathscr{X}_k$, the event $\mathcal{E}_{\mathrm{cand}}$ is symmetric in the indices $[n+1]$. Therefore, the random variables $\{S_i\mid \wh{\mathfrak{X}}(\mathcal{D}_{1:n+1}),X_{n+1}\in\mathscr{X}_k,\mathcal{E}_{\mathrm{cand}}:i\in\mathscr{X}_k\}$ are exchangeable as well. We denote the number of data points in the leaf $k$ as $m_k:=|\{X_i:X_i\in\mathscr{X}_k\}|$. This means that the rank of $S_{n+1}$ \emph{within} $\mathscr{X}_k$ is uniformly distributed on the set $\{1,\ldots,m_k\}$. Explicitly, letting $\wt{R}_{n+1}$ denote the rank of $S_{n+1}$ within $\mathscr{X}_k$,
\[
\mathbb{P}(\wt{R}_{n+1}=j\mid X_{n+1}\in\mathscr{X}_k, \wh{\mathfrak{X}}(\mathcal{D}_{1:n+1}), \mathcal{E}_{\mathrm{cand}}) = 1/m_k
\]
for any $j\in [m_k]$. When we add the additional condition on $\mathcal{E}_{\mathrm{inlier}}$, the rank of $S_{n+1}$ is conditionally uniform on $\{2,\ldots,m_k-1\}$, as $\mathcal{E}_{\mathrm{inlier}}$ is exactly the event that the rank of $S_{n+1}$ is not $1$ or $m_k$.  This leads to the rank probability 
\[
\mathbb{P}(\wt{R}_{n+1}=j\mid X_{n+1}\in\mathscr{X}_k, \wh{\mathfrak{X}}(\mathcal{D}_{1:n+1}), \mathcal{E}_{\mathrm{inlier}}, \mathcal{E}_{\mathrm{cand}}) = 1/(m_k-2)
\]
for any $j\in \{2,\ldots,m_k-1\}$. Therefore, 
\begin{align}
    \mathbb{P}(Y_{n+1}\not\in \,&\wh{C}^{\wh{\mathfrak{X}}(\mathcal{D}_{1:{ n+1}})}_{n,m}(X_{n+1}) \mid X_{n+1}\in \mathscr{X}_k, \wh{\mathfrak{X}}(\mathcal{D}_{1:n+1}), \mathcal{E}_{\mathrm{inlier}}, \mathcal{E}_{\mathrm{cand}})  \nonumber\\&\leq \mathbb{P}(\wt{R}_{n+1} >\lceil(1-\alpha)(m_k-2)+1\rceil\mid X_{n+1}\in\mathscr{X}_k,\wh{\mathfrak{X}}(\mathcal{D}_{1:n+1}),\mathcal{E}_{\mathrm{inlier}}, \mathcal{E}_{\mathrm{cand}}) \nonumber \\
    &= \frac{(m_k-2)-\lceil (m_k-2)(1-\alpha)+1\rceil+1}{m_k-2} \nonumber \\
    &\leq \frac{(m_k-2)- (m_k-2)(1-\alpha)}{m_k-2} \nonumber\\
    &\leq \alpha\,.\label{eq:mix_match_upper}
\end{align}

If the conformity scores have a conditional joint distribution, we   also have
\begin{align}
    \mathbb{P}(Y_{n+1}\in \,&\wh{C}^{\wh{\mathfrak{X}}(\mathcal{D}_{1:{ n+1}})}_{n,m}(X_{n+1})\mid X_{n+1}\in \mathscr{X}_k, \wh{\mathfrak{X}}(\mathcal{D}_{1:n+1}), \mathcal{E}_{\mathrm{inlier}}, \mathcal{E}_{\mathrm{cand}})  \nonumber\\&= \mathbb{P}(\wt{R}_{n+1}\leq \lceil(1-\alpha)(m_k-2)+1\rceil\mid X_{n+1}\in\mathscr{X}_k,\wh{\mathfrak{X}}(\mathcal{D}_{1:n+1}),\mathcal{E}_{\mathrm{inlier}}, \mathcal{E}_{\mathrm{cand}})  \nonumber\\
    &= \frac{(m_k-2)-\lceil (m_k-2)(1-\alpha)\rceil}{m_k-2}  \nonumber\\
    &\leq 1-\alpha+\frac{1}{m_k-2}  \nonumber\\
    &\leq 1-\alpha + \frac{1}{m-2}\,.\label{eq:mix_match_lower}
\end{align}
Now, note that by \eqref{eq:tree_implication}, we have $\wh{\mathfrak{X}}(\mathcal{D}_{1:n}) =\wh{\mathfrak{X}}(\mathcal{D}_{1:n+1})$ on $ \mathcal{E} = \mathcal{E}_{\mathrm{inlier}}\cap \mathcal{E}_{\mathrm{cand}}.$ Therefore, the bounds in \eqref{eq:mix_match_upper} and \eqref{eq:mix_match_lower} are also bounds for  $ \mathbb{P}(Y_{n+1}\in \wh{C}^{ \wh{\mathfrak{X}}(\mathcal{D}_{1:{ n}})}_{n.m}(X_{n+1})\mid X_{n+1}\in \mathscr{X}_k, \wh{\mathfrak{X}}(\mathcal{D}_{1: n}), \mathcal{E}_{\mathrm{inlier}}, \mathcal{E}_{\mathrm{cand}})$, by which we conclude the proof.

\end{proof}

\paragraph{Proof of Theorem \ref{thm:conditional_gu}}
The proof is by combining Lemma \ref{lem:tree_same} with Lemma \ref{lem:conditional_trick}, by which we can still nearly enjoy the prescribed level of conditional coverage even if we are not conditioning on $\mE$. Note that the notation $\wh{C}_{n,m}(X_{n+1})$ in Theorem \ref{thm:conditional_gu} is equivalent to $\wh{C}^{ \wh{\mathfrak{X}}(\mathcal{D}_{1:{ n}})}_{n.m}(X_{n+1})$ in Lemma \ref{lem:conditional_trick}.

\begin{proof}  
We have
\begin{align*}
    & \mathbb{P}(Y_{n+1}\not\in \wh{C}^{\wh{\mathfrak{X}}(\mathcal{D}_{1:{n}})}_{n,m} (X_{n+1})\mid X_{n+1}\in \mathscr{X}_k, \wh{\mathfrak{X}}(\mathcal{D}_{1:n})) \\
    =\,\,& 
    \mathbb{P}(Y_{n+1}\not\in \wh{C}^{\wh{\mathfrak{X}}(\mathcal{D}_{1:{n}})}_{n,m} (X_{n+1})\mid X_{n+1}\in \mathscr{X}_k, \wh{\mathfrak{X}}(\mathcal{D}_{1:n}), \mathcal{E}) \mathbb{P}(\mathcal{E}\mid X_{n+1}\in \mathscr{X}_k, \wh{\mathfrak{X}}(\mathcal{D}_{1:n})) \\
    &\quad + \mathbb{P}(Y_{n+1}\not\in \wh{C}^{\wh{\mathfrak{X}}(\mathcal{D}_{1:{n}})}_{n,m} (X_{n+1})\mid X_{n+1}\in \mathscr{X}_k, \wh{\mathfrak{X}}(\mathcal{D}_{1:n}), \mathcal{E}^c) \mathbb{P}(\mathcal{E}^c\mid X_{n+1}\in \mathscr{X}_k, \wh{\mathfrak{X}}(\mathcal{D}_{1:n})) \\
    \leq\,\, &  \alpha + \delta(n,m)\,.
\end{align*}
where the inequality follows from Lemma \ref{lem:conditional_trick} and Lemma \ref{lem:bad_bound}\,.

When the conformity scores have a continuous joint distribution, we analogously have
\begin{align*}
    & \mathbb{P}(Y_{n+1}\in \wh{C}^{\wh{\mathfrak{X}}(\mathcal{D}_{1:{n}})}_{n,m} (X_{n+1})\mid X_{n+1}\in \mathscr{X}_k, \wh{\mathfrak{X}}(\mathcal{D}_{1:n})) \\
    =\,\,& 
    \mathbb{P}(Y_{n+1}\in \wh{C}^{\wh{\mathfrak{X}}(\mathcal{D}_{1:{n}})}_{n,m} (X_{n+1})\mid X_{n+1}\in \mathscr{X}_k, \wh{\mathfrak{X}}(\mathcal{D}_{1:n}), \mathcal{E}) \mathbb{P}(\mathcal{E}\mid X_{n+1}\in\mathscr{X}_k,\wh{\mathfrak{X}}(\mathcal{D}_{1:n})) \\
    &\quad + \mathbb{P}(Y_{n+1}\in \wh{C}^{\wh{\mathfrak{X}}(\mathcal{D}_{1:{n}})}_{n,m} (X_{n+1})\mid X_{n+1}\in \mathscr{X}_k, \wh{\mathfrak{X}}(\mathcal{D}_{1:n}), \mathcal{E}^c) \mathbb{P}(\mathcal{E}^c\mid X_{n+1}\in\mathscr{X}_k,\wh{\mathfrak{X}}(\mathcal{D}_{1:n})) \\
    \leq\,\, &  1-\alpha + \frac{1}{m-2}+\delta(n,m)\,.
\end{align*}
where the inequality again uses the upper bound in Lemma \ref{lem:conditional_trick} and Lemma \ref{lem:bad_bound}.

\end{proof}

\subsection{Proof of Theorem \ref{thm:main}}\label{pr:main_thm}

\begin{proof}

Denote by $\mathbb{X}$, the set of all the partitions associated with the tree class in use. For a partition $\wt{\mathfrak{X}}\in \mathbb{X}$, define 
\begin{equation}\label{eq:binized_conformal}
    P_{n+1}^{k,\wt{\mathfrak{X}}}:=\P\{Y_{n+1}\in \wh{C}^{ \wh{\mathfrak{X}}(\mathcal{D}_{1:n})}_{n,m}(X_{n+1})| X_{n+1}\in \mathscr{X}_k, \wh{\mathfrak{X}}(\mathcal{D}_{1:n})=\wt{\mathfrak{X}}\}.
\end{equation}
    By Theorem \ref{thm:conditional_gu}, we have $P_{n+1}^{k,\wt{\mathfrak{X}}}\geq 1-\alpha$. Then we have 
    \begin{align*}
    \mathbb{P}\{Y_{n+1}\in \wh{C}_{n,m}(X_{n+1})\} &= \sum_{\wt{\mathfrak{X}}\in \mathbb{X}} \sum_{\mX_k\in \wt{\mathfrak{X}}} P_{n+1}^{k,\wt{\mathfrak{X}}}
    \P(X_{n+1}\in \mathscr{X}_k|\wh{\mathfrak{X}}(\mathcal{D}_{1:n})=\wt{\mathfrak{X}})
    \P(\wh{\mathfrak{X}}(\mathcal{D}_{1:n})=\wt{\mathfrak{X}})\\
    &\geq (1-\alpha + \delta(n,m)) \sum_{\wt{\mathfrak{X}}\in \mathbb{X}} \sum_{\mX_k\in \wt{\mathfrak{X}}}\P(X_{n+1}\in \mathscr{X}_k|\wh{\mathfrak{X}}(\mathcal{D}_{1:n})=\wt{\mathfrak{X}})
    \\&~~~~~~~~~~~~~~~~~~~~~~~~~~~~~~~~~~~~~~\cdot\P(\wh{\mathfrak{X}}(\mathcal{D}_{1:n})=\wt{\mathfrak{X}})\\
    &= 1-\alpha + \delta(n,m)\,.
\end{align*}
When the conformity scores have a continuous joint distribution, the upper bound is shown analogously using the upper bound in Theorem \ref{thm:conditional_gu}.

\end{proof}

\section{Conformal Prediction through Robust Tree Refitting}\label{sec:tighter}
\subsection{Hallucination Algorithm}

 Here, we also consider a variation from Algorithm \ref{alg:robust_tree} to take $X_{n+1}$ into account without using the true $Y_{n+1}$ value. It leads to a tighter coverage bound, where the price is the computational cost for refitting the tree for every $X_{n+1}$ value.

\begin{algorithm}[!h]
\caption{Conformal Tree (hallucinating $S_{n+1}$)}\label{alg:conf_tree_mod}
\spacingset{1.2}
\begin{algorithmic}[1]
\Require Fitted model $\hat{\mu}$, calibration dataset $\mathcal{D}_{1:n}$, tree hyperparameters ($m,{K_{\rm max}},\eta=0.05$), test point $X_{n+1}$
\State Compute conformity scores $S_i=S(X_i,Y_i)$ for $i\in [n]$ 
\State Fit a tree $\hat \mT(\mathcal D_{1:n})$ with Algorithm \ref{alg:robust_tree} on $\{(X_i,S(X_i,Y_i)):i\in[n]\}\cup\{(X_{n+1},\wt{S}_{n+1})\}$, where $\wt{S}_{n+1}$ is always considered to be the center of the range of conformity scores in the current leaf node containing $X_{n+1}$
\State Define a partition $\wh{\mathfrak{X}}(\mathcal{D}_{1:n})=\{\mathscr{X}_1,\ldots,\mathscr{X}_K\}$ induced by leaves of $\hat \mT(\mathcal D_{1:n}).$\label{ln:rule1}
\State For $k=1,...,K$, compute $S^*(\mathcal{D}_{1:n}^{k,\wh{\mathfrak{X}}(\mathcal{D}_{1:n})})$, the $\lceil (1-\alpha)\cdot (m_k-2)+1\rceil/m_k$-th smallest value of $\{S(X_i,Y_i): i \in[n],X_i\in\mathscr{X}_k\}$, where $m_k$ is the cardinality of this set.\label{ln:rule2}
\State Construct a prediction set $\wh{C}'_{n,m}$ as 
\[
\wh{C}'_{n,m}(x) = \{y:S(X_{n+1},y)\leq S^*(x)\},
\]
where $S^*(x)=S^*(\mathcal{D}_{1:n}^{k,\wh{\mathfrak{X}}(\mathcal{D}_{1:n})}):x\in\mathscr{X}_{k}$\,.
\end{algorithmic}
\end{algorithm}
 
 {Algorithm \ref{alg:conf_tree_mod} differs from Algorithm \ref{alg:robust_tree} in that it uses a value for $X_{n+1}$ by \emph{hallucinating} a conformity score $\wt{S}_{n+1}$ that will never affect the criterion in any node in which $X_{n+1}$ lies. For example, $\wt{S}_{n+1}$ can be reset prior to each splitting decision to be equal to any other $S_i$ lying inside the same node as $X_{n+1}$. Because of this, $\wt{S}_{n+1}$ will never affect the criterion value in any node, and the tree will be fit as if the data point $(X_{n+1},\wt{S}_{n+1})$ were excluded from any criterion calculations. Therefore, this algorithm avoids any possibility of $(X_{n+1},S_{n+1})$ changing the tree by changing the candidacy status of a potential leaf node. The computational cost is equivalent when computing a prediction set for a single value of $X_{n+1}$, but scales linearly in the number of test points at which to evaluate $\wh{C}'_{n,m}(\cdot)$, whereas Algorithm \ref{alg:conf_tree} has constant cost as a function of the number of test points at which to evaluate $\wh{C}_{n,m}(\cdot)$.}
 
 \subsection{Theoretical Coverage of Hallucination Algorithm}
Now, we present that with the refitting algorithm in Algorithm \ref{alg:conf_tree_mod}, we have improved the bounds due to the additional knowledge on $X_{n+1}$. In this section, we use the following notations to denote partitions: Let $\wh{\mathfrak{X}}_n(X_{n+1})=\wh{\mathfrak{X}}(\mathcal{D}_{1:n},X_{n+1})$ denote the partition resulting from Algorithm \ref{alg:conf_tree_mod} on data $(X_1,Y_1),\ldots,(X_n,Y_n)$ and $X_{n+1}$. Let $\wh{\mathfrak{X}}_{n+1}=\wh{\mathfrak{X}}(\mathcal{D}_{1:n+1})$ denote the partition resulting from Algorithm \ref{alg:robust_tree} on $(X_1,Y_1),\ldots,(X_{n+1},Y_{n+1})$.

\begin{thm} \label{thm:new} 
Under the same setting as in Theorem \ref{thm:conditional_gu}, but with the tree fitting algorithm as Algorithm \ref{alg:conf_tree_mod} instead of Algorithm \ref{alg:robust_tree}, denote the fitted partition as $\wh{\mathfrak{X}}(\mathcal{D}_{1:n},X_{n+1})  = \{\mathscr{X}_k\}_{k=1}^K.$ Then with the prediction set $\wh{C}'_{n,m}(X_{n+1})$ defined in Algorithm \ref{alg:conf_tree_mod}, we have
    \[
    \mathbb{P}\{Y_{n+1}\in \wh{C}'_{n,m}(X_{n+1}) \mid X_{n+1} \in \mathscr{X}_k , \wh{\mathfrak{X}}(\mathcal{D}_{1:n},X_{n+1})\} \geq  1-\alpha-\frac{2}{m}\,.
    \]
    Furthermore, if the conformity scores have a continuous joint distribution, we have
    \[
    \mathbb{P}\{Y_{n+1}\in\wh{C}'_{n,m}(X_{n+1}) \mid X_{n+1} \in \mathscr{X}_k , \wh{\mathfrak{X}}(\mathcal{D}_{1:n},X_{n+1})\} \leq  1-\alpha+\frac{1}{m-2}+\frac{2}{m}\,.
    \]
\end{thm}
\begin{proof}
    We have
\begin{align*}
    & \mathbb{P}(Y_{n+1}\not\in \wh{C}_{n,m}^{\wh{\mathfrak{X}}_{n}(X_{n+1})}(X_{n+1})\mid X_{n+1}\in \mathscr{X}_k, \,\wh{\mathfrak{X}}_n(X_{n+1})) \\
    =\,\,& 
    \mathbb{P}(Y_{n+1}\not\in \wh{C}_{n,m}^{\wh{\mathfrak{X}}_{n}(X_{n+1})}(X_{n+1})\mid X_{n+1}\in \mathscr{X}_k, \,\wh{\mathfrak{X}}_n(X_{n+1}), \mathcal{E}_{\mathrm{inlier}}) \mathbb{P}(\mathcal{E}_{\mathrm{inlier}}\mid X_{n+1}\in \mathscr{X}_k, \,\wh{\mathfrak{X}}_n(X_{n+1})) \\
    &\quad + \mathbb{P}(Y_{n+1}\not\in \wh{C}_{n,m}^{\wh{\mathfrak{X}}_{n}(X_{n+1})}(X_{n+1})\mid X_{n+1}\in \mathscr{X}_k, \,\wh{\mathfrak{X}}_n(X_{n+1}), \mathcal{E}^c_{\mathrm{inlier}}) \mathbb{P}(\mathcal{E}^c_{\mathrm{inlier}}\mid X_{n+1}\in \mathscr{X}_k, \,\wh{\mathfrak{X}}_n(X_{n+1})) \\
    \leq\,\, &  \alpha + \frac{2}{m_k}
    \leq\,\, \alpha + \frac{2}{m}\,,
\end{align*}
where we applied Lemma \ref{lem:new2} and used $ \mathbb{P}(\mathcal{E}^c_{\mathrm{inlier}}\mid X_{n+1}\in \mathscr{X}_k, \,\wh{\mathfrak{X}}_n(X_{n+1})) \leq 2/m_k$ due to the uniformity of the rank of the scores\,.

 Now, we consider the upper bound. Note that on $\mathcal{E}_{\mathrm{inlier}}$ we have $\wh{\mathfrak{X}}_{n+1}=\wh{\mathfrak{X}}_n(X_{n+1}) $ due to Lemma \ref{lem:inlier_same}. Therefore, when the conformity scores have a continuous joint distribution, we have
\begin{align*}
    & \mathbb{P}(Y_{n+1}\in \wh{C}_{n,m}^{\wh{\mathfrak{X}}_n(X_{n+1})}(X_{n+1})\mid X_{n+1}\in \mathscr{X}_k, \wh{\mathfrak{X}}_n(X_{n+1})) \\
    =\,\,& 
    \mathbb{P}(Y_{n+1}\in \wh{C}_{n,m}^{\wh{\mathfrak{X}}_{n+1}}(X_{n+1})\mid X_{n+1}\in \mathscr{X}_k, \wh{\mathfrak{X}}_{n+1}, \mathcal{E}_{\mathrm{inlier}}) \mathbb{P}(\mathcal{E}_{\mathrm{inlier}}) \\
    &\quad + \mathbb{P}(Y_{n+1}\in \wh{C}_{n,m}^{\wh{\mathfrak{X}}_n(X_{n+1})}(X_{n+1})\mid X_{n+1}\in \mathscr{X}_k, \wh{\mathfrak{X}}_{n}(X_{n+1}), \mathcal{E}^c_{\mathrm{inlier}}) \mathbb{P}(\mathcal{E}^c_{\mathrm{inlier}}) \\
    \leq\,\, &  1-\alpha + \frac{1}{m-2}+\frac{2}{m}\,. 
\end{align*}

Finally, we note that marginal coverage bounds can be obtained in a similar manner to Theorem \ref{thm:main}.
\end{proof}

\begin{lemma}\label{lem:new2}
Let $(X_1,Y_1),\ldots,(X_{n+1},Y_{n+1})$ be i.i.d. samples from a joint probability measure $\mathbb{P}$. Assume we apply Algorithm \ref{alg:conf_tree_mod} on $(X_1,S(X_1,Y_1)),\ldots,(X_n,S(X_n,Y_n))$ for test point $X_{n+1}$ with resulting partition $\wh{\mathfrak{X}}_n(X_{n+1})$. Then with $\wh{C}_{n,m}^{\wh{\mathfrak{X}}_{n}(X_{n+1})}(X_{n+1})$ obtained using Algorithm \ref{alg:conf_tree_mod}, and $\wh{\mathfrak{X}}_{n+1}$ the partition obtained using Algorithm 1 on all $n+1$ data points, we have
\[
    \mathbb{P}(Y_{n+1}\not\in \wh{C}_{n,m}^{\wh{\mathfrak{X}}_{n}(X_{n+1})}(X_{n+1})\mid X_{n+1}\in \mathscr{X}_k, \wh{\mathfrak{X}}_{n}(X_{n+1}), \mathcal{E}_{\mathrm{inlier}}) \leq \alpha
    \]
    where $\mathcal{E}_{\mathrm{inlier}}$ is as defined in \eqref{eq:alg3_inlierset}. Furthermore, if the conformity scores have a continuous joint distribution, then
    \[
    \mathbb{P}(Y_{n+1}\in \wh{C}_{n,m}^{\wh{\mathfrak{X}}_{n}(X_{n+1})}(X_{n+1})\mid X_{n+1}\in \mathscr{X}_k, \wh{\mathfrak{X}}_{n}(X_{n+1}), \mathcal{E}_{\mathrm{inlier}}) \geq 1-\alpha + 1/(m-2)\,.
    \]
\end{lemma}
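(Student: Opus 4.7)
}

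The plan is to mirror the rank-based argument in the proof of Lemma \ref{lem:conditional_trick}, but exploit the fact that Algorithm \ref{alg:conf_tree_mod} already "sees" $X_{n+1}$, so we no longer need the event $\mathcal{E}_{\mathrm{cand}}$ and only need to control $\mathcal{E}_{\mathrm{inlier}}$. First, by Lemma \ref{lem:new1_new_alg}, on $\mathcal{E}_{\mathrm{inlier}}$ we have $\wh{\mathfrak{X}}_n(X_{n+1})=\wh{\mathfrak{X}}_{n+1}$ almost surely, so throughout the argument I can freely substitute the "full" partition $\wh{\mathfrak{X}}_{n+1}$ (which is a \emph{symmetric} function of $(X_1,S_1),\ldots,(X_{n+1},S_{n+1})$) for $\wh{\mathfrak{X}}_n(X_{n+1})$.

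Next I apply Lemma \ref{lem:exch} to obtain that $\{(X_i,S_i)\}_{i=1}^{n+1}$ remain exchangeable conditionally on $\wh{\mathfrak{X}}_{n+1}$. Restricting to a leaf $\mathscr{X}_k$ and conditioning on $X_{n+1}\in\mathscr{X}_k$, the same argument as in equation \eqref{eq:ground_for_exch} in the proof of Lemma \ref{lem:foundation} yields that the scores $\{S_i: X_i\in\mathscr{X}_k\}\cup\{S_{n+1}\}$ are conditionally exchangeable. Crucially, because Algorithm \ref{alg:conf_tree_mod} uses a hallucinated $\wt S_{n+1}$ that by construction never alters any splitting criterion, the event $\mathcal{E}_{\mathrm{cand}}$ of Lemma \ref{lem:conditional_trick} is automatically satisfied, so no additional conditioning is required for that purpose. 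Finally, $\mathcal{E}_{\mathrm{inlier}}$ is symmetric in the indices in $\{i\in[n+1]:X_i\in\mathscr{X}_k\}$ (it asserts only that $S_{n+1}$ is not the unique min or max among the scores in the leaf), hence exchangeability is preserved after conditioning on $\mathcal{E}_{\mathrm{inlier}}$.

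Let $m_k=|\{i\in[n+1]:X_i\in\mathscr{X}_k\}|\geq m$ and let $\wt R_{n+1}$ be the rank of $S_{n+1}$ among the scores in that leaf. Unconditional exchangeability makes $\wt R_{n+1}$ uniform on $\{1,\ldots,m_k\}$; conditioning on $\mathcal{E}_{\mathrm{inlier}}$ simply deletes the endpoints, so $\wt R_{n+1}$ is uniform on $\{2,\ldots,m_k-1\}$. Since the threshold $S^{*}(\mathcal{D}_{1:n}^{k,\wh{\mathfrak{X}}_n(X_{n+1})})$ is the $\lceil(1-\alpha)(m_k-2)+1\rceil$-th order statistic within the leaf, the miscoverage event corresponds to $\wt R_{n+1}>\lceil(1-\alpha)(m_k-2)+1\rceil$, and the same arithmetic as in \eqref{eq:mix_match_upper} gives the $\leq \alpha$ bound; under continuity of the joint distribution, the analogue of \eqref{eq:mix_match_lower} yields the $1-\alpha+1/(m-2)$ bound (which I read as the two-sided tightness corresponding to the upper bound on coverage).

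The main obstacle, as in Lemma \ref{lem:conditional_trick}, is verifying that each successive conditioning step preserves within-leaf exchangeability; I expect the most delicate point to be the passage from $\wh{\mathfrak{X}}_n(X_{n+1})$ to $\wh{\mathfrak{X}}_{n+1}$. The whole argument hinges on Lemma \ref{lem:new1_new_alg}, which ensures that on $\mathcal{E}_{\mathrm{inlier}}$ these two partitions coincide, so the symmetric statistic $\wh{\mathfrak{X}}_{n+1}$ can play the role of the conditioning $\sigma$-field. Once this substitution is legitimized, the remaining steps are exactly the rank-uniformity computation used in the proof of Lemma \ref{lem:conditional_trick}, only cleaner since no $\mathcal{E}_{\mathrm{cand}}$ is needed.
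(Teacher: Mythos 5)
Your proposal follows the same route as the paper's proof: substitute $\wh{\mathfrak{X}}_{n+1}$ for $\wh{\mathfrak{X}}_n(X_{n+1})$ on $\mathcal{E}_{\mathrm{inlier}}$ via Lemma \ref{lem:new1_new_alg}, invoke Lemma \ref{lem:exch} and \eqref{eq:ground_for_exch} to get within-leaf exchangeability conditional on the symmetric statistic $\wh{\mathfrak{X}}_{n+1}$, then read off the conditional uniformity of the rank $\wt R_{n+1}$ on $\{2,\ldots,m_k-1\}$ and carry out the same arithmetic as \eqref{eq:mix_match_upper}--\eqref{eq:mix_match_lower}. You also correctly note that $\mathcal{E}_{\mathrm{cand}}$ is rendered unnecessary by the hallucination step, and that the stated $\geq$ in the second bound must in fact be read as $\leq$.

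One small but real imprecision: you justify dropping to $\{2,\ldots,m_k-1\}$ by asserting that $\mathcal{E}_{\mathrm{inlier}}$ is ``symmetric in the indices in $\{i:X_i\in\mathscr{X}_k\}$'' so that ``exchangeability is preserved after conditioning.'' That is not true --- $\mathcal{E}_{\mathrm{inlier}}$ singles out index $n+1$ (it says nothing about whether any $S_i$ with $i\leq n$ is extremal), so it is \emph{not} a symmetric function of the within-leaf data and the conditional law of the scores is no longer exchangeable. The paper is careful about exactly this distinction: $\mathcal{E}_{\mathrm{cand}}$ (in Lemma \ref{lem:conditional_trick}) is checked to be symmetric, whereas for $\mathcal{E}_{\mathrm{inlier}}$ the argument is instead the elementary one that conditioning a uniform random variable on $\{1,\ldots,m_k\}$ on the event that it lies in $\{2,\ldots,m_k-1\}$ yields the uniform on $\{2,\ldots,m_k-1\}$. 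Your final conclusion is right, but the stated justification should be replaced by this direct conditional-uniformity observation rather than a (false) symmetry claim.
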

\begin{proof}
Note first as a result of Lemma \ref{lem:exch} that $\{S_i\mid \wh{\mathfrak{X}}_{n+1}:i\in [n+1]\}$ are exchangeable random variables as the conditioning statistic $\wh{\mathfrak{X}}_{n+1}$ is a symmetric function of the data $(X_1,S_1),\ldots, (X_{n+1}, S_{n+1})$. Applying \eqref{eq:ground_for_exch}, we have also that $\{S_i|\wh{\mathfrak{X}}_{n+1},X_{n+1}\in\mathscr{X}_k:i\in\mathscr{X}_k\}$, the conformity scores of data within the leaf node containing $X_{n+1}$, are conditionally exchangeable. We denote the number of data points in leaf $k$ as $m_k:=|\{X_i:X_i\in\mathscr{X}_k\}|$. This means that the rank of $S_{n+1}$ \emph{within} $\mathscr{X}_k$ is uniformly distributed on the set $\{1,\ldots,m_k\}$. Explicitly, letting $\wt{R}_{n+1}$ denote the rank of $S_{n+1}$ within $\mathscr{X}_k$,
\[
\mathbb{P}(\wt{R}_{n+1}=j\mid X_{n+1}\in\mathscr{X}_k, \wh{\mathfrak{X}}_{n+1}) = 1/m_k
\]
for any $j\in [m_k]$. When we add the additional condition on $\mathcal{E}_{\mathrm{inlier}}$, the rank of $S_{n+1}$ is conditionally uniform on $\{2,\ldots,m_k-1\}$, as $\mathcal{E}_{\mathrm{inlier}}$ is exactly the event that the rank of $S_{n+1}$ is not $1$ or $m_k$. Thus 
\[
\mathbb{P}(\wt{R}_{n+1}=j\mid X_{n+1}\in\mathscr{X}_k, \wh{\mathfrak{X}}_{n+1}, \mathcal{E}_{\mathrm{inlier}}) = 1/(m_k-2)
\]
for any $j\in \{2,\ldots,m_k-1\}$. Therefore, 
\begin{align*}
    \mathbb{P}(Y_{n+1}\not\in \,&\wh{C}_{n,m}^{\wh{\mathfrak{X}}_{n+1}}(X_{n+1})\mid X_{n+1}\in \mathscr{X}_k, \wh{\mathfrak{X}}_{n+1}, \mathcal{E}_{\mathrm{inlier}}) \\&= \mathbb{P}(\wt{R}_{n+1} > \lceil(1-\alpha)(m_k-2)+1\rceil\mid X_{n+1}\in\mathscr{X}_k,\wh{\mathfrak{X}}_{n+1},\mathcal{E}_{\mathrm{inlier}}) \\
    &\leq \frac{(m_k-2)-\lceil (m_k-2)(1-\alpha)+1\rceil+1}{m_k-2} \\
    &\leq \frac{(m_k-2)- (m_k-2)(1-\alpha)}{m_k-2} \\
    &\leq \alpha\,.
\end{align*}
Furthermore, if the conformity scores have a continuous joint distribution, then we have 
\begin{align*}
    \mathbb{P}(Y_{n+1}\in\, &\wh{C}_{n,m}^{\wh{\mathfrak{X}}_{n+1}}(X_{n+1})\mid X_{n+1}\in \mathscr{X}_k, \wh{\mathfrak{X}}_{n+1}, \mathcal{E}_{\mathrm{inlier}}) \\ 
    &= \mathbb{P}(\wt{R}_{n+1}\leq \lceil(1-\alpha)(m_k-2)+1\rceil\mid X_{n+1}\in\mathscr{X}_k,\wh{\mathfrak{X}}_{n+1},\mathcal{E}_{\mathrm{inlier}}) \\
    &= \frac{\lceil (m_k-2)(1-\alpha)+1\rceil-1}{m_k-2} \\
    &= \frac{\lceil (m_k-2)(1-\alpha)\rceil}{m_k-2} \\
    &\leq 1-\alpha + \frac{1}{m_k-2} \\
    &\leq 1-\alpha + \frac{1}{m-2}\,.
\end{align*}
Now, by Lemma \ref{lem:inlier_same}, noting that on $\mathcal{E}_{\mathrm{inlier}}$ we have $\wh{\mathfrak{X}}_{n+1}=\wh{\mathfrak{X}}_n(X_{n+1}) $, we conclude the proof.
\end{proof}

\begin{lemma}\label{lem:inlier_same}
    Let $(X_1,Y_1),\ldots,(X_{n+1},Y_{n+1})$ be i.i.d. samples from a joint probability measure $\mathbb{P}$. Let $\mathscr{X}_{k(X_{n+1})}$ be the leaf of $\wh{\mathfrak{X}}_n(X_{n+1})$ containing $X_{n+1}$.
    Define 
    \begin{equation}\label{eq:alg3_inlierset}
        \mathcal{E}_{\mathrm{inlier}}=\{\min_{\substack{i:i \in [n] \\ X_i \in \mathscr{X}_{k(X_{n+1})}}} S_i \leq S_{n+1} \leq \max_{\substack{i:i \in [n] \\ X_i \in \mathscr{X}_{k(X_{n+1})}}} S_i\}\,,
    \end{equation}
    meaning $\mathcal{E}_{\mathrm{inlier}}$ is the event on which $S_{n+1}$ does not have the uniquely largest or smallest conformity score among those in its leaf node. Then under event $\mathcal{E}_{\mathrm{inlier}}$, $\wh{\mathfrak{X}}_n(X_{n+1}) = \wh{\mathfrak{X}}_{n+1}$ with probability 1.
    \label{lem:new1_new_alg}
\end{lemma}

\begin{proof}
    Note that the only possibility for $\wh{\mathfrak{X}}_{n+1}$ to differ from $\wh{\mathfrak{X}}_n(X_{n+1})$ requires $S_{n+1}$ to be the unique maximizer or minimizer inside a tree node. Since leaf nodes are subsets of any intermediate nodes, it also requires $S_{n+1}$ to be the unique maximizer or minimizer of the leaf node in which $X_{n+1}$ resides. Otherwise, it will never have had an impact in the tree fitting process. Thus, $\wh{\mathfrak{X}}_n(X_{n+1}) \neq \wh{\mathfrak{X}}_{n+1}$ implies $\mathcal{E}^c_{\mathrm{inlier}}$. 
\end{proof}

\subsection{Empirical Consideration of Hallucination Algorithm}

\begin{table}[!t]
\centering
\begin{minipage}{\textwidth}
\resizebox{\textwidth}{!}{%
\begin{tabular}{|l|cc|ccc|ccc|ccc|ccc|ccc|ccc|}
\hline
 & \multicolumn{2}{c|}{Split Conformal} & \multicolumn{3}{c|}{Conf. Tree (Hallucination)} & \multicolumn{3}{c|}{Conformal Tree} & \multicolumn{3}{c|}{Conf. Tree (CART)} & \multicolumn{3}{c|}{Conformal Forest} & \multicolumn{3}{c|}{Locally Weighted)} & \multicolumn{3}{c|}{CQR} \\
 & Width & Cov. & Width & Cov. & P.B. & Width & Cov. & P.B. & Width & Cov. & P.B. & Width & Cov. & P.B. & Width & Cov. & P.B. & Width & Cov. & P.B. \\
\hline
Data 1 & 4.06 & 0.9 & \textbf{3.24} & 0.86 & \textbf{0.66} & \textbf{3.24} & 0.86 & \textbf{0.66} & 4.04 & 0.93 & 0.54 & 3.85 & 0.9 & 0.54 & 4.31 & 0.91 & 0.55 & 3.82 & 0.89 & 0.64 \\
Data 2 & 2.48 & 0.87 & 3.1 & 0.82 & \textbf{0.81} & 3.1 & 0.82 & \textbf{0.81} & 4.1 & 0.9 & 0.61 & 3.65 & 0.91 & 0.36 & 4.01 & 0.87 & 0.36 & 3.11 & 0.88 & 0.3 \\
bike & 2.05 & 0.93 & 2.0 & 0.9 & \textbf{0.67} & 2.06 & 0.91 & 0.66 & 2.4 & 0.95 & 0.43 & 2.12 & 0.93 & 0.49 & 2.17 & 0.95 & 0.46 & \textbf{1.9} & 0.91 & 0.57 \\
bio & 2.19 & 0.87 & 2.05 & 0.85 & 0.65 & 2.04 & 0.85 & \textbf{0.68} & 2.36 & 0.9 & 0.36 & 2.61 & 0.9 & 0.33 & 2.24 & 0.87 & 0.48 & \textbf{1.9} & 0.89 & 0.62 \\
community & 2.15 & 0.92 & 1.87 & 0.89 & 0.67 & 1.89 & 0.9 & 0.69 & 2.4 & 0.96 & 0.56 & 2.56 & 0.94 & 0.48 & 2.06 & 0.95 & 0.6 & \textbf{1.84} & 0.91 & \textbf{0.72} \\
concrete & 0.8 & 0.91 & \textbf{0.73} & 0.88 & 0.62 & \textbf{0.73} & 0.87 & \textbf{0.65} & 0.81 & 0.9 & 0.46 & 0.83 & 0.89 & 0.51 & 0.77 & 0.88 & 0.6 & 0.88 & 0.88 & 0.4 \\
homes & 0.9 & 0.9 & 0.93 & 0.88 & \textbf{0.79} & 0.96 & 0.88 & 0.77 & 1.28 & 0.9 & 0.64 & 1.17 & 0.9 & 0.46 & \textbf{0.84} & 0.9 & 0.69 & 0.93 & 0.91 & 0.53 \\
star & 0.18 & 0.9 & \textbf{0.16} & 0.85 & \textbf{0.7} & \textbf{0.16} & 0.86 & \textbf{0.7} & 0.2 & 0.93 & 0.24 & 0.2 & 0.91 & 0.44 & 0.2 & 0.92 & 0.35 & 0.2 & 0.92 & 0.28 \\
\hline
\end{tabular}
}
\end{minipage}
\caption{Comparison of width, coverage, and prediction bands (P.B.) across different methods and datasets, including the hallucination algorithm. The experiments were conducted on the randomly selected 500 data points in order to run the Hallucination algorithm in a reasonable amount of time.}
\label{table:results-hallucination}
\end{table}

We repeat the experiment described earlier in Section \ref{sec:benchmark} using the hallucination version of the algorithm which boasts a tighter coverage guarantee. However, this algorithm is significantly more computationally expensive, as it requires the robust tree to be fit separately for each test point, rather than just once in total. For this reason, in this section, we limit the size of each dataset to 500 randomly chosen observations. The results are in Table \ref{table:results-hallucination}.

{We find the performance of Conformal Tree with and without hallucination to be very similar. This is because it will be rare that the inclusion of the covariate value for the test point adds a new eligible split \emph{and} that this split is chosen among the split candidates. Both of these conditions are necessary for the tree from the hallucination algorithm to differ from the non-hallucination version. As this is fairly unlikely, we expect the two methods to perform similarly in most cases. While some of the intervals end up slightly sharper using the hallucination algorithm, we generally recommend against it if a low runtime is of concern to the practitioner. This is because the hallucination algorithm requires a new robust tree to be fit for each test point, which can add significant cost if the {test set} is large. On the other hand, if the user were only interested in a prediction set for a single test point, there would be almost no downside to the hallucination algorithm.}

\section{Sensitivity Analysis for $K_{\rm max}$ and $m$} \label{sec:hyp_sense}
For each dataset, we vary $m$ and $K_{\rm max}$ over a grid of values in order to determine the relative effect of varying these hyperparameters on the performance of our method. We plot the resulting average interval score length on held-out data for each combination of $m$ and $K_{\rm max}$ in Figure~\ref{fig:k-m-analysis1}. The figure shows that the best results are generally obtained with relatively small $m$, and often with the large $K_{\rm max}$ as possible, implying that there is a significant benefit in capturing more local structure in the data. 
    With this in mind, we recommend that practitioners choose the smallest $m$ as possible where $m$ is chosen based on their desired worst-case coverage computed in Theorem \ref{thm:conditional_gu}. Practitioners are welcome to tune hyperparameters using additional held-out data if they so desire as well. 

\begin{figure}
    \centering
    \includegraphics[width=0.9\linewidth]{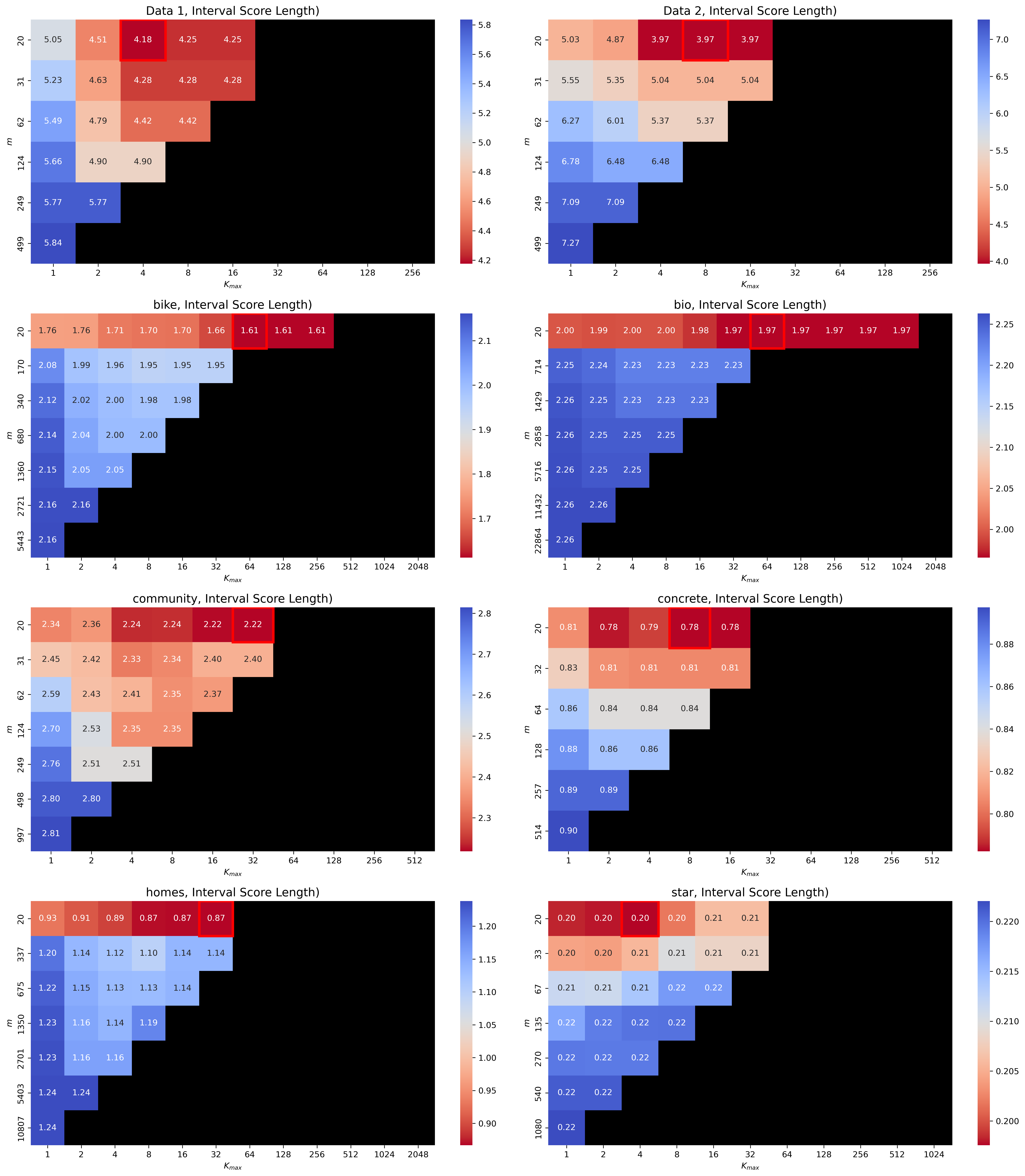}
    \caption{Comparison of varying values for $m$ and ${K_{\rm max}}$ on interval score length on each dataset. Metrics are computed on a held-out validation data split. The red-lined lined box indicates the optimal (achieving the smallest interval score length).}
    \label{fig:k-m-analysis1}
\end{figure}


\section{Experimental Details}
\label{sec:exp-info}

\subsection{Details for Section \ref{sec:benchmark}}\label{sec:full_benchmark_detail}

For the dataset, besides the synthetic datasets (Data 1 and Data 2) in Section \ref{sec:exp_adapt}, we use (\texttt{bike}, \texttt{bio}, \texttt{community}, \texttt{concrete}, \texttt{homes}, \texttt{star}) for regression problems that have been previously used as benchmarks in conformal prediction \citep{rossellini2023integrating}. As the predictive model for all datasets, we used a random forest model with 100 trees, fit using the variance reduction criterion, requiring a minimum of two samples to split. 

For Conformal Tree, we used a fixed minimum improvement threshold of 0.05, and tuned the hyperparameters $m$ and $K_{\mathrm{max}}$ on each dataset in order to {minimize the interval score loss}. Conformal Forest mentioned in Remark \ref{rm:forest} combines intervals from 100 CART trees, with randomness arising from bagging and feature subsampling. The intervals are combined using a recent majority vote procedure referenced in the text \citep{gasparin2024merging}. Each tree in the forest also is regularized using the same hyperparameters as described earlier. The Locally weighted method uses an auxiliary random forest model to predict the absolute residuals. The hyperparameters of this model are tuned analogously, as we select the combination that minimizes interval score length on held-out validation data. For CQR, the quantiles are predicted in the manner of \citet{meinshausen2006quantile}. All methods other than CQR only utilize the predictions of the conditional mean from the model. 

\subsection{Large Language Model Uncertainty Quantification}

In the experiments in Secion \ref{sec:LLM}, we use a large language model for downstream classification tasks. We used the GPT-4o model produced by OpenAI in June of 2024 for our classification problems. We used a temperature setting of 0.0, yielding (ideally) deterministic output, except for our naive uncertainty quantification comparisons, in which we used a temperature setting of 1.0.

\subsubsection{Skin-disease Prediction}\label{sec:full_skin_detail}
The calibration data for the skin disease task is found in the UCI machine learning
repository \citep{misc_dermatology_33}. The dataset consists of 366 observations. We use the twelve clinical features, and ignore the remaining 23 histopathological features,
as the purpose of this experiment is to emulate a user describing physical symptoms to a language model. The prompt used for the classification task is given below.

\begin{lstlisting}
Predict the diagnosis of Eryhemato-Squamous disease in the following case, using the following clinical features. The age feature simply represents the age of
    the patient. Every other feature was given a degree in the range of 0 to 3. Here, 0 indicates that the feature was not present, 3 indicates the largest amount possible, and 1, 2 indicate the relative intermediate values.

    erythema: 2.0, scaling: 2.0, definite_borders: 0.0, itching: 3.0, koebner phenomenon: 0.0, polygonal papules: 0.0, follicular papules: 0.0, oral mucosal involvement: 0.0, knee and elbow involvement: 1.0, scalp involvement: 0.0, family history: 0.0, age: 55.0

    The possible classes are: psoriasis, seboreic dermatitis, lichen planus, pityriasis rosea, cronic dermatitis, pityriasis rubra pilaris.

    Please estimate the probability of each possible diagnosis for this case. The following is for research purposes only. I understand that a real patient must see a qualified doctor with such a concern.

    Format your answer as:
    psoriasis: (prob),
    seboreic dermatitis: (prob),
    ...

    Do your best to provide an accurate answer strictly in this format, and do not include anything else in your response.
\end{lstlisting}

\subsubsection{Classifying States of Legislators}

The task in this experiment is to predict U.S. legislators' state of employment from their DW-NOMINATE scores \citep{poole1985spatial}. These are estimates of two-dimensional latent variables in a spatial model based on roll-call voting. To an unfamiliar reader, they can loosely interpreted as two-dimensional summary statistics pertaining to the legislators' ideology.

The dataset of legislators and estimates of their DW-NOMINATE scores was obtained from Voteview \citep{lewis2024voteview}. We filter the data by only including congresses 101-117. We consider each unique legislator from one of the fifty U.S. states. For duplicate entries of legislators, we include only the first occurrence of that individual. The prompt used for the classification task is given below.

\begin{lstlisting}
Predict the U.S. state of a legislator with the following DW-NOMINATE score: (-0.4, 0.11).
Estimate the probability that they were from each of the 50 U.S States. Format your answer as
    Alabama: 0.01,
    Alaska: 0.02,
    Arizona: 0.03,
    ...

Do your best to provide an accurate answer in this format.
\end{lstlisting}



The naive uncertainty quantification method works by sampling various classification responses from the language model with the temperature set to 1.0. For our experiments, we sampled 11 classification responses (simplices) for each test point. We construct a prediction set for each sample by taking the minimum threshold such that including all labels with probability mass above that threshold causes the total mass to exceed $1-\alpha$. We then combine the $11$ sets for each test point by taking a majority vote. This method produces a naive, uncalibrated prediction set, which we use solely for comparisons.


\section{Additional Experimental Results}\label{sec:add_vis}

In this section, we include additional visualizations referenced from the main body.

\paragraph{Constant calibration set size} We repeat the experiment in Section \ref{sec:benchmark} with a fixed size of $n=500$ points in the calibration set for each approach, rather than a proportion of the total data size. The results are in Table \ref{table:all-tuned-results-forest-small-tree} when a single tree is utilized and in Table \ref{table:all-tuned-results-forest-small-tree} when a forest is used.

\begin{table}[!t]
\centering
\resizebox{\columnwidth}{!}{
\begin{tabular}{|l|ccc|cccc|cccc|cccc|}
\hline
 & \multicolumn{3}{c|}{Split Conformal} & \multicolumn{4}{c|}{Conformal Tree} & \multicolumn{4}{c|}{Locally Weighted (Tree)} & \multicolumn{4}{c|}{CQR (Tree)} \\
 & Width & Cov. & ISL & Width & Cov. & ISL & P.B. & Width & Cov. & ISL & P.B. & Width & Cov. & ISL & P.B. \\
\hline
Data 1 & 4.29 & 0.93 & 5.37 & \textbf{3.59} & 0.91 & \textbf{4.33} & 0.59 & 3.85 & 0.92 & 4.39 & 0.59 & 4.57 & 0.88 & 6.16 & \textbf{0.8} \\
Data 2 & \textbf{2.69} & 0.9 & 5.07 & 2.87 & 0.87 & \textbf{3.85} & \textbf{0.87} & 3.31 & 0.91 & 4.27 & 0.58 & 5.29 & 0.9 & 9.51 & 0.0 \\
bike & 1.18 & 0.91 & 1.83 & \textbf{1.11} & 0.87 & 1.83 & \textbf{0.59} & 1.13 & 0.89 & \textbf{1.61} & \textbf{0.59} & 3.63 & 0.89 & 5.0 & 0.0 \\
bio & 1.53 & 0.89 & \textbf{2.12} & \textbf{1.48} & 0.88 & 2.15 & \textbf{0.48} & 1.62 & 0.91 & 2.14 & 0.29 & 3.41 & 0.91 & 3.61 & 0.0 \\
community & 1.83 & 0.9 & 2.77 & \textbf{1.68} & 0.88 & 2.6 & \textbf{0.67} & 1.82 & 0.91 & \textbf{2.47} & 0.59 & 3.03 & 0.89 & 4.66 & 0.0 \\
concrete & 0.67 & 0.91 & \textbf{0.87} & \textbf{0.63} & 0.88 & 0.88 & \textbf{0.62} & 0.65 & 0.9 & 0.89 & 0.53 & 1.97 & 0.88 & 2.37 & 0.0 \\
homes & \textbf{0.58} & 0.9 & 1.13 & \textbf{0.58} & 0.88 & 1.08 & \textbf{0.62} & 0.63 & 0.88 & \textbf{1.06} & 0.59 & 3.49 & 0.91 & 4.19 & 0.0 \\
star & 0.17 & 0.89 & \textbf{0.21} & \textbf{0.16} & 0.87 & 0.22 & \textbf{0.55} & 0.17 & 0.89 & \textbf{0.21} & 0.4 & 0.24 & 0.89 & 0.29 & 0.0 \\
\hline
\end{tabular}
}
\caption{Empirical results for tree-based methods on real and synthetic data as the average over five trials on width, coverage, ISL, and proportion better (P.B.) for $\alpha=0.1$. Calibration sets all have the size $n=500$.} \label{table:all-tuned-results-forest-small-tree}
\end{table}

\begin{table}[!t]
\centering
\resizebox{\columnwidth}{!}{
\begin{tabular}{|l|ccc|cccc|cccc|cccc|}
\hline
 & \multicolumn{3}{c|}{Split Conformal} & \multicolumn{4}{c|}{Conformal Forest} & \multicolumn{4}{c|}{Locally Weighted (Forest)} & \multicolumn{4}{c|}{CQR (Forest)} \\
 & Width & Cov. & ISL & Width & Cov. & ISL & P.B. & Width & Cov. & ISL & P.B. & Width & Cov. & ISL & P.B. \\
\hline
Data 1 & 4.13 & 0.9 & 5.84 & 3.64 & 0.9 & \textbf{4.34} & 0.62 & 3.77 & 0.9 & 4.56 & 0.62 & \textbf{3.13} & 0.86 & 4.38 & \textbf{0.7} \\
Data 2 & \textbf{2.53} & 0.89 & 5.27 & 2.83 & 0.89 & \textbf{3.69} & \textbf{0.76} & 2.9 & 0.88 & 4.44 & 0.74 & 3.72 & 0.89 & 4.88 & \textbf{0.76} \\
bike & 1.23 & 0.91 & 1.81 & 1.19 & 0.94 & \textbf{1.43} & 0.54 & \textbf{1.08} & 0.91 & 1.64 & \textbf{0.67} & 2.1 & 0.89 & 2.56 & 0.26 \\
bio & 1.52 & 0.89 & 2.14 & \textbf{1.51} & 0.9 & \textbf{2.11} & \textbf{0.5} & 1.54 & 0.9 & 2.14 & 0.46 & 2.15 & 0.89 & 2.4 & 0.0 \\
community & 1.86 & 0.89 & 2.88 & 1.69 & 0.92 & \textbf{2.41} & 0.63 & \textbf{1.59} & 0.9 & 2.43 & \textbf{0.64} & 2.0 & 0.89 & 2.5 & 0.51 \\
concrete & 0.62 & 0.9 & 0.92 & 0.63 & 0.92 & \textbf{0.81} & 0.56 & \textbf{0.56} & 0.87 & 0.89 & \textbf{0.71} & 1.17 & 0.87 & 1.43 & 0.0 \\
homes & 0.62 & 0.91 & 1.15 & 0.66 & 0.93 & \textbf{0.94} & 0.64 & \textbf{0.56} & 0.91 & 1.02 & \textbf{0.68} & 1.19 & 0.88 & 1.93 & 0.26 \\
star & \textbf{0.17} & 0.89 & \textbf{0.21} & \textbf{0.17} & 0.9 & \textbf{0.21} & 0.36 & \textbf{0.17} & 0.89 & \textbf{0.21} & \textbf{0.51} & 0.19 & 0.91 & 0.22 & 0.18 \\
\hline
\end{tabular}
}
\caption{Empirical results for forest-based methods on real and synthetic data as the average over five trials on width, coverage, ISL, and proportion better (P.B.) for $\alpha=0.1$. Calibration sets all have the size $n=500$.} \label{table:all-tuned-results-forest-small-tree}
\end{table}

\paragraph{Comparison with similar methods}
We compare our approaches with other methods such as \citet{izbicki2022cd}, \citet{cabezas2025regression}, \citet{martinez2024identifying}. These methods are similar to our approach as they also find data-driven groups on additional datasets for group-conditional conformal prediction. We denote by LOCART and LOFOREST the methods in \citet{cabezas2025regression}, CD-split in \citet{izbicki2022cd}, and TrustyAI Multi Region in \citet{martinez2024identifying}. We use the implementations publically provided by the authors \citet{izbicki2022cd}, \citet{cabezas2025regression}, and \citet{martinez2024identifying}. Note that these methods include their own tuning mechanism/hyperparameter optimization to control the tree structure. For example, LOCART and LOFOREST use cross-validation to tune the complexity parameter used for minimal cost-complexity pruning to minimize the mean-square error in predicting conformity scores. The multi-region method similarly optimizes tree hyperparameters (pruning complexity parameter, minimum number of samples per leaf, etc) via K-fold cross validation to optimize the coverage ratio. The results are in Table \ref{table:all-tuned-results_competitor}, which shows that our approach outperforms despite the conceptual similarity. We think that our improvement is attributed to the robust tree, which frees us from any need to split the calibration dataset to find the group information.

\begin{table}[!t]
\centering
\resizebox{\columnwidth}{!}{
\begin{tabular}{|l|ccc|cccc|cccc|cccc|cccc|cccc|cccc|}
\hline
 & \multicolumn{3}{c|}{Split Conformal} & \multicolumn{4}{c|}{Conformal Tree} & \multicolumn{4}{c|}{Conformal Forest} & \multicolumn{4}{c|}{LOCART} & \multicolumn{4}{c|}{LOFOREST} & \multicolumn{4}{c|}{CD-split} & \multicolumn{4}{c|}{TrustyAI Multi Region} \\
 & Width & Cov. & ISL & Width & Cov. & ISL & P.B. & Width & Cov. & ISL & P.B. & Width & Cov. & ISL & P.B. & Width & Cov. & ISL & P.B. & Width & Cov. & ISL & P.B. & Width & Cov. & ISL & P.B. \\
\hline
Data 1 & 4.16 & 0.88 & 6.12 & \textbf{3.27} & 0.87 & 4.26 & 0.6 & 3.52 & 0.89 & \textbf{4.11} & 0.66 & 3.99 & 0.88 & 6.19 & 0.8 & 3.99 & 0.88 & 6.19 & 0.8 & 4.12 & 0.82 & 11.1 & 0.69 & 3.78 & 0.85 & 6.15 & \textbf{1.0} \\
Data 2 & 2.85 & 0.9 & 7.02 & 3.02 & 0.86 & 5.52 & 0.68 & 3.87 & 0.91 & \textbf{4.55} & 0.75 & \textbf{2.64} & 0.88 & 7.06 & \textbf{0.8} & \textbf{2.64} & 0.88 & 7.06 & \textbf{0.8} & 4.5 & 0.89 & 12.36 & 0.47 & 2.72 & 0.88 & 7.08 & 0.6 \\
bike & 1.3 & 0.9 & 1.98 & 1.34 & 0.88 & 2.02 & 0.57 & \textbf{1.11} & 0.92 & \textbf{1.31} & \textbf{0.76} & 1.16 & 0.9 & 1.48 & 0.67 & 1.18 & 0.91 & 1.49 & 0.64 & 2.25 & 0.86 & 4.33 & 0.08 & 1.13 & 0.88 & 1.58 & 0.64 \\
bio & 1.66 & 0.9 & 2.21 & 1.67 & 0.9 & 2.25 & 0.73 & \textbf{1.51} & 0.92 & \textbf{1.93} & 0.56 & 1.56 & 0.9 & 2.1 & 0.52 & 1.57 & 0.91 & 2.07 & 0.53 & 4.07 & 0.61 & 0.78 & \textbf{0.78} & 27.61 & 1.0 & 27.61 & 0.0 \\
community & 1.95 & 0.91 & 2.81 & 1.66 & 0.87 & 2.6 & 0.48 & \textbf{1.65} & 0.9 & \textbf{2.32} & \textbf{0.67} & 1.74 & 0.91 & 2.45 & \textbf{0.67} & 1.76 & 0.93 & 2.43 & 0.6 & 2.08 & 0.85 & 4.85 & 0.47 & 1.99 & 0.94 & 2.5 & 0.56 \\
concrete & \textbf{0.61} & 0.91 & 0.84 & 0.71 & 0.89 & 0.92 & \textbf{0.7} & 0.77 & 0.93 & 0.86 & 0.54 & 0.62 & 0.92 & \textbf{0.82} & 0.38 & \textbf{0.61} & 0.91 & 0.84 & 0.4 & 1.9 & 0.95 & 2.18 & 0.0 & 1.28 & 0.99 & 1.31 & 0.0 \\
homes & \textbf{0.61} & 0.9 & 1.21 & 0.63 & 0.89 & 1.03 & 0.57 & 0.62 & 0.92 & \textbf{0.82} & \textbf{0.72} & 0.64 & 0.91 & 0.93 & 0.69 & 0.63 & 0.92 & 0.91 & 0.68 & 1.26 & 0.84 & 3.42 & 0.0 & 72.94 & 1.0 & 72.94 & 0.0 \\
star & \textbf{0.17} & 0.91 & \textbf{0.21} & \textbf{0.17} & 0.89 & 0.22 & \textbf{0.5} & 0.18 & 0.9 & \textbf{0.21} & 0.46 & \textbf{0.17} & 0.91 & \textbf{0.21} & 0.47 & \textbf{0.17} & 0.91 & \textbf{0.21} & 0.21 & \textbf{0.17} & 0.85 & 0.24 & 0.39 & 23.02 & 1.0 & 23.02 & 0.0 \\
\hline
\end{tabular}
}
\caption{Empirical results on real and synthetic data as the average over five trials on width, coverage, and proportion better (P.B.) for $\alpha=0.1$.} \label{table:all-tuned-results_competitor}
\end{table}

\paragraph{Comparison of prediction set sizes for skin disease prediction}
For the skin disease application in Section \ref{sec:skin_ex}, we also draw the frequency of the prediction set size in Figure \ref{fig:combined_derm_additional}. We can see that the frequency of a single element set is the highest for conformal tree and more concentrated near smaller set sizes, whereas most of predictions of Naive UQ predominate size 5.   

\begin{figure}[!h]
    \centering
    \includegraphics[width=0.5\linewidth]{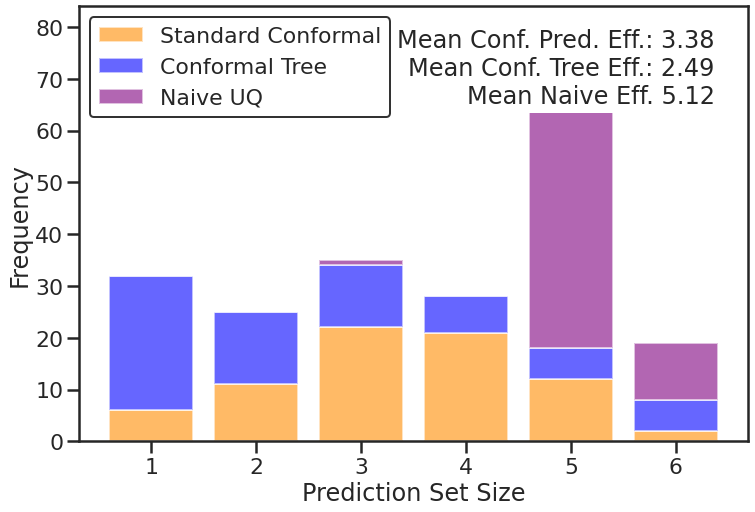}
    \caption{Comparison of prediction set sizes for erythemato squamous disease using GPT-4o. Conformal Tree yields prediction sets that are smaller than standard conformal prediction, containing 1.33 less items on average. }
    \label{fig:combined_derm_additional}
\end{figure}

\paragraph{Interpretable box-wise visualization}
For the smaller tree $(m=50)$ in Section \ref{sec:skin_ex}, we visualize its empirical test coverage, threshold, and prediction confidence in Figure \ref{fig:boxwise_vis}. Note that on Box 3, the model prediction quality is high (Figure \ref{fig:boxwise_vis} (i)). On this box, the threshold values (Orange line in Figure \ref{fig:boxwise_vis} (f)) are relatively very small compared to those of split conformal prediction.

\begin{figure}
    \centering
    \includegraphics[width=\linewidth]{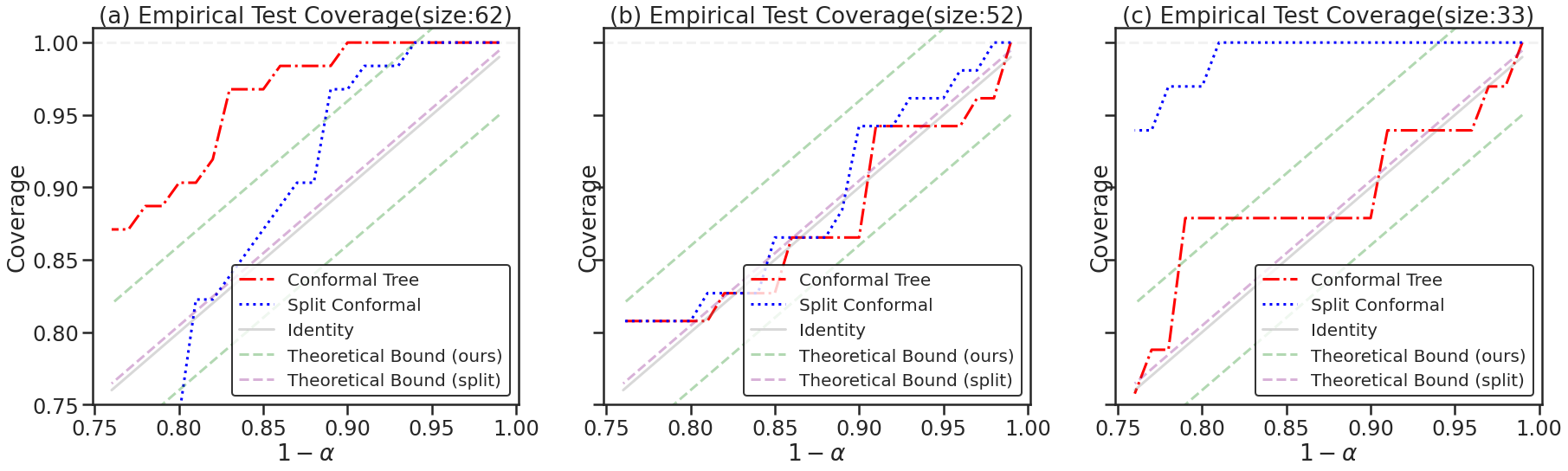}
    \includegraphics[width=\linewidth]{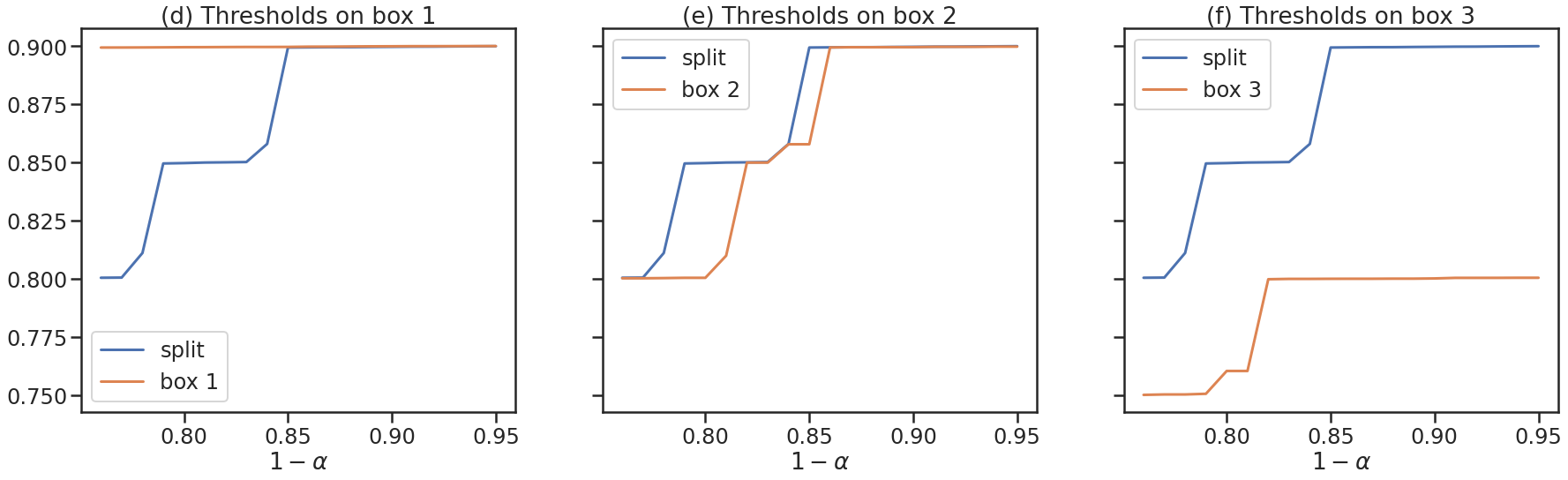}
    \includegraphics[width=\linewidth]{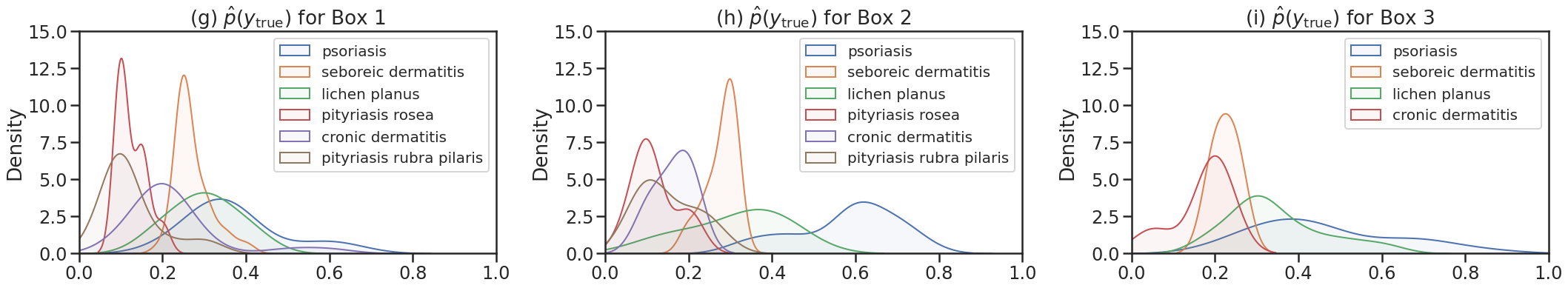}
    \caption{Empirical test coverage, threshold, and prediction confidence on each box were obtained by applying Conformal Tree on one realization of the calibration dataset ($60\%$ of the data). The behavior of Conformal Tree in this example is very interpretable: In Box 3, our method has smaller thresholds, improving the coverage rate significantly. In Box 1, the behavior is the opposite. The empirical coverage estimates in (a) are crude due to the lack of samples (see, Figure \ref{fig:emp-coverage-test}). For the splitting process of this tree, see Figure \ref{fig:derm-tree-structure2}.}
    \label{fig:boxwise_vis}
\end{figure}

\paragraph{Empirical test coverage}
In Figure \ref{fig:emp-coverage-test}, we plot the empirical test coverage of split conformal and Conformal Tree on the GPT diagnosis example in Section \ref{sec:skin_ex}. For a single split (Figure \ref{fig:emp-coverage-test} Left), the empirical estimation of the coverages even for the split conformal prediction has a high variation. We think it was because of the number of test data points is small. To alleviate it, we repeat random splits and apply these conformal methods and take average over the estimated coverage for each $\alpha$ value. In Figure \ref{fig:emp-coverage-test} Right shows, as the number of random splits increases, the coverage aligns more with the identity line. Note that in any ways, Conformal tree always empirically covers at the desired theoretical coverage level.

\begin{figure}[h!]
    \centering
    \includegraphics[width=0.4\linewidth]{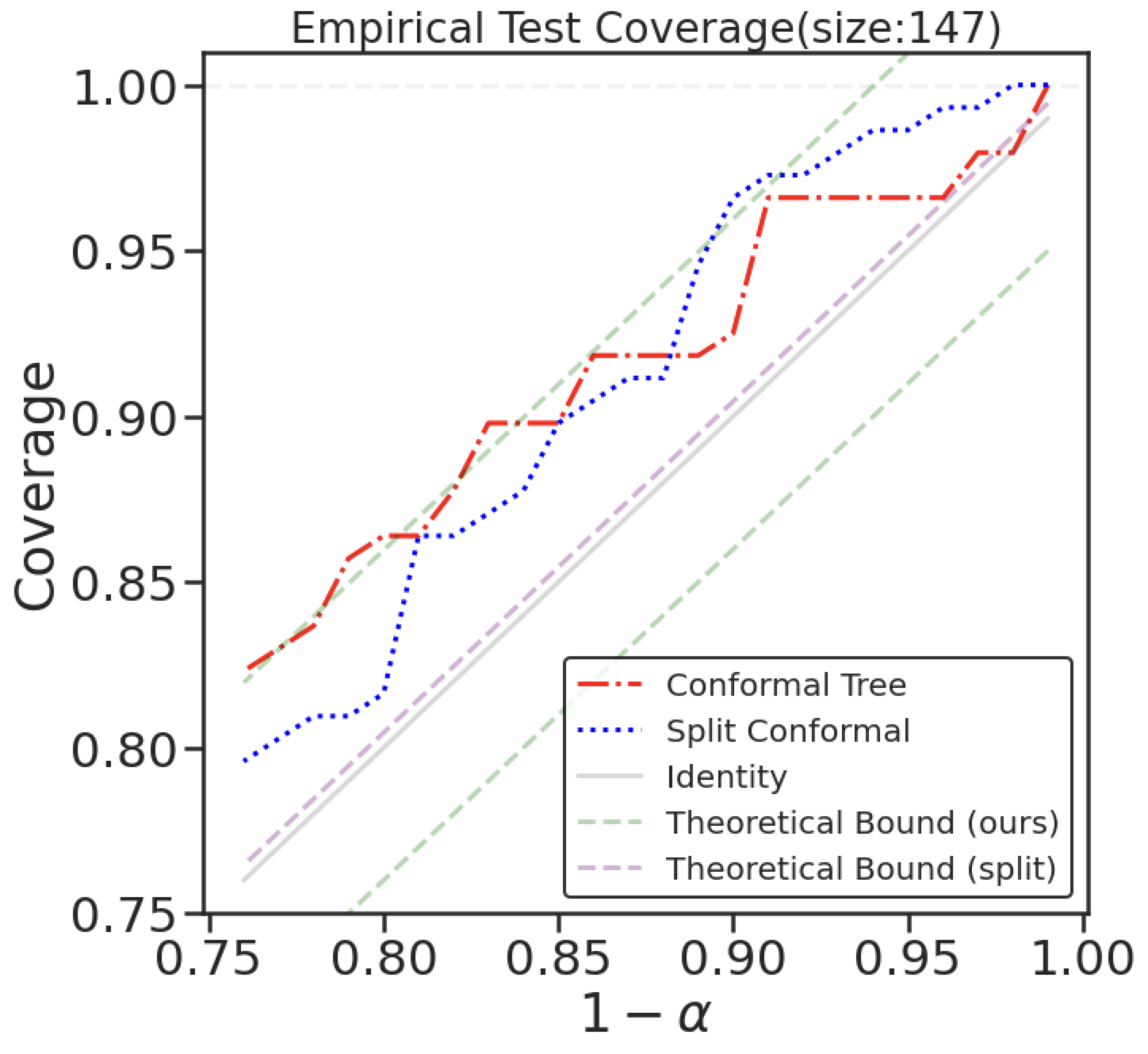}
    \includegraphics[width=0.4\linewidth]{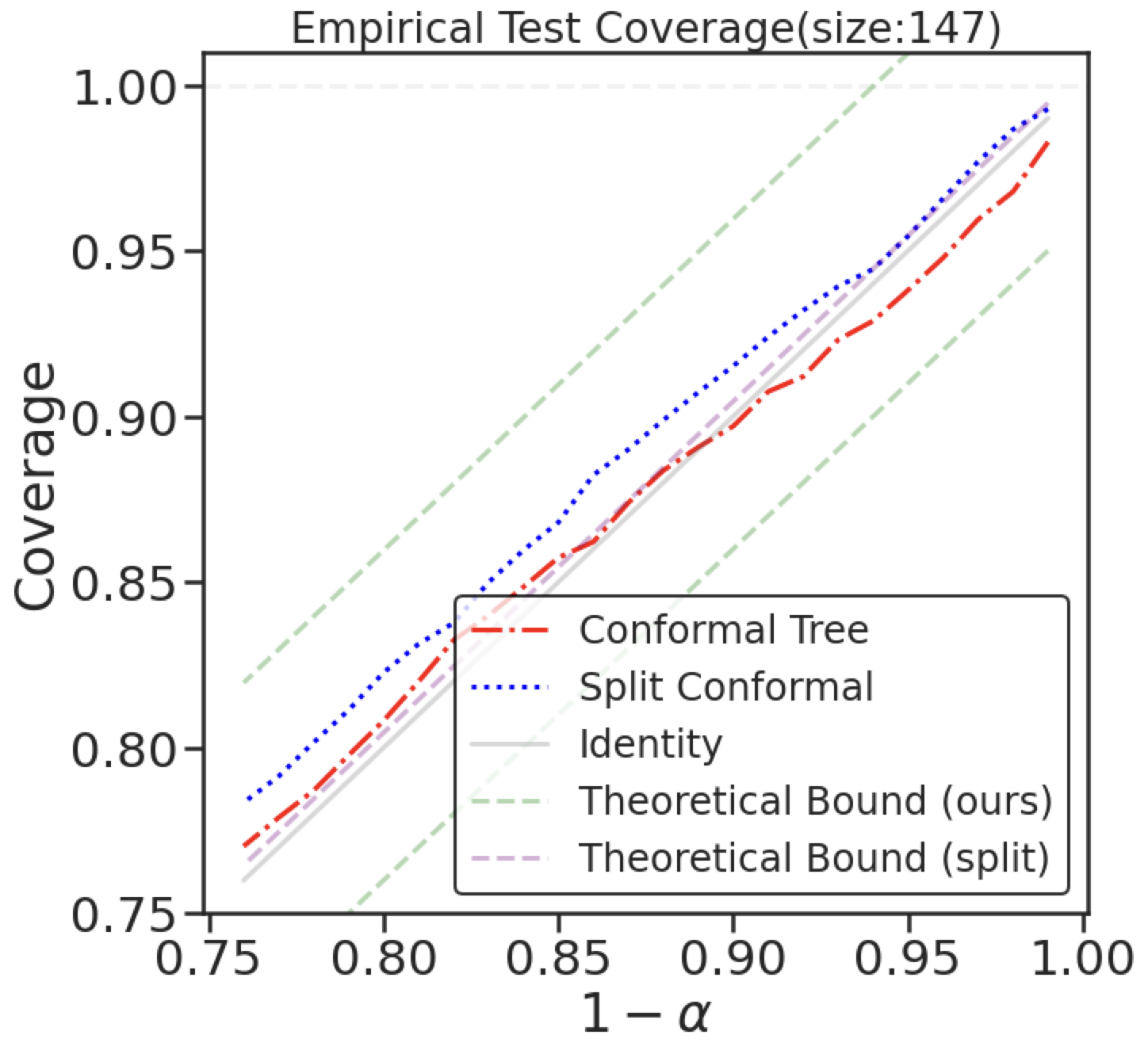}
    \caption{Empirical test coverage for split-conformal and Conformal Tree for varying $1-\alpha$ values on the GPT diagnosis example. Left: The empirical coverage rate for a single split. Right: The empirical (average) coverage rate for 40 random splits.}
    \label{fig:emp-coverage-test}
\end{figure}

\paragraph{Conformal Tree's tree structure}

In Figure \ref{fig:derm-tree-structure}, we additionally provide an interpretability visualization of the tree that corresponds to \ref{fig:combined_derm}. It shows each level of Conformal Tree's structure on the skin disease calibration data similar to Figure \ref{fig:derm-tree-structure2}.

\begin{figure}[!h]
    \centering
    \includegraphics[width=\linewidth]{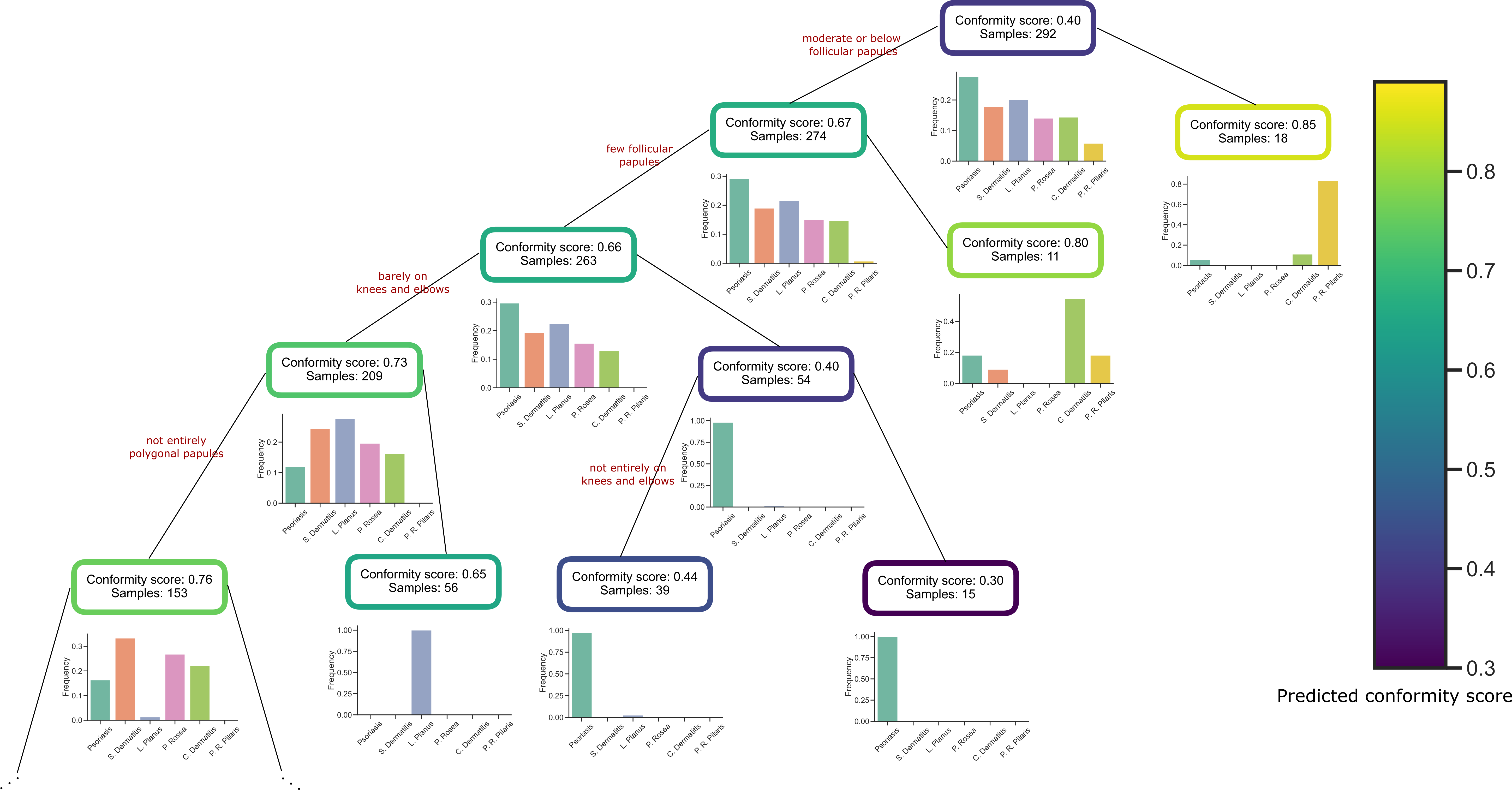}
    \caption{First five levels of Conformal Tree's structure on the skin disease calibration data, with $m=10$ and $K_{max}=20$. Node colors correspond to the mid-range of conformity scores within the node. Below each node, we plot the empirical distribution of calibration data corresponding to each disease within that node. 
    }
    \label{fig:derm-tree-structure}
\end{figure}

\end{appendices}

\end{document}